\newlength\myindent
\newcommand{\poly}{\operatorname{\text{{\rm poly}}}}
\newcommand{\expect}{\ensuremath{\mathbb{E}}}
\newcommand{\pr}{\ensuremath{\mathbb{P}}}
\renewcommand{\paragraph}[1]{\vspace{0.15cm}\noindent {\bf #1}}
\definecolor{darkgreen}{rgb}{0,0.5,0}
\definecolor{darkblue}{rgb}{0,0,0.8}
\newtheorem{theorem}{Theorem}[section]
\newtheorem{lemma}[theorem]{Lemma}
\newtheorem{observation}[theorem]{Observation}
\newtheorem*{observation*}{Observation}
\newtheorem*{definition*}{Definition}
\newtheorem{remark}{Remark}
\crefname{theorem}{Theorem}{Theorems}
\Crefname{lemma}{Lemma}{Lemmas}
\Crefname{claim}{Claim}{Claims}
\Crefname{corollary}{Corollary}{Corollaries}
\Crefname{remark}{Remark}{Remarks}
\Crefname{observation}{Observation}{Observations}
\newcommand{\bigO}{O}
\newcommand{\eps}{\varepsilon}
\newcommand{\sh}{\ensuremath{\textrm{SH}_t}}
\newcommand{\orc}{\ensuremath{\mathscr{O}}}
\newcommand{\orci}{\orc_{i}}
\newcommand{\Gtt}[2]{\ensuremath{H_{#1, #2}}}
\newcommand{\Gt}{\ensuremath{\Gtt{t}{t'}}}
\newcommand{\Gi}{\ensuremath{H_{i}}}
\newcommand{\Gs}{\ensuremath{H_{s}}}
\title{Sparsifying Distributed Algorithms \\ with Ramifications in Massively Parallel Computation \\ and Centralized Local Computation}
\author{
 Mohsen Ghaffari\\
  \small ETH Zurich \\
  \small ghaffari@inf.ethz.ch
\and
Jara Uitto \\
\small ETH Zurich \& U. of Freiburg \\
\small jara.uitto@inf.ethz.ch
 }
\date{}
\begin{document}

\maketitle

\begin{abstract}
We introduce a method for ``sparsifying" distributed algorithms and exhibit how it leads to improvements that go past known barriers in two algorithmic settings of large-scale graph processing: Massively Parallel Computation (MPC), and Local Computation Algorithms (LCA).

\medskip
\begin{itemize}
\item\textbf{MPC with Strongly Sublinear Memory:} Recently, there has been growing interest in obtaining MPC algorithms that are faster than their classic $O(\log n)$-round parallel (PRAM) counterparts for problems such as Maximal Independent Set (MIS), Maximal Matching, $2$-Approximation of Minimum Vertex Cover, and $(1+\eps)$-Approximation of Maximum Matching. Currently, all such MPC algorithms require memory of $\tilde{\Omega}(n)$ per machine: Czumaj et al. [STOC'18] were the first to handle $\tilde{\Omega}(n)$ memory, running in $O((\log\log n)^2)$ rounds, who improved on the $n^{1+\Omega(1)}$ memory requirement of the $O(1)$-round algorithm of Lattanzi et al [SPAA'11]. We obtain $\tilde{O}(\sqrt{\log \Delta})$-round MPC algorithms for all these four problems that work even when each machine has strongly sublinear memory, e.g., $n^{\alpha}$ for any constant $\alpha\in (0, 1)$. Here, $\Delta$ denotes the maximum degree. These are the first sublogarithmic-time MPC algorithms for (the general case of) these problems that break the linear memory barrier.

\medskip

\item\textbf{LCAs with Query Complexity Below the Parnas-Ron Paradigm:} Currently, the best known LCA for MIS has query complexity $\Delta^{\bigO(\log \Delta)} \poly(\log n)$, by Ghaffari [SODA'16], which improved over the $\Delta^{\bigO(\log^2 \Delta)} \poly(\log n)$ bound of Levi et al. [Algorithmica'17]. As pointed out by Rubinfeld, obtaining a query complexity of $\poly(\Delta\log n)$ remains a central open question. Ghaffari's bound almost reaches a $\Delta^{\Omega\left(\frac{\log \Delta}{\log\log \Delta}\right)}$ barrier common to all known MIS LCAs, which simulate a distributed algorithm by learning the full local topology, \`{a} la Parnas-Ron [TCS'07]. There is a barrier because the distributed complexity of MIS has a lower bound of $\Omega\left(\frac{\log \Delta}{\log\log \Delta}\right)$, by results of Kuhn, et al. [JACM'16], which means this methodology cannot go below query complexity $\Delta^{\Omega\left(\frac{\log \Delta}{\log\log \Delta}\right)}$. We break this barrier and obtain an LCA for MIS that has a query complexity $\Delta^{\bigO(\log\log \Delta)} \poly(\log n)$.
\end{itemize}
 \end{abstract}

\setcounter{page}{0}
\thispagestyle{empty}
\newpage

\section{Introduction and Related Work}
We introduce a notion of \emph{locality volume} for local distributed algorithms and we show that, by devising local graph algorithms that have a small locality volume (we refer to these as \emph{sparse} algorithms), we can obtain significant improvements in two modern computational settings: Massively Parallel Computation (MPC) and Local Computation Algorithms (LCA). Both of these settings, which are receiving increasingly more attention, are primarily motivated by the need for processing large-scale graphs. We hope that the study of sparse local algorithms and the methodology set forth here may also find applications in a wider range of computational settings, especially for large-scale problems, where ``local approaches" provide a natural algorithmic line of attack.   

\paragraph{The $\mathsf{LOCAL}$ model and the locality radius:} Distributed graph algorithms have been studied extensively since the 1980s. The standard model here is Linial's $\mathsf{LOCAL}$ model\cite{linial1987LOCAL}: the communication network of the distributed system is abstracted as an $n$-node graph $G=(V, E)$, with one processor on each node, which initially knows only its own neighbors. Processors communicate in synchronous message passing rounds where per round each processor can send one message to each of its neighbors. The processors want to solve a graph problem about their network $G$ --- e.g., compute a coloring of it --- and at the end, each processor/node should know its own part of the output, e.g., its color. 

The focus in the study of $\mathsf{LOCAL}$ model has been on characterizing the round complexity of graph problems. This not only captures the time needed by a distributed system to solve the given graph problem, but also characterizes the \emph{locality radius} of the problem, in a mathematical sense: whenever there is an algorithm with round complexity $T$, the output of each node $v$ is a function of the information residing in nodes within distance $T$ of $v$, and particularly the topology induced by the $T$-hop neighborhood of $v$. Thus, in this sense, the problem has locality radius at most $T$.      

\paragraph{The locality volume:} We initiate the study of $\mathsf{LOCAL}$ algorithms that, besides having a small locality radius, also have a small {\emph{locality volume}: in a rough sense\footnote{We note that a precise definition of the locality volume can be somewhat subtle. Instead of providing a cumbersome and detailed mathematical definition, we will explain this notion in the context of a warm up provided in \Cref{sec:warmup}. }, we want that each part of the output should depend on only a few elements of the input, i.e., nodes and edges (and the randomness used to decide about them). In particular, the output of a node $v$ should depend on a small part of the topology within the $T$-hop neighborhood of $v$, instead of all of it. This opens the road for us to devise improved algorithms in MPC and LCA. On a high level, this locality volume will correspond to the memory requirement in the MPC setting (per node) and also to the query complexity in the LCA model. To make this point concrete, we next discuss each of these two settings separately and state our results.

\subsection{Massively Parallel Computation (MPC)}
Massively Parallel Computation (MPC) is a theoretical abstraction which is intended to model recent large-scale parallel processing settings such as MapReduce\cite{dg04}, Hadoop\cite{White:2012}, Spark\cite{ZahariaCFSS10}, and Dryad\cite{Isard:2007}. This model was introduced by Karloff et al.\cite{KarloffSV10} and is receiving increasingly more attention recently\cite{KarloffSV10, goodrich2011sorting, LattanziMSV11, Beame13, Andoni:2014, Beame14, hegeman2015lessons, AhnGuha15,Roughgarden16,Im17,czumaj2017round, assadi2017simple,assadi2017coresets,ghaffari2018improved,harvey2018greedy,brandt2018breaking,assadi2018massively,boroujeni2018approximating,andoni2018parallel}.

\paragraph{The MPC model:} The MPC model consists of a number of machines, each with $S$ bits of memory, who can communicate with each other in synchronous rounds on a complete communication network. Per round, each machine can send $O(S)$ bits to the other machines in total, and it can also perform some local computation, ideally at most $\poly(S)$. For graph problems, the number of machines is assumed to be $\tilde{O}(m/S)$, where $m$ denotes the number of edges in the graph, so that the graph fits the overall memory of the machines. The main objective is to obtain MPC algorithms that have a small \emph{round complexity} as well as a small \emph{memory} per machine.

\paragraph{State of the Art:} In this paper, our focus will be on some fundamental graph problems such as maximal independent set, maximal matching, $(1+\eps)$-approximation of maximum matching, and $2$-approximation of minimum vertex cover. For all of these problems, classic parallel or distributed algorithms imply $O(\log n)$ round MPC algorithms without any serious memory requirement (as long as each node's edges can fit in one machine)\cite{II86, luby1986simple, alon1986fast, lotker2015improved}. Given the power of the MPC model and also the pressing need for fast processing of large-scale graphs, the objective in MPC is to obtain algorithms that are considerably faster than their classic parallel counterparts --- i.e., strongly sublogarithmic time for the above four problems --- using a small memory per machine.   

\paragraph{The Linear Memory Barrier:} The memory requirement for the above four problems has improved over time. Currently, sublogarithmic-time algorithms are known only when the memory $S$ per machine is at least $\tilde{\Theta}(n)$. In fact, this itself became possible only recently, due to a breakthrough of Czumaj et al.\cite{czumaj2017round}: they presented an MPC algorithm with $S=\Theta(n)$ and round complexity $O((\log\log )^2)$ for $(1+\eps)$-approximation of maximum matching. Two independent follow up work provided some improvements: Assadi et al. \cite{assadi2017coresets} obtained an $O(\log\log n)$ round algorithm for $1+\eps$ approximation of maximum matching and $O(1)$-approximation of minimum vertex cover; and Ghaffari et al.\cite{ghaffari2018improved} obtained $O(\log\log n)$ round algorithms for maximal independent set, $(1+\eps)$-approximation of maximum matching, and $(2+\eps)$-approximation of minimum vertex cover. Before this burst of developments for the setting where $S=\tilde{\Theta}(n)$, the best known algorithms were those of Lattanzi et al.\cite{LattanziMSV11} which require memory $S=n^{1+\Omega(1)}$ and have round complexity $O(1)$.

However, all currently known techniques in MPC algorithms for the above four problems lose their efficacy once the memory per machine becomes (strongly) sublinear, e.g., $S=n^{\alpha}$ for a constant $\alpha \in (0, 1)$. In particular, as soon as the memory per machine goes below, say $n^{0.99}$, the best known round complexity for general graphs\footnote{We are aware of one exception for special graphs: For trees, a recent work of Brandt et al.~\cite{brandt2018breaking} obtains an $O((\log\log n)^3)$-round MIS algorithm in the MPC model with memory $n^{\alpha}$ per machine for any constant $\alpha \in (0, 1)$. More recently, they~\cite{brandt2018Arb} generalized this to any graph of arboricity $\poly(\log n)$ and improved the round complexity to $O((\log\log n)^2)$ and this extension also works for maximal matching.} goes back to the $O(\log n)$-round solutions that follow from the classic distributed/parallel algorithms. This is rather unfortunate because this regime of memory--- e.g., $S\leq n^{0.99}$---is especially of interest, as the graph sizes are becoming larger and larger.

\medskip
\paragraph{Our Result in MPC:} By devising sparse $\mathsf{LOCAL}$ algorithms (which have small locality volume), we obtain MPC algorithms that break this barrier. In particular, these algorithm use a strongly sublinear memory per machine and still run considerably faster than $O(\log n)$:

\begin{theorem} There are MPC algorithms, with memory per machine of $S=n^{\alpha}$ for any constant $\alpha\in (0, 1)$, that, with probability at least $1 - 1/n^{10}$, solve the following four problems in $O(\sqrt{\log \Delta} \cdot \log\log \Delta + \sqrt{\log\log n}) = \tilde{O}(\sqrt{\log \Delta})$ rounds in any $n$-node graph of maximum degree at most $\Delta$: Maximal Independent Set, Maximal Matching, $(1+\eps)$-Approximation of Maximum Matching for any constant $\eps>0$, and $2$-Approximation of Minimum Vertex Cover. 
\label{thm:mainMPC}
\end{theorem}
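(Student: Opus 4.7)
My plan is to establish the theorem by first reducing the other three problems to Maximal Independent Set and then designing a \emph{sparse} $\mathsf{LOCAL}$ MIS algorithm that can be efficiently simulated in the MPC model with strongly sublinear memory. The standard black-box reductions produce maximal matching via MIS on the line graph, a $2$-approximate vertex cover from a maximal matching, and a $(1+\eps)$-approximate maximum matching through Lotker--Patt-Shamir--Pettie-style augmenting path peeling; each of these increases degrees and vertex counts by only polynomial factors, so it suffices to handle MIS in time $\tilde{O}(\sqrt{\log \Delta})$ with memory $n^{\alpha}$ per machine.

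For MIS itself, I would take as the starting point the $O(\sqrt{\log \Delta}\cdot \log\log \Delta + \log\log n)$-round $\mathsf{LOCAL}$ MIS algorithm of Ghaffari (plus its residual low-degree post-processing), but augment it with two sparsification ingredients. First, I would partition the execution into $O(\sqrt{\log \Delta}\cdot \log\log\Delta)$ phases and show, via a careful analysis of the ``golden'' step lemma, that in each phase the set of nodes whose state at the end of the phase can possibly influence a given node $v$'s eventual decision lies in a small random subgraph of its radius-$O(1)$ neighborhood around $v$. More precisely, I would aim to prove that the \emph{locality volume} of the algorithm is $\poly(\log n)$ with high probability --- each output depends only on the randomness and topology of at most $\poly(\log n)$ nodes --- even though the locality radius is $\tilde\Theta(\sqrt{\log \Delta})$. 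To simulate this in MPC, I would use the graph exponentiation paradigm but have each node collect only this poly-logarithmic sparse skeleton rather than the entire $\Delta^{\sqrt{\log \Delta}}$-sized ball; the number of doubling rounds needed to gather all relevant randomness is $O(\log(\sqrt{\log \Delta}\log\log\Delta))=O(\log\log\Delta)$ per phase, giving the $O(\sqrt{\log \Delta}\cdot\log\log\Delta)$ top-order term. Routing these small sparse neighborhoods uses the standard sorting-based primitives on MPC with $n^{\alpha}$ memory and does not blow up per-machine load because the union of skeletons has total size $\tilde{O}(n)$, which comfortably distributes across the $\tilde{O}(m/n^{\alpha})$ machines.

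After these phases, the residual graph is guaranteed (by the standard shattering lemma for the Ghaffari-style MIS algorithm) to consist with high probability of connected components of size $\poly(\log n)$. To resolve these components in the promised $O(\sqrt{\log\log n})$ additional MPC rounds, I would apply the component-collection trick: repeatedly double the radius of known neighborhoods and then place each connected component on a single machine once its size fits into memory. Because the components have $\poly(\log n)$ vertices, doubling $O(\log\log n)$ times would naively suffice, but by combining this with the sparsification (so each doubling need only carry the sparse skeleton, not the full topology) and with the sorting-based component identification of Andoni et al., the round count drops to $O(\sqrt{\log\log n})$; each machine then solves MIS on its component locally, which is permitted by the $\poly(S)$ local-computation budget.

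The main obstacle will be proving the sparsity claim in Step 2: formally defining the ``skeleton'' of dependencies for Ghaffari's randomized MIS algorithm, and showing that with high probability the set of input nodes that the output of a given $v$ can depend on, over all phases, stays bounded by $\poly(\log n)$ rather than exploding to $\Delta^{\sqrt{\log \Delta}}$. This seems to require a delicate induction in which I bound, for each putative ``relevant'' neighbor at depth $i$, the probability that it is not deactivated in an earlier phase, and then taking a union bound over the $\Delta^{O(\sqrt{\log \Delta})}$ candidate ancestors in the dependency tree; ensuring that these probabilities telescope to an $n^{-\Omega(1)}$ failure bound is where I expect the most careful work, and is precisely the warm-up calculation the authors defer to \Cref{sec:warmup}.
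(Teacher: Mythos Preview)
Your reductions and your high-level plan of simulating a sparsified $\mathsf{LOCAL}$ MIS algorithm via graph exponentiation match the paper. But the heart of your argument --- the sparsification itself --- is off in a way that would make the proof fail.

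You claim that the locality volume of (a phased version of) Ghaffari's algorithm is $\poly(\log n)$ with high probability, and you propose to prove this by bounding, for each of the $\Delta^{O(\sqrt{\log\Delta})}$ candidate ancestors in the dependency tree, the probability that it survives to be relevant, and then union-bounding. This cannot work: to beat a union bound over $\Delta^{O(\sqrt{\log\Delta})}$ candidates you would need each ancestor to be deactivated with probability $1-\Delta^{-\Omega(\sqrt{\log\Delta})}$, whereas Ghaffari's golden-round analysis only gives constant removal probability per round. Indeed, a $\poly(\log n)$ locality volume would immediately yield a $\poly(\Delta\log n)$-query LCA for MIS, which is far stronger than anything known and is exactly the open problem the paper highlights (the paper only achieves volume $\Delta^{O(\sqrt{\log\Delta})}$, later refined to $\Delta^{O(\log\log\Delta)}$ for the LCA result). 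Your phase accounting is also inconsistent: $O(\sqrt{\log\Delta}\cdot\log\log\Delta)$ phases times $O(\log\log\Delta)$ MPC rounds per phase gives $O(\sqrt{\log\Delta}\,(\log\log\Delta)^2)$, not the target bound.

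The paper's sparsification is structurally different and does not attempt to prune the dependency tree via deactivation probabilities. Instead it \emph{modifies the algorithm} so that each phase of $R=\Theta(\alpha\sqrt{\log\Delta})$ iterations can be run on an explicitly constructed sparse subgraph $H$ of maximum degree $2^{O(R)}$. Two ideas make this possible: (i) the exact test $\sum_{u\in N(v)} p_t(u)\ge 2$ is replaced by a sampling-based estimator, so $v$ only needs to hear from neighbors $u$ that are sampled (with probability roughly $p_t(u)$), and an oversampled superset of these for the whole phase can be fixed in advance because $p_t(u)$ changes by at most $2^{R}$ within a phase; (ii) nodes $v$ with large $d_t(v)$ at the phase start are \emph{stalled} --- they deterministically halve $p_t(v)$ and do not attempt to join --- so their behavior is predictable and they need no neighborhood information. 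With these changes, the $R$-hop neighborhood in $H$ has size $\Delta^{O(\alpha)}\le n^{\alpha}$, fitting a machine, and graph exponentiation collects it in $O(\log R)=O(\log\log\Delta)$ rounds per phase; $O(\sqrt{\log\Delta})$ phases give the leading term. Separately, after these phases the residual components have size $O(\Delta^4\log n)$ (not $\poly(\log n)$ as you assume), and the $O(\sqrt{\log\log n})$ cleanup comes from simulating Ghaffari's deterministic $2^{O(\sqrt{\log\log n})}$-round finishing algorithm on each component via exponentiation, not from a component-collection trick.
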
 
We comment that in the case of maximal matching and $2$-approximation of minimum vertex cover, previously there was no known sublogarithmic-time algorithm even for a memory of $S=\tilde{\Theta}(n)$. 


\paragraph{The Connection to Sparse Local Distributed Algorithms.} We obtain \Cref{thm:mainMPC} by devising $\mathsf{LOCAL}$ algorithms for these problems that have both small locality radius of $O(\log \Delta)$ and also a small locality volume of $\Delta^{O(\sqrt{\log \Delta})}$, in a rough sense. There are also some smaller clean up steps, which we discard from our discussion for now. More concretely, this algorithm will be such that running every span of $R=\Theta(\alpha\sqrt{\log \Delta})$ rounds of it has locality volume at most $\Delta^{\alpha/10}$ for a desirably small constant $\alpha \in (0, 1)$. This $\Delta^{\alpha/10}$ volume fits well within the memory of one machine. In fact, if we uniformly spread the nodes among the machines, each machine has enough memory to store $\Delta^{\alpha/10}$ bits for each of the nodes that it holds (some care is needed when the graph is highly irregular). Using a simple and by now well-known graph exponentiation idea (see, e.g., \cite{Lenzen:2010, Ghaffari2017, brandt2018breaking, andoni2018parallel, assadi2018massively}), we can make each node $v$ learns this $\Delta^{\alpha/10}$ local volume that determines its behavior for the $R$ rounds, within $O(\log R)$ MPC rounds, after which it can locally emulate its behavior for $R$ rounds. Hence, once we have a sparse distributed algorithm where the locality volume fits the memory of a machine, it is easy to compress the number of rounds exponentially. In particular, we can ``compress" each phase of $R=\Theta(\sqrt{\log \Delta})$ rounds of the $\mathsf{LOCAL}$ algorithm into $O(\log R)=O(\log\log \Delta)$ rounds of MPC. Hence, by doing this for different phases, one after the other, we get an $\tilde{O}(\sqrt{\log \Delta})$ round MPC algorithm.

As a side remark, we note that some of the ideas that we use for our sparse local algorithm are similar to those that were used before in \cite{Ghaffari2017}. A particular idea that we borrow from \cite{Ghaffari2017} allows us to effectively \emph{stall} nodes in ``dense" neighborhoods in the MIS algorithm of \cite{Ghaffari2016}, without losing its guarantees. See \Cref{subsec:sparsifying}. The work of \cite{Ghaffari2017} obtains a faster MIS algorithm in $\mathsf{CONGESTED}$-$\mathsf{CLIQUE}$ model of distributed computing, in which the $n$ nodes of the network can communicate with each other in an all-to-all manner, where per round each two nodes can exchange $O(\log n)$ bits. There, the nodes have no memory constraints. Recently, the round complexity of MIS in that model was improved to $O(\log\log n)$ \cite{ghaffari2018improved}, using a very different method.

\subsection{Local Computation Algorithms (LCA)}
\emph{Local Computation Algorithms} (LCA) is a recent theoretical model that was introduced by Rubinfeld et al.\cite{rubinfeld2011fast} and Alon et al.\cite{alon2012LCA}, also motivated by the necessity to process massive graphs. For general introductions, we refer the reader to a comprehensive and recent survery of Levi and Medina\cite{levi2017centralized}. The LCA model is known to be closely related to many other computational models, cf. Levi et al.~\cite[Section 1]{levi2017LCA}, and is stipulated to be useful also in settings such as cloud computing. The high-level goal in this model is to be able to determine each single part of the output in a graph problem in considerably sublinear time, by reading only a few places in the graph. 

Concretely, an LCA has query access to a graph $G=(V, E)$ where each query can ask either for the degree of a node $v\in V$ or for the identifier of the $i$-th neighbor of a node $v$. 
In this work, we assume that a query to a node returns the identifiers of all its neighbors and point out that this only adds a $\Delta$ factor to the query complexity.
It also has access to a string of $n \poly \log n$ random bits\footnote{The number of bits can reduced to $\poly \log n$ using techniques from~\cite{alon2012LCA, levi2017LCA}; we defer the details to the full version of this paper.}. An LCA should be able to determine each single part of the output. For instance, in the Maximal Independent Set (MIS) problem, when asked about a node $v\in V$, the LCA should determine whether $v$ is in the MIS or not, using a small number of queries. All the answers of the algorithm for different vertices $v$ should be consistent with one MIS. 

A central problem in the study of LCAs is that of computing an MIS. This centrality is in part due to fact that many other local problems can be solved using MIS algorithms. This includes maximal matching, $2$-approximation of minimum vertex cover, $(\Delta+1)$-vertex-coloring of graphs of max degree at most $\Delta$, $(2\Delta-1)$-edge coloring, and $(1+\eps)$-approximation of maximum matching \cite{luby1986simple, lotker2015improved, even2014deterministic}.

\paragraph{State of the Art on LCAs for MIS:} Much of the known MIS LCAs are efficient only for graphs of small degrees. In general, the query complexity of known algorithms is a function of two parameters, the maximum degree $\Delta$ and the number of nodes in the graph $n$. Rubinfeld et al.\cite{rubinfeld2011fast} and Alon et al.\cite{alon2012LCA} presented algorithms with query complexity $2^{\bigO(\Delta \log^2 \Delta)}  \log n$. Reingold and Vardi~\cite{reingold2016new} gave an algorithm with query complexity $2^{\bigO(\Delta)} \log n \cdot \log\log n$. Even et al.\cite{even2014deterministic} significantly improved the dependency on $n$ at the cost of increasing the $\Delta$-dependency; concretely they provide  a deterministic LCA with query complexity $2^{\bigO(\Delta^2 \log^2 \Delta)} \log^* n$. 

All the above algorithms have an exponential (or higher) dependency on $\Delta$. Thus, these algorithms lose efficacy in graphs with moderately super-constant degrees, e.g., even for $\Delta=\Omega(\log n)$. There are two known LCAs whose complexity has a better dependency on $\Delta$. Levi et al.\cite{levi2017LCA} gave the first such algorithm with query complexity $2^{\bigO(\log^{3} \Delta)} \log^3 n$ and consequently, Ghaffari\cite{Ghaffari2016} gave an algorithm with query complexity $2^{\bigO(\log^{2} \Delta)} \log^3 n$. A natural question which remains open\footnote{This question was alluded to by Rubinfeld in a TCS+ talk, which can be found here: ~\textcolor{blue}{\url{https://www.youtube.com/watch?v=R8J61RYaaDw}}.} is this:
\begin{center}
``\emph{Is there an MIS LCA with query complexity $poly(\Delta\log n)$?}" 
\end{center}
 
\paragraph{A Natural Query-Complexity Barrier?} The $\Delta^{\bigO(\log \Delta)} \poly(\log n)$ complexity of Ghaffari's algorithm\cite{Ghaffari2016} comes close to a natural barrier for known techniques. The current MIS LCAs, including those of \cite{rubinfeld2011fast, alon2012LCA, levi2017LCA, Ghaffari2016}, are all implicitly or explicitly based on transforming $\mathsf{LOCAL}$ distributed algorithms to LCAs. This is a connection that was first observed by Parnas and Ron\cite{parnas2007approximating}. In particular, given a $T$-round $\mathsf{LOCAL}$ algorithm, we can emulate it in the LCA model with query complexity $\Delta^{\bigO(T)}$: upon being queried on a node $v$, we read the whole subgraph within $T$-hops of $v$, which has $\Delta^{\bigO(T)}$ vertices, and then compute the output $v$ by emulating the $\mathsf{LOCAL}$ algorithm. Now, it is known by a lower bound of Kuhn et al.\cite{KuhnMW16} that any $\mathsf{LOCAL}$ algorithm for MIS needs round complexity at least $\Omega\left( \min\left\{\frac{\log \Delta}{\log\log \Delta}, \sqrt{\frac{\log n}{\log\log n}} \right\} \right)$. Hence, unless we go away from the Parnas-Ron methodology, we cannot go below query complexity $\Delta^{\Omega(\log \Delta/\log\log \Delta)}$.

\medskip
\paragraph{Our Result in LCA:} By devising a sparse $\mathsf{LOCAL}$ algorithms (one that has a small locality volume), we obtain an LCA that goes significantly below the aforementioned barrier. Concretely, we show that:

\begin{restatable}{theorem}{LCAmain}
There is an LCA that, with probability $1 - 1/n^{10}$, computes an MIS with query complexity $\Delta^{\bigO(\log\log \Delta)} \poly(\log n)$.
	\label{thm:mainLCA}
\end{restatable}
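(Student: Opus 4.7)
The plan is to leverage exactly the paradigm advertised earlier in the paper: design a \emph{sparse} $\mathsf{LOCAL}$ algorithm for MIS whose \emph{locality volume} is at most $\Delta^{O(\log \log \Delta)} \poly(\log n)$, and then show that any algorithm with locality volume $V$ can be simulated in the LCA model using $V$ queries. The latter is essentially a refinement of the Parnas--Ron simulation: rather than reading the entire $T$-hop topology around a queried vertex $v$ (which costs $\Delta^{O(T)}$ and runs squarely into the KMW lower bound of $\Omega(\log \Delta / \log\log\Delta)$), the LCA uses the shared random tape together with a recursive ``only query what matters'' rule, so that its query set is exactly the set of input elements on which $v$'s output depends.

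For the construction of the sparse $\mathsf{LOCAL}$ algorithm, I would start from Ghaffari's MIS algorithm of \cite{Ghaffari2016}, which already has radius $O(\log \Delta) + 2^{O(\sqrt{\log \log n})}$, and adapt it using the ``stalling'' technique imported from \cite{Ghaffari2017} (referenced in \Cref{subsec:sparsifying}). The role of stalling is to freeze the randomness/progress of nodes sitting in locally dense neighborhoods, so that at each round the state of a node is determined by only a sparse subset of its neighbors rather than the full $\Delta$-sized neighborhood. The key analytic step is to show that this sparsification is \emph{compatible} with Ghaffari's two-regime analysis: golden rounds still occur at the right rate, and hence after $O(\log \Delta)$ rounds the uncovered graph shrinks to $\poly(\log n)$-sized components, after which a standard $\poly(\log n)$-time cleanup (shattering + post-shattering on small components) finishes the MIS.

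The heart of the argument is bounding the locality volume of this sparse algorithm. I would analyze the dependency structure round-by-round: define, for each node $v$, the set $S_i(v)$ of nodes whose round-$i$ state $v$'s output actually depends on, and prove by induction that $|S_i(v)|$ grows by only a $\Delta^{O(1)}$ factor every $\Theta(\log \Delta / \log\log\Delta)$ rounds, rather than at every round. Combined across all $O(\log \Delta)$ rounds this yields $\Delta^{O(\log\log\Delta)}$ relevant nodes, and an additional $\poly(\log n)$ accounts for the post-shattering phase on small components, matching the claimed query bound.

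The main obstacle I anticipate is the control on volume growth in Step~2. Sparsifying while preserving correctness is subtle: each stalled node effectively delegates its decision to a set of ``active'' neighbors, and one must show that this delegation does not cause the dependency set $S_i(v)$ to blow up, even over $\Theta(\log \Delta)$ rounds of cascading decisions. A secondary subtlety is making the LCA simulation genuinely pay only the volume, not the radius: the recursive query procedure must itself decide which neighbors to query without first inspecting everything. This is possible because, at every step, the local random bits together with already-queried inputs suffice to determine the small set of next-level queries, so the recursion's total cost telescopes to exactly the locality volume.
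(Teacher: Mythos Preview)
Your high-level framework is right, and you have correctly identified Ghaffari's algorithm together with the stalling idea from \cite{Ghaffari2017} as the starting point. However, those two ingredients, combined the way you describe, give exactly the $\Delta^{O(\sqrt{\log\Delta})}$ LCA of \Cref{sec:MIS}, not the $\Delta^{O(\log\log\Delta)}$ bound of \Cref{thm:mainLCA}. The gap is in your central quantitative claim, that the dependency set $S_i(v)$ grows by a $\Delta^{O(1)}$ factor only every $\Theta(\log\Delta/\log\log\Delta)$ rounds. With a \emph{flat} phase decomposition of length $R$ (a single stalling threshold per phase), the sparse graph for a phase has degree $2^{\Theta(R)}$, and simulating the phase requires the $R$-hop neighborhood in that graph, i.e.\ $2^{\Theta(R^2)}$ volume, on top of a $\Delta$ factor for reading all neighbors once to build the sparse graph. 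Over $\Theta((\log\Delta)/R)$ phases this gives query complexity roughly $\big(\Delta\cdot 2^{\Theta(R^2)}\big)^{\Theta((\log\Delta)/R)}$, which is minimized at $R=\Theta(\sqrt{\log\Delta})$ and cannot go below $\Delta^{\Theta(\sqrt{\log\Delta})}$. In particular, taking $R=\Theta(\log\Delta/\log\log\Delta)$ makes the in-phase volume $2^{\Theta(R^2)}$ blow up far beyond $\Delta^{O(1)}$, so the growth rate you assert is not attainable with the mechanism you describe.

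The idea you are missing, and the actual content of \Cref{sec:ImprovedLCA}, is a \emph{recursive} hierarchy of phases. The top-level phase of length $T=\Theta(\log\Delta)$ is split into two subphases of length $T/2$, each with its \emph{own} sparsified graph and its own (lower) stalling threshold $2^{2\cdot T/2}$; those are split again, and so on, down to subphases of length $\Theta(\log\log\Delta)$. The oracle for a phase of length $R$ answers adjacency queries in its sparse graph $H$ (of degree $2^{O(R)}$) by recursively calling the oracles for its two length-$R/2$ subphases; this yields the recursion $Q(R)\le 2^{O(R)}\cdot Q(R/2)^2$ with base case $Q(\Theta(\log\log\Delta))=2^{O(\log^2\log\Delta)}$, which unrolls over $O(\log\log\Delta)$ levels to $\Delta^{O(\log\log\Delta)}$. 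The point is that the $2^{\Theta(R^2)}$ in-phase cost of the flat scheme is itself recursively sparsified, rather than paid in full. Your proposal has no analogue of this nested structure, and without it the analysis you sketch cannot reach the claimed bound.
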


While this still does not reach the milestone of $\poly(\Delta\log n)$ query complexity, it makes a significant step in that direction. In particular, in terms of dependency on $\Delta$, it exhibits an exponential improvement in the exponent, compared to the $\Delta^{\bigO(\log \Delta)} \poly(\log n)$-query LCA of Ghaffari\cite{Ghaffari2016}. 

\paragraph{The Connection to Sparse Local Distributed Algorithms.} We obtain \Cref{thm:mainLCA} by devising a $\mathsf{LOCAL}$ MIS algorithm that has a small locality volume of $\Delta^{O(\log \log \Delta)}\poly(\log n)$, as well as a locality radius $O(\log \Delta)$. While the lower bound of Kuhn et al.\cite{KuhnMW16} shows that the locality radius should be at least $\bigO\left(\frac{\log \Delta}{\log\log \Delta}\right)$, our algorithm shows that we do not need to depend on all of the information within this radius, and a much smaller volume suffices. When the LCA is asked whether a given node $v$ is in the MIS or not, it carefully finds its way through this maze of the $\bigO(\log \Delta)$-neighborhood and gathers the relevant $\Delta^{O(\log \log \Delta)}\poly(\log n)$ local volume, using a proportional number of queries. Then, it can emulate the $\mathsf{LOCAL}$ process and determine the output of $v$.

 

\subsection*{Roadmap} In \Cref{sec:warmup}, as a warm up, we present a sparse distributed algorithm for constant approximation of maximum matching, and we explain how, thanks to its small locality volume, it leads to improvements in MPC and LCA settings. In \Cref{sec:MIS}, we present our main sparse MIS algorithm and discuss its implications in the MPC and LCA settings. In particular, this section provides the proof of \Cref{thm:mainMPC}. In \Cref{sec:ImprovedLCA}, we explain how we improve the query complexity of the LCA presented in \Cref{sec:MIS} further, to prove \Cref{thm:mainLCA}. 

\section{Warm Up: Matching Approximation}
\label{sec:warmup}
In this section, we recall a basic distributed algorithm for constant-approximation of maximum matching, and we explain how, by sparsifying it, we can obtain improvements in Massively Parallel Computation (MPC) and centralized Local Computation Algorithms (LCA). 

Concretely, the basic distributed algorithm has a (near-optimal) round complexity of $O(\log \Delta)$. By sparsifying it, we obtain an algorithm with locality radius $O(\log \Delta)$ and locality volume $\Delta^{O(\sqrt{\log \Delta})}$, which then leads to the following results: (I) an $\tilde{O}(\sqrt{\log \Delta})$-round MPC algorithm with strongly sublinear memory per-machine, i.e., $n^\alpha$ bits for any arbitrary constant $\alpha\in (0, 1)$, and (II) an LCA algorithm with query complexity $\Delta^{O(\sqrt{\log \Delta})}$. While being warm ups, these are already considerable improvements over the state of the art: The former is the first sublogarithmic-time MPC algorithm that can handle sublinear memory. The latter LCA has a query-complexity that goes below that of the Parnas-Ron paradigm~\cite{parnas2007approximating}, i.e., collecting the full topology in the $T$-hop neighborhood where $T$ is the distributed complexity of the problem. 

Next, we start with explaining the basic distributed algorithm for approximating maximum matching and then we present the sparsified version of it. Then, we discuss how we simulate this sparsified version in the MPC and LCA settings.

\paragraph{Basic Distributed Algorithm:} The algorithm has $\log \Delta$ iterations indexed by $i\in \{0, 1, 2, \dots, \log \Delta - 1 \}$. We maintain the invariant that in iteration $i$, the maximum degree is at most $d_{i-1} = \Delta/2^{i}$. The $i^{th}$ iteration works as follows: we \emph{mark} each edge incident on any node of degree at least $d_i=\Delta/2^{i+1}$ with probability $p_i=2^{i}/(4\Delta)$. We note that this probability is set such that each node of degree at least $d_i=\Delta/2^{i+1}$ has at least a constant probability of having an \emph{isolated marked} edge --- i.e., a marked edge that has no other marked edge adjacent to it. Then, we put all isolated marked edges into the matching and we remove their endpoints from the graph. We also remove from the graph all vertices of degree at least $d_i=\Delta/2^{i+1}$. 

On an intuitive level, in iteration $i$, we remove a number of vertices linear in the number of vertices of degree at least $d_{i-1}/2$ and expect a constant fraction of these to be matched. Hence, overall, the algorithm gives a constant approximation of maximum matching. The formal analysis will be provided for our sparsified variant, which we discuss below. We also comment that this algorithm can be viewed as a simple variant of the algorithm used by Parnas and Ron paradigm~\cite{parnas2007approximating} and it is also close to some algorithms in \cite{onak2010maintaining, czumaj2017round, ghaffari2018improved}.

\paragraph{Sparse Distributed Algorithm:} We now explain how to sparsify this basic algorithm and make sure that it has a small locality volume. We break the algorithm into $2\sqrt{\log \Delta}$ phases; each with $R=\sqrt{\log \Delta}/2$ iterations. We simulate each phase, by running a different $O(R)$-round distributed algorithm on a sparsified graph $H \subseteq G$. The graph $H$ has maximum degree $\tilde{O}\left(2^{\sqrt{\log \Delta}/2}\right)$ and moreover, it can be identified at the beginning of the phase in one round. 

Let us focus on the first phase; the other phases are similar. For this phase, we generate $H$ randomly, as follows: For each iteration $i\in \{1, 2, \dots, R\}$ to be simulated, generate a randomly sampled subgraph $H_i$ by including each edge of the original graph $G$ with probability $p'_i= \min\{K p_i, 1\}= \min\{K\cdot 2^{i} /(4\Delta), 1\}$ for some $K=\Theta(\log \Delta)$. The samplings for different iterations $i$ are independent, and are all generated at the same time. The sparsified subgraph $H$ is the union of all of these subgraphs, i.e., $H= \cup_{i\in \{1, 2, \dots, R\}} H_i$. 

To simulate iteration $i$ of the basic algorithm by running another algorithm on the sparsified graph $H$, for each iteration $i\in \{1, 2, \dots, R\}$, we do two things: (1) To mark edges of iteration $i$, we subsample each sampled edge of $H_i$ (those whose both endpoints are still present) with probability $p_i/p'_i$. Then, as before, isolated marked edges are added to the output matching, and their vertices are removed from $H_i$. (2) Instead of removing vertices of high degree in the original graph $G$ (which we cannot identify exactly as we do not want to communicate in $G$), we remove all vertices whose remaining degree in $H_i$ exceeds $d_i \cdot p'_i=K/8$. This completes the description for one phase. After a phase on the sparsified graph, we use one round of communication on $G$ to remove all vertices whose degree in the remaining graph exceeds $d_R=\Delta/2^{R+1}$. Then, we proceed to the next phase. Other phases work similarly, essentially as if the maximum degree has decreased by a factor of $2^{R} = 2^{\sqrt{\log \Delta}/2}$. 

\begin{lemma}
(A) For each node $v$, the degree of $v$ in $H$ is at most $2^{\sqrt{\log \Delta}/2} \cdot O(\log \Delta)$, with probability at least $1-1/\Delta^{10}$. (B) After iteration $i$ of the simulation, the remaining degree of each node in graph $G$ is at most $2d_i=\Delta/2^{i}$, with probability at least $1-1/\Delta^{10}$. (C) In iteration $i$ of the simulation, if we remove a set $S$ of vertices (for having a high-degree in $H_i$ or becoming matched), then we have $\Theta(|S|)$ matched edges in this iteration, with probability at least $1-exp(-\Theta(|S|))$. Hence, the algorithm computes a constant approximation of maximum matching, with high probability (in the matching size).
\end{lemma}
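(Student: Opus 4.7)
The plan is to prove the three parts in turn and then combine them to get the approximation claim.

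For (A), I would observe that $\deg_H(v) = \sum_{i=1}^{R} \deg_{H_i}(v)$ is a sum of independent Bernoullis over (edge, iteration) pairs, since the samplings defining the $H_i$'s are independent. Its expectation is at most $\sum_{i=1}^{R} \deg_G(v) \cdot p'_i \leq \Delta \cdot \sum_i K\cdot 2^{i}/(4\Delta) = O(K \cdot 2^R) = O(\log\Delta) \cdot 2^{\sqrt{\log\Delta}/2}$. Choosing $K=\Theta(\log\Delta)$ forces the mean to be $\Omega(\log\Delta)$, so a multiplicative Chernoff bound gives the claimed upper bound on $\deg_H(v)$ with error probability at most $1/\Delta^{10}$.

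For (B), I would induct on $i$. Assume at the start of iteration $i$ each surviving vertex has $G$-degree at most $2d_{i-1} = 4d_i$ in the remaining graph. Consider any vertex $v$ whose remaining $G$-degree still exceeds $2d_i$ at the end of iteration $i$. Its expected $H_i$-degree, restricted to surviving neighbors, is at least $2d_i \cdot p'_i = K/4$. Since $K=\Theta(\log\Delta)$, a Chernoff bound shows its actual $H_i$-degree exceeds the removal threshold $K/8$ with probability $1-1/\Delta^{10}$, so $v$ is removed by the high-$H_i$-degree rule --- a contradiction. A union bound over all $v$ finishes the inductive step.

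The heart of the lemma is (C). The marginal edge-marking probability equals $p_i$, and by (B) every remaining vertex has $G$-degree at most $4d_i$, so the expected number of marked edges at any vertex is at most $4d_i p_i = 1/2$. I would first show that every $v\in S$ is matched with constant probability. If $v\in S$ was flagged for high $H_i$-degree, then its expected number of marked edges (in $H_i$, among surviving neighbors) is at least $(K/8)\cdot(p_i/p'_i) = 1/8$, giving $\Pr(v \text{ has a marked edge})=\Omega(1)$. For a marked edge $(v,w)$, the probability that no other marked edge is adjacent to $v$ or $w$ equals $(1-p_i)^{\deg(v)+\deg(w)-2}=\Omega(1)$, since the expected number of other marked edges incident to either endpoint is at most $1$. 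This yields $\mathbb{E}[|S \cap {\rm matched}|] = \Omega(|S|)$. The tail bound is the delicate step: the matching events for different $v\in S$ are not independent, but depend only on edges inside a constant-radius ball in $H_i$. I would exploit this local dependence --- e.g., by properly coloring the dependency graph of events and applying Chernoff inside each independent color class, or by invoking a Janson-type inequality --- to deduce $\Pr(|S \cap {\rm matched}| \geq c|S|) \geq 1 - \exp(-\Theta(|S|))$ for some constant $c>0$.

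The overall approximation guarantee then follows by summation: every vertex is eventually removed (via an isolated marked edge, a high $H_i$-degree, or the inter-phase $G$-degree check), and by (C) a constant fraction of each removed batch contributes to the matching, so the final matching has size $\Omega(n)$, which is within a constant factor of the maximum matching. The main obstacle is the tail bound in (C); the naive Chernoff route is unavailable because adjacent matching events share edge randomness, and one must carefully pack $S$ into nearly-independent sub-batches --- or appeal to a concentration inequality tailored to locally-dependent random variables --- to obtain the $\exp(-\Theta(|S|))$ rate rather than a weaker $\exp(-\Theta(|S|/\mathrm{poly}(d_i)))$ bound.
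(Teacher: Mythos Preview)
Your arguments for (A), (B), and the expectation calculation in (C) match the paper's proof essentially line for line: same Chernoff applications, same induction for (B), same decomposition of $S$ into matched vertices and high-$H_i$-degree vertices $S'$, and the same constant-probability-of-being-matched argument for each $v\in S'$.

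For the concentration step in (C), the paper takes a different and more direct route than either of your proposals: it invokes McDiarmid's bounded-differences inequality (citing \cite[Lemma 4.1]{ghaffari2018improved}). The point is that flipping the marking bit of a single edge of $H_i$ can change the number of isolated marked edges by only $O(1)$, so bounded differences applied to the relevant edge variables yields the $\exp(-\Theta(|S|))$ tail. Your coloring-and-Chernoff route, as you yourself note, naturally loses a $\mathrm{poly}(\log\Delta)$ factor in the exponent coming from the dependency degree, and Janson-type bounds control the lower tail of sums of \emph{increasing} events, whereas ``$v$ has an isolated marked edge'' is not monotone in the marking bits. McDiarmid sidesteps both issues.

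There is a genuine error in your final paragraph. It is not true that every vertex is eventually removed, and the matching certainly need not have size $\Omega(n)$: in a star on $n$ vertices the maximum matching has size $1$, the algorithm matches the center to one leaf, and the remaining $n-2$ leaves simply become isolated and are never removed by any of your three rules. The correct wrap-up is that the union of all removed vertices across all iterations forms a vertex cover of $G$ (every edge eventually loses an endpoint, since degrees are driven to zero), so by (C) the total matching has size $\Omega(|\text{removed vertices}|) \geq \Omega(\tau(G)) \geq \Omega(\nu(G))$, where $\tau$ and $\nu$ are the minimum vertex cover and maximum matching sizes. You should also note that (C) as stated covers only the in-iteration removals (matched or high $H_i$-degree); the inter-phase $G$-degree cleanup is a separate set that must be charged against the matching as well, which is handled by the fact that (B) already guarantees those vertices were high-degree at the start of the last iteration of the phase and hence are absorbed into that iteration's $S$.
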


\begin{proof} (A) Since $p'_i=K \cdot 2^{i}/(4\Delta)$, the expected degree of $v$ in $H_i$ is at most $2^{i} \cdot O(\log \Delta)$. Thus the expected degree in $H$ is at most $2^{\sqrt{\log \Delta}/2} \cdot O(\log \Delta)$. By a Chernoff bound, the probability that $v$ has more than $2^{\sqrt{\log \Delta}/2} \cdot O(\log \Delta)$ edges in $H$ is no more than $1/\Delta^{10}$.

(B) For any node whose degree is at least $\Delta/2^i$ in graph $G$ in the end of simulation of iteration $r=i-1$, we expect to have at least $\Delta/2^i \cdot 2^{i}/(4\Delta) \cdot K =K/4$ sampled edges in $H_i$ (to vertices that are not removed after simulating iterations $1$ to $i-1$). Hence, with probability at least $1-1/\Delta^{10}$, any such vertex has at least $K/8$ sampled edges in $H_i$ and thus gets removed in iteration $i$.

(C) 
The set $S$ of vertices that get removed in iteration $i$ is composed of two parts: the set of vertices that are incident on isolated marked edges, which are matched, and the set $S'$ of vertices that have a degree of at least $K/4$ in $H_i$.  To show that the matching size is $\Theta(|S|)$, it suffices to show that the matching size is $\Omega(|S'|)$. Now, each edge of $H_i$ gets marked with probability $p_i/p'_i$. Therefore, the probability for each vertex $v\in S'$ (which has at least $K/4$ edges in $H_i$) to be incident on a marked edge is at least a constant. Now, by property (B), at the beginning of iteration $i$, each node $u$ has degree at most $2d_{i}$ in graph $G$, with probability at least $1-1/\Delta^{10}$, which means that in the sampled graph $H_i$, node $u$ has degree at most $4d_i p'_i=K/2$, with probability at least $1-2/\Delta^{10}$. 
Therefore, with probability at least a constant, the marked edge incident on $v$ is isolated and has no adjacent marked edge. That is, with constant probability, vertex $v$ is matched. This implies that we expect to have at least $\Theta(|S'|)$ matched edges in this iteration. By an application of McDiarmid's inequality (similar to \cite[Lemma 4.1]{ghaffari2018improved}), we can also prove some concentration around this expectation, and show that, with probability at least $1-exp(-\Theta(|S'|))$, the matching size is at least $\Theta(|S'|)$.
\end{proof}
\paragraph{The Locality Volume of the Sparsified Graph.} To simulate one phase, we need to identify the related sparsified graph, which can be done in one round and has degree $2^{O(\sqrt{\log \Delta})}$. Then, we run a $\Theta(\sqrt{\log \Delta})$ round local process on this graph. Hence, each node's behavior in one phase depends on at most $\Delta \cdot \left(2^{O(\sqrt{\log \Delta})}\right)^{\Theta(\sqrt{\log \Delta})} = \Delta^{O(1)}$ nodes/edges. This is the locality volume for one phase. Since we have $\Theta(\sqrt{\log \Delta})$ phases, the overall behavior of each node during this algorithm depends on at most $\Delta^{\Theta(\sqrt{\log \Delta})}$ other nodes/edges. Hence, although we have a process with locality radius $T=\Theta(\log \Delta)$, the locality volume is much smaller than $\Delta^{T}$ and is just $\Delta^{\Theta(\sqrt{\log \Delta})}$.


\paragraph{Simulation in the MPC Model:}
%
Before describing the simulation, we comment that for this warm-up, and to avoid technicalities, we make two simplifying assumption: (1) We assume that the edges of each node fit within one machine (thus, $\Delta \leq n^{\alpha}$). This assumption can be avoided, basically with some change of parameters, as is done for our main algorithms presented in the next section. (2) We assume that we have room for at least $\tilde{O}(\sqrt{\Delta})$ bits per node, e.g., by working under the assumption that all vertices have degrees between $[\Delta^{1/2}, \Delta]$, which means that the number of machines $M= \tilde{\Omega}\left( \frac{n\sqrt{\Delta}}{n^{\alpha}} \right)$ and thus, $n/M \geq \tilde{O}\left( \frac{n^{\alpha}}{\sqrt{\Delta}} \right)$. This latter assumption can be removed by working through $\log\log \Delta$ successive iterations of polynomially decreasing degree classes $[\Delta^{1/2}, \Delta]$, $[\Delta^{1/4}, \Delta^{1/2}]$, $[\Delta^{1/8}, \Delta^{1/4}]$, etc.
 
We simulate $R=\sqrt{\log \Delta}/2$ iterations of one phase in the sparsified algorithm in $O(\log \log \Delta)$ rounds of the MPC model. For that, we make each node $v$ learn its $R$-hop neighborhood in $H$, as follows: we have $\log R$ MPC rounds, where at the end of round $i$, each node should know its neighborhood in $H^{2^i}$. In round $i+1$, node $v$ sends the names of all its neighbors in $H^{2^i}$ to all of these neighbors. Given the degree of $H$, this is at most $(2^{\sqrt{\log \Delta}/2})^{2^i} \leq (2^{\sqrt{\log \Delta}/2})^{R} = \Delta^{1/4}$ neighbors. Hence, at the end of round $i+1$, each node knows its neighbors in $H^{2^{i+1}}$. After $\lceil\log R\rceil$ rounds, each node knows its $R$-neighborhood in $H$. Notice that each machine needs to gather at most $\tilde{O}(\Delta^{1/2})$ bits for each of the $n/M= \tilde{O}(\frac{n^{\alpha}}{\sqrt{\Delta}})$ nodes that it wants to simulate. Hence, each machine can gather this information for all of its nodes and that would fit within its memory. At this point, the machine can locally simulate the behavior of each of its nodes $v$ in $R$ rounds of the algorithm and learn whether $v$ is matched or not and whether it is removed or not. We can then use one round of the MPC model to remove all vertices whose degree has not dropped below $\Delta/2^{R}$, at which point we can proceed to the next phase.

We should remark about one small subtlety in this simulation: We want that the collected neighborhood includes the related random values, so that the simulation (and particularly subsampling) performed after collecting the local topology is consistent in various vertices that simulate the algorithm. For that, we do as follows: for each edge in $G$, when sampling it for inclusion in $H_i$, we draw a uniformly random number in $[0, 1]$. If this random number exceeds $p'_i$, the edge is included in $H_i$. Then, we also include this random number in the information of that edge. When simulating iteration $i$, where we want to subsample and mark edges of $H_i$ with probaility $p_{i}/p'_{i}$, we call each edge of $H_i$ marked if its random number exceeds the threshold of $p_i$.

\paragraph{Simulation in the LCA Model:} 
We start with discussing the simulation of the first phase. We can create an oracle that simulates the $R=\sqrt{\log \Delta}/2$ iterations of this phase for one node $v$ in the LCA model, as follows: We will basically gather $R$-hop topology of $v$ in the sampled graph $H$. This is a topology of size at most $\Delta^{1/2}$ as argued above. For that, we need to build $H$, which we will do iteratively: We first determine all edges of $H$ that are incident on $v$. That will take $\Delta$ queries, to read all neighboring edges, and to sample them according to the probabilistic construction of $H$. Then, we recurse among the at most $O(\log^2 \Delta) \cdot 2^{\sqrt{\log \Delta}/2}$ neighbors of $v$ in $H$, and build their neighborhoods. We then continue on their neighbors, and so on, up to distance $R$. Building edges of each node takes $\Delta$ queries, to determine its edges (and sample the respective random variables), and we then continue on at most $O(\log^2 \Delta) \cdot 2^{\sqrt{\log \Delta}/2}$ neighbors. Since we do this for $R$-hop neighborhood, we need at most $\Delta \cdot (O(\log^2 \Delta) \cdot 2^{\sqrt{\log \Delta}/2})^R \leq \Delta^{3/2}$ queries. Hence, this oracle can simulate one node's behavior in one phase of the sparsified distributed algorithm using $\Delta^{3/2}$ queries. Then, the process for simulating the second phase is similar, except that to simulate each node's behavior in the second phase, we first need to call the oracle of the first phase on this node and its neighbors to know this node's status at the end of that period (the neighbors are needed so that we can remove the node if its degree did not drop below $\Delta/2^{R}$). Hence, the oracle of the second phase works in $\Delta^{5/2}$ queries to the first oracle, which is at most $\Delta^5$ queries to the base graph. Similarly, we can simulate all the $\log \Delta$ iterations in $2\sqrt{\log \Delta}$ phases, where the oracle of each phase makes $\Delta^{5/2}$ calls to the oracle of the previous phase, and at the very base, the calls are to the original graph. Since we have $2\sqrt{\log \Delta}$ phases, the overall query complexity for simulating each node's behavior in the full run of the algorithm is $\Delta^{O(\sqrt{\log \Delta})}$.  



\section{Maximal Independent Set (MIS) and Implications}
\label{sec:MIS}
Here, we first review a distributed MIS algorithm of Ghaffari~\cite{Ghaffari2016} and then present a sparsification for it. We then explain how this sparsification leads to improved MPC and LCA algorithms. 

\subsection{Reviewing Ghaffari's MIS Algorithm} The MIS algorithm of~\cite{Ghaffari2016} is basically repeating a simple $O(1)$-round probabilistic dynamic, as presented below. Running this dynamic for $O(\log n)$ iterations computes a Maximal Independent Set of the graph, with probability at least $1-1/\poly(n)$. If we run the dynamic for just $O(\log \Delta)$ iterations instead, with probability at least $1-1/\poly(n)$, we obtain a Nearly-Maximal Independent Set, in the following sense: with high probability, the number of remaining nodes is at most $n/\poly(\Delta)$, and each remaining component has size $O(\Delta^4 \log n)$. These two properties allow us to complete the computation of MIS, simply by computing an MIS among remaining nodes, much easier.   

\begin{algorithm}
	\caption{Ghaffari's Local MIS algorithm for node $v$:}
	\label{alg:ghaffari}
	\begin{algorithmic}
		\State Set $p_0(v) = 1/2$. 
		\For{iteration $t = 1, 2, \ldots$ until node $v$ is removed} 
				\State {\bf Round 1:} Set
					\[
						p_{t}(v) = 	\begin{cases}
											p_{t - 1}(v)/2, \qquad \hfill \; \; \; \; \; \;  \; \; \; \; \; \; \; \;\text{if } d_{t - 1}(v) = \sum_{u \in N(v)} p_{t-1}(u) \geq 2, \\
											\min\{ 2p_{t - 1}(v), 1/2 \}, \hfill \text{otherwise.}
									\end{cases}
					\]
				\State {\bf Round 2:} Node $v$ marks itself w.p. $p_{t}(v)$.
				\State If $v$ is the only marked node in its neighborhood $N(v)$, then $v$ joins the MIS. 
				\State If $v$ joined the MIS, $v$ is removed from the graph along with its neighbors.
		\EndFor
	\end{algorithmic}
\end{algorithm} 

\medskip
\paragraph{Intuitive Discussion of How This Algorithm Works:} Informally, the dynamic adjustments in the probabilities $p_{t}(v)$ aim to create a negative-feedback loop so that we achieve the following property: for each node $v$, there are many iterations $t$ in which either (I) $p_t(v)=\Omega(1)$ and $d_t(v)=O(1)$, or (II) $d_t(v) =\Omega(1)$ and a constant fraction of it is contributed by neighbors $w$ for which $d_t(w)=O(1)$. These are good iterations because it is easy to see that in any such iteration, node $v$ gets removed with at least a constant probability. Ghaffari's analysis~\cite{Ghaffari2016} shows that if we run for $\Omega(\log \Delta)$ iterations, each node $v$ spends a constant fraction of the time in such good iterations (with a deterministic guarantee). Hence, if we run for $O(\log n)$ rounds, with high probability, we have computed an MIS. Running for $O(\log \Delta)$ rounds leaves each node with probability at most $1/\poly(\Delta)$ and this can be seen to imply that we have computed a Nearly-Maximal Independent Set~\cite{Ghaffari2016}, in the sense explained above. After that, it is easier to add some more vertices to this set and ensure that we have an MIS. 

\subsection{Sparsifying Ghaffari's MIS Algorithm}
\label{subsec:sparsifying}
\paragraph{Intuitive Discussions About Sparsification}\footnote{We note that the discussions here are quite informal and imprecise. We still provide this intuitive explanation with the hope that it delivers the main idea behind our approach, and why we do certain potentially strange-looking things. }: We are mainly interested in running $O(\log \Delta)$ rounds of the above algorithm; after that we can complete the computation from a near-maximal IS to a maximal IS easier. We will break the algorithm into phases and perform a sparsification for each phase separately. For one phase, which has $R$ rounds, we would like to devise a much sparser graph $H$ such that by running a distributed algorithm on $H$ for $\Theta(R)$ rounds, we can simulate $R$ iterations of Ghaffari's algorithm on the base graph $G$. In our case, we will be able to do this for $R = O(\sqrt{\log \Delta})$. Thus, each $O(\sqrt{\log \Delta})$ iterations can be performed on a much sparser graph and we just need to ``stitch together" $O(\sqrt{\log \Delta})$ of these, by communications in the base graph. We next discuss the challenges in sparsifying one phase and our ideas for going around these challenges. 

We discuss how we deal with sparsification for the first round of iterations, i.e., the round of updating probabilities $p_{t}(v)$ based on the neighbors. We use a similar idea for the sparsification needed for the second rounds of the iterations, where we perform a marking to determine the vertices that are added to MIS.
 
Let us first examine just one round of the dynamic. One obstacle is the dynamic update of the probabilities $p_{t}(v)$, which depend on all the neighbors. That is, $p_{t}(v)$ is updated based on the summation of the probabilities $p_{t}(u)$ of all neighbors $u \in N(v)$. It seems like even if we ignore just one or a few of the neighbors, and we do not include them in $H$, then the update of the probability might be incorrect, especially if those ignored neighbors $u$ have a large value $p_{u}(t)$. However, all that we need to do is to test whether $d_{t - 1}(v) = \sum_{u \in N(v)} p_{t-1}(u) \geq 2$ or not. Thus, a natural idea for sparsification is to use random sampling, while neighbors of larger $p_{t}(u)$ have more importance. In particular, if we sample each node $u$ with probability $p_{t}(u)$ and compare the number of sampled vertices with $2$, we have a constant-probability random tester for checking the condition $d_{t - 1}(v) = \sum_{u \in N(v)} p_{t-1}(u) \geq 2$, up to a small constant factor. That is, if we are above the threshold by a constant factor, the test detects that we are above the threshold with at least a positive constant probability, and if we are below the threshold by a constant factor, the test detects that we are below the threshold with at least a positive constant probability. We can run this random tester several times, all in parallel, to amplify the success probability. For each node $v$, the sampled set of neighbors would have size at most $\poly(\log \Delta) \cdot d_{t-1}(v)$, with probability $1-1/\poly(\Delta)$, thus opening the way for the creation of the sparser graph $H$ mentioned above, especially for nodes $v$ whose $d_{t-1}(v)$ is small. 

The above does not seem so useful on its own, because we still have to receive from each neighbor $u$ whether it is sampled or not, and that depends on $p_{t}(u)$ in that iteration which is not known in advance. But, thanks to the fact that the changes in $p_{t}(u)$ are somewhat smooth, we can go much further than one round, as we informally sketch next. Suppose that at time $t$, which is the beginning of a phase, we want to build a sparse graph $H$ that includes any neighbor that may be sampled and thus might impact the estimation of $d_{t'}(u)$ in any iteration $t'\in [t, t+R]$. For each round $t'\in [t, t+R]$, if we include each node with probability $2^{R} p_{t}(v)$, the included set would be an oversampling of the set that will be sampled at iteration $t'\in [t, t+R]$, i.e., it will include the latter. This is because $p_{t'}(v) \leq 2^R p_{t}(v)$. The fact that at time $t$ we can predict a small superset of all vertices that will be sampled in iterations $[t, t']$ allows us to build a graph $H$ where each node $v$ has at most $\tilde{O}(2^{R}d_{t}(v))$ neighbors, and suffices for simulating the next $R$ rounds. We soon discuss how to deal with vertices for which $d_{t}(v)$ is large.

The above randomly sampled graph $H$ is good for vertices $v$ such that $d_{t}(v)$ is small, e.g., $2^{\tilde{O}(R)}$. But for vertices that have a larger $d_{t}(v)$, this graph would include many neighbors, which is not desired. Fortunately, for any such vertex $v$ for which $d_{t}(v) \geq 2^{3R}$, we have a different nice property, which helps us predict their behavior for the next $R$ rounds. More correctly, this property enables us to safely gamble on a prediction of their behavior. 

Let us explain that: Under normal circumstances where for each neighbor $p_{t}(u)$ decreases by a $2$ factor per round, during the next $R$ round, $d_{t}(v)$ would decrease by at most a $2^R$ factor. Hence, if we start with $d_{t}(v) \geq 2^{3R}$, throughout all iterations $t'\in [t, t+R]$ in the phase, $d_{t'}(v)$ is quite large. In such cases, it is clear that $v$ should keep reducing its $p_{t}(v)$ and also that any time that it marks itself, it gets blocked by a marked neighbor, with a significant probability. Hence, in such an situation, the behavior of $v$ is predictable for the next $R$ rounds. Of course, it is possible that many of the neighbors of $v$ drop out during the next $R$ rounds and because of that we suddenly have $d_{t'}(v)\leq 2^R$. Fortunately, this is enough progress in the negative-feedback dynamic around $v$, which allows us to modify the analysis of \cite{Ghaffari2016} and show the following property. The algorithm works even with the following update: if at the beginning of the phase we have $d_{t}(v)\geq 2^{3R}$, for all rounds of this phase, we can update $p_{t'+1}(v)=p_{t'}(v)/2$ without checking $d_{t'}(v)$ (in a very predictable manner). In this case, we say that we are \emph{stalling} node $v$. In a sense, this postpones the attempts of $v$ to join MIS for the next $R$ rounds. On an intuitive level, this is fine because sudden drops that $d_{t}(v)\geq 2^{3R}$ and $d_{t'}(v) \leq 2$ for some $t' \in [t, t+R]$ cannot happen too frequently. We note that an idea similar to this was used before in \cite{Ghaffari2017} to obtain an algorithm for MIS algorithm in the $\mathsf{CONGESTED}$-$\mathsf{CLIQUE}$ model of distributed computing.

Finally, we note that in the above, we discussed our idea for randomly testing whether $d_{t'}(v)\geq 2$ or not, via randomly sampling vertices. Essentially the same idea can be used to create a superset of marked nodes, such that it has only a few nodes around each node $v$ whose $d_{t}(v)$ is small and it is guaranteed to include all neighbors of $v$ that are marked in round $t'$. 


\paragraph{Sparsified Variant of Ghaffari's Local MIS Algorithm.} The precise algorithm can be found in Algorithm~\ref{alg:localmis}.  
Let us summarize the changes to Ghaffari's algorithm: As mentioned above, we break the algorithm into phases, each made of $R=\alpha\sqrt{\log{\Delta}}/10$ iterations, and we do the sparsification mentioned for each iteration. Recall that $\alpha\in (0, 1)$ is the constant so that each machine has memory at least $n^{\alpha}$. At the beginning of the phase, we decide whether to stall each node $v$ or not, based on the value of $d_{t}(v)$ at that point. Furthermore, instead of updating $p_{t}(v)$ by reading the summation of all neighbors, we update it based on an estimation that derived from $O(\log \Delta)$ parallel repetitions of sampling each neighbor $u$ with probability $p_{t}(u)$. 
%

\begin{algorithm}
	\caption{Local MIS algorithm for node $v$:}
	\label{alg:localmis}
	\begin{algorithmic}
		\State Set $p_0 = 1/2$.
		\For{phase $s = 0, 1, \ldots$ until node $v$ is removed}			
			\For{iteration $i = 1, 2, \ldots, \alpha \cdot \sqrt{\log \Delta} / 10$ of phase $s$} 		\Comment{$n^{\alpha}$ is the memory per machine}			
				\State Let $t = s \cdot \alpha \cdot \sqrt{\log \Delta} / 10 + i$ and let $k = 12 \cdot C \log \Delta$. \Comment{$C$ is some large constant}
				\State Perform $k$ repetitions of sampling, where in each repetition $v$ is sampled w.p. $p_{t - 1}(v)$.
				\State Let $b(v)$ be the binary vector of length $k$, where $b^{\,j}$ is its $j$-th element.
				\State Set $b^{\,j}$ equal to $1$ iff $v$ is sampled in repetition $j$.
				\State Let $\hat{N}(v) \subseteq N(v)$ be the set of neighbors sampled at least once. 
				\State For $j = 1, \ldots, k$, set
					\[
						\hat{d}^{\,j}(v) = \sum_{u \in \hat{N}(v)} b^{\,j}(v) \ .
					\] 
				\State Set estimate $\hat{d}_{t - 1}(v)$ as the median of $\{\hat{d}^{1}, \hat{d}^{2}, \ldots, \hat{d}^{k}\}$.
					
				\State If $i = 1$ and $\hat{d}_{t - 1}(v) \geq 2^{\alpha \cdot \sqrt{\log \Delta} / 5}$, then stall for this phase.
				\State {\bf Round 1:} Set
					\[
						p_{t}(v) = 	\begin{cases}
											p_{t - 1}(v)/2, \qquad\hfill \text{if } \hat{d}_{t - 1}(v) \geq 2 \text{ or if $v$ is stalling} \\
											\min\{ 2p_{t - 1}(v), 1/2 \}, \hfill \text{otherwise.}
									\end{cases}
					\]

				\If{Node $v$ is not stalling}
					\State {\bf Round 2:} Node $v$ marks itself w.p. $p_{t}(v)$.
					\State If $v$ is the only marked node in $N(v)$, then $v$ joins the MIS. 
					\State If $v$ joined the MIS, $v$ is removed from the graph along with its neighbors.
				\EndIf
			\EndFor
		\EndFor
	\end{algorithmic}
\end{algorithm} 

\paragraph{Analysis for the Sparsified Algorithm:}
We provide an analysis which shows that the above sparsified algorithm provides guarantees similar to those of the algorithm of \cite{Ghaffari2016}. The formal statement is provided below, and the proof appears in \Cref{app:thm:beeping}.
\begin{theorem}
	For each node $v$, during $T = c(\log \Delta)$ iterations for a sufficiently large constant $c$, with probability at least $1/\Delta^{C}$, either node $v$ or a neighbor of $v$ is added to the MIS. This guarantee holds independent of the randomness outside the $2$-hop neighborhood of $v$.
	Furthermore, let $B$ be the set of nodes remaining after $T$ rounds. With probability at least $1 - 1/n^{10}$, we have the following:
	\begin{enumerate}
		\item Each connected component of the graph induced by $B$ has $\bigO(\log_\Delta n \cdot \Delta^4)$ nodes.
		\item $|B| \leq \frac{n}{\Delta^{10}}$.		
		\item If $\Delta > n^{\alpha/4}$, then the set $B$ is empty.
	\end{enumerate}
	\label{thm:beeping}
\end{theorem}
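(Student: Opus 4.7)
The plan is to reduce the analysis to Ghaffari's original argument~\cite{Ghaffari2016} for the MIS dynamic in Algorithm~\ref{alg:ghaffari}, by showing that the two modifications we introduced --- the sampled estimator $\hat{d}_{t-1}(v)$ in place of the exact degree-sum $d_{t-1}(v) = \sum_{u\in N(v)} p_{t-1}(u)$, and the phase-wide stalling of heavy nodes --- preserve all the properties that underlie his analysis. Once that is established, the three consequences in the second half of the theorem (bounded component size of the residual graph, small $|B|$, and emptiness when $\Delta$ is large) follow from essentially the same shattering argument Ghaffari uses, driven by the per-node survival probability $1/\Delta^C$ delivered by the first half.

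First I would show that $\hat{d}_{t-1}(v)$ faithfully tracks $d_{t-1}(v)$ up to constant factors. In each of the $k = 12C\log\Delta$ independent sampling repetitions, the count $\hat d^{\,j}(v)$ is a sum of independent Bernoullis with expectation exactly $d_{t-1}(v)$; a standard multiplicative Chernoff bound shows that a single repetition deviates from $d_{t-1}(v)$ by more than a small constant factor with probability at most an absolute constant below $1/2$ (and by more than an additive $O(1)$ when $d_{t-1}(v)=O(1)$). Taking the median of $k$ repetitions amplifies this, so with probability $1 - 1/\Delta^{\Omega(C)}$ the estimator lies within a constant factor of $d_{t-1}(v)$; in particular the test ``$\hat d_{t-1}(v)\geq 2$'' agrees with ``$d_{t-1}(v) \geq \Theta(1)$'' up to the constants that define ``good'' iterations. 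Union-bounding over the $T = O(\log\Delta)$ iterations keeps the estimator faithful throughout, and this event depends only on randomness in the $2$-hop neighborhood of $v$.

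Next I would fold stalling into Ghaffari's potential-based accounting. He classifies an iteration as \emph{golden} ($p_t(v) \geq 1/4$ with $d_t(v) < 2$) or \emph{silver} (a constant fraction of $d_t(v)$ is contributed by low-$d$ neighbors), and shows deterministically that any window of $c\log\Delta$ iterations contains $\Omega(\log\Delta)$ such iterations, each removing $v$ with constant probability. In our algorithm, the stalling condition $\hat d_{t-1}(v) \geq 2^{\alpha\sqrt{\log\Delta}/5}$ at the start of a phase certifies $d_{t-1}(v) = \Omega(2^{3R})$, and throughout the stalled phase the rule $p_{t'+1}(v) = p_{t'}(v)/2$ matches exactly what Ghaffari's high-degree branch would do, \emph{provided} $d_{t'}(v)$ stays above $2$. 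If instead $d_{t'}(v)$ drops below $2$ mid-phase, this itself is a large-magnitude change already charged by Ghaffari's analysis against progress in $N^2(v)$ (many neighbors, or their $p$-values, have collapsed); since each phase has only $R = O(\sqrt{\log\Delta})$ iterations, such ``wasted'' stalling windows cost $v$ at most $O(\sqrt{\log\Delta})$ good iterations out of $\Omega(\log\Delta)$ --- a negligible loss.

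Combining these, the probability that $v$ and all of $v$'s neighbors survive all $T = c\log\Delta$ iterations is at most $(1-\Omega(1))^{\Omega(\log\Delta)} \leq 1/\Delta^C$ for $c$ large enough, and depends only on randomness within distance $2$ of $v$. From here the three items follow by the standard shattering machinery used in~\cite{Ghaffari2016}: (1) independence-at-distance-$2$ and the Beck / Alon--Spencer / Molloy--Reed shattering lemma give that every connected component of $B$ has size $O(\log_\Delta n \cdot \Delta^4)$ with probability $1 - 1/n^{10}$; (2) linearity of expectation plus a Chernoff/Markov bound yields $|B| \leq n/\Delta^{10}$ with the same high probability; and (3) when $\Delta > n^{\alpha/4}$, the per-node survival bound $1/\Delta^C$ union-bounds below $1/n^{10}$ for large $C$, so $B = \emptyset$. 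The main obstacle is the stalling analysis: making rigorous how Ghaffari's potential function is charged across phases in which $v$ was stalled, while simultaneously accounting for the (exponentially small) probability that the sampled estimator misclassifies $v$'s stall status or its round-by-round update.
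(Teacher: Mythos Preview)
Your overall plan matches the paper's: establish that the estimator $\hat d_{t-1}(v)$ tracks $d_{t-1}(v)$ to within constant factors via Chernoff-plus-median (this is the paper's first appendix lemma), show that stalling does not destroy the supply of golden iterations, and then invoke the shattering machinery of~\cite{Ghaffari2016} together with a bounded-dependence Chernoff bound for the three structural conclusions about $B$. The estimator step and the final shattering step are essentially identical to the paper's.

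The stalling step is where you and the paper diverge, and your version has a gap. You propose to treat stalled rounds in which $d_{t}(v)$ has already dropped below the threshold as ``wasted'' and to assert that the total number of such rounds is only $O(\sqrt{\log\Delta})$. But you have not justified that bound: a single bad phase contributes $R=\Theta(\sqrt{\log\Delta})$ rounds, and nothing in your sketch rules out $\Theta(\sqrt{\log\Delta})$ bad phases over the $T=\Theta(\log\Delta)$ iterations, which would consume a constant fraction of all rounds. Concretely, after a bad phase $d_t(v)$ is at most $2\cdot 2^{R}$, but during the next (non-stalled) phase it can double each round and reach $2^{2R}$ again, so $v$ can be stalled in the phase after that; bad phases every other phase is not excluded by your argument.

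The paper sidesteps this counting problem by an amortization rather than a waste bound. It folds stalling rounds into the same count $h$ as rounds with $d_t(v)>0.4$, and shows that over any phase in which $v$ was stalled and $d$ fell to at most $0.4$, the net drop in $d$ across the whole phase is at least a factor $2^{R-3}$: at phase start the estimator lemma gives $d_{i-1}(v)\geq 2^{2R-2}$, while at phase end $d_{t'}(v)\leq 0.4\cdot 2^{R+1}$, so $d_{t'}(v)\leq d_{i-1}(v)\cdot 2^{-R+3}$. Amortized over the $R$ rounds of the phase this is a per-round factor below $0.65$, exactly the rate Ghaffari's non-golden high-$d$ rounds achieve. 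Hence the bound $h\leq 3g_2+O(\log\Delta)$ carries over unchanged, and one concludes $\max(g_1,g_2)\geq 0.05T$ as before. This is precisely the ``making rigorous'' you flag as the obstacle; the resolution is not to bound the number of wasted rounds but to show they are not wasted at all from the viewpoint of the $d$-potential.
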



\subsection{Constructing a Sparse Graph to Simulate a Phase of the Sparsified Algorithm}
\label{sec:sparsegraph}

We now describe how we build the sparse graph $H$ at the beginning of the phase, such that we can run $O(R)$ rounds of the sparsified algorithm on just this graph. The role of the sparse graph $H$ will be similar to the one in the warm up provided in \Cref{sec:warmup}.
   
\paragraph{Fixing All the Randomness in the Beginning:} We first draw the randomness that each node will use, at the very beginning of the execution. 
Every node $v$ draws $O(\log^3 \Delta)$ random bits such that there are $c_1 \log^2 \Delta$ bits of fresh randomness for each of the  $c \log \Delta$ iterations of the MIS algorithm, for a desirably large constant $c_1 > c$.
For iteration $t$, let $\bar{r}_t(v) = (r^1_{t-1}, \ldots, r^k_{t-1}, r^m_{t})$ denote a vector of $k + 1$ uniformly chosen random numbers, with $c_1 \log \Delta$-bit precision\footnote{In the extreme case, $p_t(v)$ of node $v$ halves in every iteration. Since $p_0(v) = 1/2$, $p_t(v)$ is a power of two in every iteration, and the number of iterations is bounded by $O(\log \Delta)$, the number of random bits needed per iteration is also $O(\log \Delta)$.}, from the interval $[0, 1]$.
Given this, once we know $p_{t}(v)$ for some iteration $t$, we can derive the outcome of the random marking for iteration $t$ by checking whether $r^{m}_t(v) < p_t(v)$. Similarly, a node is sampled in the $j$-th repetition if, in $\bar{r}_{t}(v) = (r^1_{t - 1}, \ldots, r^k_{t - 1}, r^m_t)$, we have $r^j_{t-1} < p_{t-1}(v)$. We note that once we have fixed each node's randomness as above, the behavior of the algorithm is fully deterministic.

\medskip
We denote an interval of iterations from $t$ to $t'$ by $[t, t']$ and refer to it as a phase if $t$ is the beginning and $t'$ is the end of the same phase in Algorithm~\ref{alg:localmis}. We next explain how we construct the sparsified graph $\Gt$ for phase $[t, t']$, after introducing some helper definitions.

\paragraph{Definitions.}
We use the following terminology in the construction of our sparse graph.
\begin{enumerate}
	\item  Node $u$ is \emph{relevant} if $r^{j}_{i - 1}(u) < p_{t - 1}(u) \cdot 2^{\alpha \cdot \sqrt{\log \Delta} / 10}$ for some iteration $i \in [t, t']$ and any index $1 \leq j \leq k$ or if $r^{m}_{i}(u) < p_{t}(u) \cdot 2^{\alpha \cdot \sqrt{\log \Delta} / 10}$.
	\item We say that node $u$ is \emph{light}, if $d_{t - 1}(u) < 2^{\alpha \cdot \sqrt{\log \Delta} / 5 + 1}$.
Otherwise, $u$ is \emph{heavy}.
	\item We say that $u$ is \emph{good} if the following inequality holds for all $i \in [t, t']$ and otherwise, it is bad.
		\[
			\hat{d}_{i - 1}(u) \leq 2^{\alpha \cdot (3/10) \cdot \sqrt{\log \Delta} + 2} = 4 \cdot 2^{\alpha \cdot (3/10) \cdot \sqrt{\log \Delta}} \ .
		\]
\end{enumerate}
Notice that if $u$ is not relevant, it will not get marked nor sampled in phase $[t, t']$. Hence, we do not need to include $u$ in our sparse graph. 
For a light node $u$, we have that $d_{i}(u) < 2^{\alpha \cdot (\sqrt{\log \Delta} / 5 + \sqrt{\log \Delta} / 10) + 1} =  2^{\alpha \cdot (3/10) \cdot \sqrt{\log \Delta} + 1}$ for all iterations $i \in [t, t']$.

\paragraph{Constructing the Sparse Graph $\Gt$.}
We first determine the vertices of $\Gt$. All relevant light nodes that are good are added to $\Gt$.
For a relevant heavy node $u$, we create $d$ virtual copies, where $d$ is the number of relevant light nodes that are good and connected to $u$ in the original graph.
All these copies are added to $\Gt$. We next determine the edges of $\Gt$. If two light nodes $u$ and $w$ are connected in the original graph, we add the edge $\{ u, w \}$ to $\Gt$. Each copy of a relevant heavy node $w$ gets an edge to exactly one of the light nodes that $w$ is connected to in the original graph. Hence, every heavy node in $\Gt$ has degree one and is connected to a light node. Finally, we note that some vertices carry extra information when added to $\Gt$, which is maintained as a label on the vertex. In particular, every node $u$ in $\Gt$ is labeled with its random bits $\bar{r}_{i}(u)$ for all iterations $i$ in $[t, t']$. This label can be thought of as a bit string appended to the identifier of the node. 

\begin{observation}
	Given the $p_t(v)$ values and the random bits $\bar{r}_t(v)$ for each node $v$, the graph $\Gt$ can be constructed from the $1$-hop neighborhood of each node.
	\label{obs:construct}
\end{observation}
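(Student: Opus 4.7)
My plan is to verify that each ingredient in the definition of $\Gt$ — which nodes are included, which edges are placed, and which labels are attached — can be evaluated by a node $v$ using only (i) its own random bits and $p$-values, (ii) the same data held by its $1$-hop neighbors in $G$, and (iii) the $1$-hop topology itself. Since all three of these can be acquired in a single round of communication in $G$, this will prove the observation. I will proceed through the three classifications and the edge/label construction in turn, treating each as a separate local check.

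First, I would dispense with \emph{relevance} of a node $u$: the defining inequalities compare $u$'s own $r^{\,j}_{i-1}(u)$ or $r^{\,m}_{i}(u)$ against a threshold of the form $p_{t-1}(u) \cdot 2^{\alpha\sqrt{\log \Delta}/10}$ (resp. $p_{t}(u)\cdot 2^{\alpha\sqrt{\log \Delta}/10}$), so relevance depends only on $u$'s own randomness and $p$-values. Next, the \emph{light/heavy} classification of $u$ is just the comparison of $d_{t-1}(u) = \sum_{w \in N(u)} p_{t-1}(w)$ against $2^{\alpha\sqrt{\log \Delta}/5 + 1}$, which is a $1$-hop sum using the $p_{t-1}$-values broadcast by the neighbors.

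The subtler point is the \emph{good/bad} classification, since its definition references $\hat d_{i-1}(u)$ for every iteration $i \in [t,t']$ of the phase, and $\hat d^{\,j}_{i-1}(u) = \sum_{w \in N(u)} \mathbb{1}\{\,r^{\,j}_{i-1}(w) < p_{i-1}(w)\}$ involves the neighbors' $p_{i-1}$-values. Under the hypothesis that the $p$-values are given alongside the random bits, each such indicator is a $1$-hop datum at $u$, hence so is its median and the goodness check. If instead one interprets the hypothesis as providing only $p_{t-1}$ and all the random bits, I would run the argument inductively across the phase: once every node knows its own $p_{i-1}$, it computes $\hat d_{i-1}$ from its neighbors' $p_{i-1}$ and $r$-values (a $1$-hop operation), and then derives $p_i$ via the deterministic update rule; so every $p_i(\cdot)$ within the phase is itself a function of $1$-hop data from the start of the phase.

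Finally, once each relevant node has classified itself, the edges of $\Gt$ follow mechanically. A light-light edge $\{u,w\}$ is placed exactly when both endpoints are relevant light good and $\{u,w\} \in E(G)$, and $u$ checks this from its $1$-hop view; each relevant heavy node $w$ instantiates one virtual copy per relevant light good neighbor and attaches each copy by an edge to that neighbor, again a $1$-hop operation at $w$. The labels $\bar r_{i}(u)$ are local to $u$ and need no communication. The only genuine obstacle is the good/bad step, which I expect to handle cleanly by the pipelined $1$-hop simulation described above; the rest is bookkeeping.
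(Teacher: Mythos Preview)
The paper offers no proof for this observation; it is stated as immediate from the construction. Your case analysis is exactly the right decomposition, and relevance, light/heavy, edges, and labels are all $1$-hop local precisely as you argue.

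Your fallback argument for the good/bad classification (interpretation (B)) contains a genuine error. You claim that by iterating the update rule, ``every $p_i(\cdot)$ within the phase is itself a function of $1$-hop data from the start of the phase,'' but this does not follow: each iteration adds one hop of dependence. Concretely, $p_{t}(u)$ depends on $\hat d_{t-1}(u)$, which depends on the neighbors' $p_{t-1}(w)$; then $p_{t+1}(u)$ depends on the neighbors' $p_{t}(w)$, which in turn depend on \emph{their} neighbors' $p_{t-1}$ --- that is, on $2$-hop data from $u$. After $R$ iterations the dependency has radius $R$, not $1$, so the goodness test over all $i\in[t,t']$ cannot be settled from a single $1$-hop exchange. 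Your interpretation (A), in which the hypothesis supplies all the $p_i$-values across the phase, is the reading under which the observation holds as written, and there your argument is fine. If one insists on (B), the honest fix is not a pipelined $1$-hop simulation but rather to drop the goodness filter from the construction and treat the degree bound on $H_{t,t'}$ as a high-probability event via Lemma~\ref{lemma:MISsparsify}, rather than as a hard constraint enforced at construction time.
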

\begin{lemma}
	\label{lemma:MISsparsify}
	Let $R = \alpha \cdot \sqrt{\log \Delta} / 10$ and $C \geq 1$ a desirably large constant.
	A light node $v$ is bad in phase $[t, t']$ with probability\footnote{notice that to get a probability of $1/\Delta^{C}$, we can insert any $R \geq \log \log \Delta$.} at most $e^{-2^{2R}} \ll 1/\Delta^{C}$.
	Furthermore, the event that a node is bad is independent of the randomness of nodes outside of its $2$-hop neighborhood.
\end{lemma}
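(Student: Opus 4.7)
The strategy is a Chernoff concentration argument on the estimator $\hat{d}_{i-1}(v)$, amplified by the median of the $k = 12C\log\Delta$ repetitions, and then a union bound over the $R$ iterations of the phase. First, since the algorithm's update rule allows each $p_i(w)$ to at most double per iteration, iterating over the $R = \alpha\sqrt{\log\Delta}/10$ rounds of the phase yields the deterministic bound
\[
	d_{i-1}(v) \;\leq\; 2^{R} \cdot d_{t-1}(v) \;<\; 2^{R} \cdot 2^{\alpha\sqrt{\log\Delta}/5+1} \;=\; 2^{3R+1}
\]
for every $i \in [t, t']$, using the assumption that $v$ is light. Hence the ``good'' threshold $2^{3R+2}$ always exceeds twice the largest possible conditional mean of any single sample $\hat{d}^{\,j}(v)$, regardless of how the $p$-values evolve inside the phase.

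Next, fix an iteration $i \in [t, t']$ and condition on all randomness up through iteration $i-1$; this freezes the vector $\{p_{i-1}(w)\}_{w \in N(v)}$. In this conditional world, $\hat{d}^{\,j}(v) = \sum_{w \in N(v)} \mathbf{1}[r^{\,j}_{i-1}(w) < p_{i-1}(w)]$ is a sum of independent Bernoullis with mean $d_{i-1}(v) \leq 2^{3R+1}$. A multiplicative Chernoff bound, applied in the regime-appropriate form (the standard $\Pr[X \geq 2\mu] \leq e^{-\mu/3}$ when the mean is close to the cap, and $\Pr[X \geq t] \leq (e\mu/t)^{t}$ giving $e^{-\Omega(t)}$ when the mean is well below the threshold), yields $\Pr[\hat{d}^{\,j}(v) \geq 2^{3R+2}] \leq \exp(-c \cdot 2^{3R})$ for an absolute constant $c > 0$. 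Because the $k$ repetitions use fresh, mutually independent random bits, the samples are conditionally independent, so the median exceeds $2^{3R+2}$ only if at least $k/2$ samples do, an event of probability at most $2^{k} \exp(-ck \cdot 2^{3R}/2)$. This conditional tail holds for every history realization and hence unconditionally.

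A union bound over the $R$ iterations of the phase keeps the resulting tail far below $e^{-2^{2R}}$, since $2^{3R} = 2^{R} \cdot 2^{2R}$ dominates $2^{2R}$ for $R$ growing with $\Delta$. For the $2$-hop independence claim, observe that the event ``$v$ is bad'' is a function only of (i) the fresh bits $r^{\,j}_{i-1}(w)$ for $w \in N(v)$, $i \in [t, t']$, $j \in [k]$, and (ii) the values $p_{i-1}(w)$ for $w \in N(v)$ during the phase. Each $p_{i-1}(w)$ evolves under the algorithm according to $w$'s own estimator $\hat{d}_{\cdot}(w)$, which in turn reads only the random bits of nodes in $N(w)$. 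Thus the bad event is measurable with respect to the randomness residing in the $2$-hop neighborhood of $v$ and is therefore independent of all randomness outside it.

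\textbf{Main obstacle.} The one delicate point is making the single-sample Chernoff tail uniform over all possible conditional means $\mu = d_{i-1}(v)$: when $\mu$ is close to the deterministic cap $2^{3R+1}$ the classical $e^{-\mu/3}$ form already suffices, but when $\mu$ is small the bound must depend on the threshold $t = 2^{3R+2}$ rather than on $\mu$. A unified Bennett/Bernstein-type inequality (or the two-case split above) resolves both regimes in one shot, after which the median amplification and the iteration/neighborhood book-keeping are routine.
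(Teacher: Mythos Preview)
Your concentration argument for the probability bound is correct and essentially matches the paper's: both derive the deterministic cap $d_{i-1}(v)\le 2^{3R+1}$ from lightness plus the at-most-doubling rule, then apply Chernoff conditioned on the history. The paper does not invoke median amplification at all---it simply union-bounds over all $k=O(\log\Delta)$ repetitions and all $R$ iterations, which already suffices since the single-sample tail $e^{-\Omega(2^{3R})}$ swamps the $R\cdot k$ prefactor; your two-regime treatment of the Chernoff tail (small versus large $\mu$) is extra care the paper glosses over.

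Your $2$-hop independence paragraph, however, has a genuine gap. You claim the bad event is \emph{measurable} with respect to the randomness in $N^2(v)$, arguing that $p_{i-1}(w)$ for $w\in N(v)$ ``evolves according to $\hat d_{\cdot}(w)$, which reads only the random bits of nodes in $N(w)$.'' But $\hat d_{\cdot}(w)$ also reads the $p$-\emph{values} of $N(w)$, which were themselves set by earlier $\hat d$-estimates depending on $p$-values still further out. This dependency chain cascades to distance $\Theta(R)$ within the phase and, via the initial values $p_{t-1}(\cdot)$, to the entire execution before the phase. So the bad event is not a function of $2$-hop randomness alone, and the measurability route cannot establish the claim.

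The paper's argument is different, and you in fact already wrote it down in your conditioning step without recognizing it as the answer. Because the cap $d_{i-1}(v)\le 2^{3R+1}$ holds \emph{deterministically} once $v$ is light, the Chernoff tail---which uses only the fresh bits $r^{\,j}_{i-1}(u)$ for $u\in N(v)$---gives $\Pr[\text{bad}\mid\mathcal F]\le e^{-2^{2R}}$ for \emph{every} realization $\mathcal F$ of all other randomness. That uniform conditional bound is what the paper means by ``independent of the randomness outside the $2$-hop neighborhood''; your own sentence ``this conditional tail holds for every history realization'' is the correct proof of the second clause, and the separate measurability paragraph should be dropped.
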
 
\begin{proof}
	By definition, $d_t(v) < 2^{\alpha \cdot \sqrt{\log \Delta} / 5 + 1} = 2^{2R + 1}$.
	Since the $d_t(v)$ value increases by at most a factor of two in every iteration, we get that $d_i(v) < 2^{2R + 1} \cdot 2^{R} = 2^{3R + 1}$ for any iteration $i \in [t, t']$.
	The expected value $\expect[\hat{d}_i(v)] = \mu$ is therefore bounded from above by $2^{3R + 1}$ and thus, by a Chernoff bound, we have
	\[
		\pr\left( \hat{d}_i(v) > 2\mu \right) = \pr\left( \hat{d}_i(v) > 2^{3R + 2} \right) < e^{-2^{3R + 1} \cdot (1/3)} <  e^{-2^{3R - 1}} = e^{-2^{\alpha \cdot (3/10)\sqrt{\log \Delta} - 1}} \ .
	\]
	
	Node $v$ is bad if there is at least one iteration $i$ such that $\hat{d}_i(v) > 2^{\alpha \cdot (3/10) \sqrt{\log \Delta} + 2} = 2^{3R + 2}$.
	By a union bound over the iterations and the sampling repetitions, for a sufficiently large $\Delta$, we get that the probability of node $v$ being bad is at most 
	\[
		R \cdot O(\log \Delta) \cdot e^{-2^{3R - 1}} < e^{-2^{2R}} \ll 1/\Delta^{C} \ .
	\]
	We get the independence by observing that $2^{3R + 1}$ is an upper bound for $\expect[\hat{d}_i(v)]$ regardless of the random choices of its neighbors.
	Thus, the bad event only depends on the randomness of the neighbors of $v$ in the corresponding iteration.
\end{proof}

\begin{lemma}
	\label{lemma:degsize}
	Let $R = \alpha \cdot \sqrt{\log \Delta} / 10$ be the length of phase $[t, t']$.
	The maximum degree of $\Gt$ is $O\left( 2^{5R} \right)$.
	Furthermore, the number of nodes in the $R$-hop neighborhood of any node node $v \in \Gt$ is bounded from above by $O\left( 2^{5R} \right)^R < \Delta^{\alpha^2 /8} \ll n^{\alpha}$.
\end{lemma}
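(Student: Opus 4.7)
The plan has two steps: first bound the maximum degree of $\Gt$ by $O(2^{5R})$, and then derive the $R$-hop neighborhood bound via the standard branching estimate, plugging in $R = \alpha\sqrt{\log \Delta}/10$.

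For the degree bound, I would split by node type in $\Gt$. By construction, every virtual copy of a heavy relevant node has exactly one edge (to the unique light-good neighbor it was spawned from), so its degree in $\Gt$ equals $1$. The only nontrivial case is a light good relevant node $v$: a neighbor $w$ of $v$ in $G$ contributes at most one edge incident to $v$ in $\Gt$, and it contributes an edge only if $w$ is relevant (either as a light-good-relevant endpoint of an actual edge, or as a heavy-relevant node whose copy attached to $v$ is included). Hence $\deg_{\Gt}(v)$ is bounded by the number of relevant neighbors of $v$ in $G$.

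Now fix the history through the end of iteration $t-1$, which freezes all values $\{p_{t-1}(u)\}_{u \in V}$. A neighbor $u$ is relevant iff at least one of the $kR$ sampling variables $r^{j}_{i-1}(u)$ falls below $p_{t-1}(u)\cdot 2^{R}$ or one of the $R$ marking variables $r^{m}_{i}(u)$ falls below $p_{t}(u)\cdot 2^{R}\leq 2p_{t-1}(u)\cdot 2^{R}$. A union bound over these $O(kR)$ events, with $k = \Theta(\log \Delta)$, yields
\[
\pr[u \text{ relevant}] \;\leq\; O(R \log \Delta) \cdot 2^{R} \cdot p_{t-1}(u).
\]
Since $v$ is light we have $\sum_{u \in N(v)} p_{t-1}(u) = d_{t-1}(v) < 2^{2R+1}$, so the expected number of relevant neighbors of $v$ is at most $O(R\log\Delta \cdot 2^{3R})$. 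Crucially, after conditioning on the history, the relevance indicators of distinct neighbors depend on the disjoint, independently drawn tapes $\bar{r}_{i}(u)$ and are therefore mutually independent, so a Chernoff bound gives that $v$ has at most $O(R\log\Delta\cdot 2^{3R})$ relevant neighbors with probability $1 - 1/\poly(\Delta)$. Because $R = \alpha\sqrt{\log\Delta}/10$ grows faster than $\log\log\Delta$, we have $R\log\Delta \ll 2^{2R}$, so $O(R\log\Delta\cdot 2^{3R}) \leq 2^{5R}$ for large enough $\Delta$. A union bound over all nodes completes the degree claim.

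For the neighborhood size, the ball of radius $R$ in any graph of maximum degree $D$ contains at most $1 + D + D(D-1) + \cdots + D(D-1)^{R-1} \leq D^{R+1}$ vertices, so the $R$-hop neighborhood in $\Gt$ has size at most $O(2^{5R})^{R+1} = 2^{5R^{2} + O(R)}$. Substituting $R = \alpha\sqrt{\log\Delta}/10$ gives $5R^{2} = \alpha^{2}\log\Delta /20$ and the lower-order term contributes only $o(\log\Delta)$ to the exponent, so the bound is $\Delta^{\alpha^{2}/20 + o(1)} \leq \Delta^{\alpha^{2}/8}$, which is at most $n^{\alpha^{2}/8} < n^{\alpha}$ since $\alpha \in (0,1)$ and $\Delta \leq n$. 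The only delicate point is the independence step for the Chernoff bound, which is why I would explicitly condition on the state at the end of iteration $t-1$ before invoking it; once that is in place the rest is routine.
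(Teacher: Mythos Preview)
Your argument is correct but differs from the paper's. The paper exploits that every non-copy vertex of $\Gt$ is not just light but also \emph{good}, i.e.\ $\hat{d}_{i-1}(u)\le 4\cdot 2^{3R}$ for all $i\in[t,t']$; from this it reads off \emph{deterministically} that over the $k=\Theta(\log\Delta)$ repetitions and the $R$ iterations, $u$ has at most $O(R\log\Delta\cdot 2^{3R})=O(2^{5R})$ sampled or marked neighbors, with no new randomness invoked (the probabilistic work was already done in \Cref{lemma:MISsparsify}). You instead use only the ``light'' part, $d_{t-1}(v)<2^{2R+1}$, and run a fresh Chernoff bound on the number of relevant neighbors; the union bound $\Pr[u\text{ relevant}]\le O(R\log\Delta)\cdot 2^{R}\cdot p_{t-1}(u)$ and the conditional-independence step are both fine, but the outcome is only a with-high-probability statement, whereas the lemma is stated---and used downstream, e.g.\ in \Cref{lemma:learnnbh}---as a deterministic property of the constructed graph $\Gt$. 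What your route buys is that it sidesteps any subtlety in passing from the median bound on $\hat{d}$ to a bound on the total count of relevant neighbors; what the paper's route buys is a clean separation in which randomness is spent once in \Cref{lemma:MISsparsify} and the present lemma is purely combinatorial. Your $R$-hop calculation matches the paper's.
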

\begin{proof}
	By definition, all (copies of) heavy nodes in $\Gt$ have degree exactly $1$.
	Since we only picked good light nodes $u$, we have that $\hat{d}_{i}(u) \leq 4\cdot 2^{3x}$ for all $i \in [t, t']$.
	Let $C$ be the constant from Algorithm~\ref{alg:localmis}.
	Thus, summing over all repetitions of the sampling, the number of sampled and marked neighbors of $u$ is bounded from above by $12 \cdot C \log \Delta \cdot \hat{d}_{i}(u)$ for any iteration $i$.
	Summing up over all iterations, for a sufficiently large $\Delta$, we can bound the number of neighbors of $u$ by
	\[
		O\left( \frac{1}{\alpha} \cdot \sqrt{ \log \Delta} \right) \cdot O(\log \Delta) \cdot \hat{d}_{i}(u) = O\left( 2^{5R} \right) = O\left( 2^{\alpha \cdot \sqrt{\log \Delta}/2} \right) \ .
	\]
	For the second claim, the number of neighbors in the $R$-hop neighborhood of any node $v$ is at most 
	\[
		O\left( 2^{5R} \right)^{R} = O\left( 2^{\alpha \cdot \sqrt{\log \Delta} / 2} \right)^{\alpha \cdot \sqrt{\log \Delta}/10} = O\left(2^{\alpha^2 \cdot (1/20) \log \Delta} \right) \ll \Delta^{\alpha^2 / 8} \ . \qedhere
	\]
\end{proof}

\begin{lemma}
	Consider a phase $[t, t']$ of length $R = \alpha \cdot \sqrt{\log \Delta}/10$. 
	If node $v$ learns its $R$-hop neighborhood in $\Gt$, it can simulate its behavior in iterations in $[t, t']$.
	In particular, node $v$ learns $p_{t'}(v)$ and whether it joined the MIS or not.
	\label{lemma:localcorrectness}
\end{lemma}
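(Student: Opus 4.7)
The plan is to prove this by induction on the iteration index $i \in \{t, t+1, \ldots, t'\}$ within the phase. The inductive claim is that after simulating the first $j$ iterations of the phase (so $i = t + j - 1$), node $v$ has correctly computed $p_i(v)$, its marking bit, and whether it has joined the MIS, using only its $j$-hop neighborhood in $\Gt$ together with the random-bit labels attached to those vertices. Specializing to $j = R$ yields the lemma. The starting data is $p_{t-1}(v)$, which is the known phase-boundary state, plus the random vectors $\bar r_i(\cdot)$ carried as labels on every vertex of $\Gt$, as described just before Observation~\ref{obs:construct}.

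For the base case I would show that $\hat d_{t-1}(v)$ can be reconstructed from the $1$-hop neighborhood of $v$ in $\Gt$, by partitioning the neighbors of $v$ in $G$ into three categories and handling each separately. (i) An \emph{irrelevant} neighbor $u$ satisfies $r^{\,j}_{t-1}(u) \ge p_{t-1}(u)\cdot 2^{\alpha \sqrt{\log \Delta}/10} \ge p_{t-1}(u)$, so it is never sampled in the phase and its omission from $\Gt$ is harmless. (ii) A \emph{relevant heavy} neighbor is represented by a virtual degree-$1$ copy attached to $v$ in $\Gt$; such a node stalls throughout the phase, so $p_i(u) = p_{t-1}(u)/2^{\,i-t+1}$ and the sampling outcomes of $u$ can be read directly from the label. (iii) A \emph{relevant light good} neighbor appears in $\Gt$ as an ordinary neighbor of $v$ with its random-bit label. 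The inductive step is the same case analysis applied to each light good neighbor $u \in N_{\Gt}(v)$: by induction on depth, $u$ has already computed $p_{i-1}(u)$ from its own $(j-1)$-hop neighborhood in $\Gt$, which sits inside $v$'s $j$-hop neighborhood. Once all the $p_{i-1}$-values are in hand, $v$ rebuilds $\hat d_{i-1}(v)$, applies the update rule of Algorithm~\ref{alg:localmis} to obtain $p_i(v)$, compares $r^{\,m}_i(v)$ against $p_i(v)$ to decide marking, and inspects its marked neighbors (stalled heavy copies never mark themselves) to decide MIS-joining.

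The main obstacle I foresee is the two events in which $\Gt$ disagrees with the true execution of Algorithm~\ref{alg:localmis}: a \emph{bad} light neighbor of $v$ is excluded from $\Gt$ by construction, and a truly heavy node whose estimator $\hat d_{t-1}(u)$ happens to fall below the stall threshold would not actually stall. I would handle these by interpreting the lemma in the high-probability regime: Lemma~\ref{lemma:MISsparsify} bounds the probability that a light node is bad by $e^{-2^{2R}} \ll 1/\Delta^{C}$, and a symmetric Chernoff argument on the median estimator of a heavy node yields the same bound for the failure-to-stall event. A union bound over the at most $\Delta^{\alpha^2/8}$ vertices in $v$'s $R$-hop neighborhood of $\Gt$ (Lemma~\ref{lemma:degsize}) keeps the total failure probability comfortably below $1/n^{10}$. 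Outside of this negligible event, the $R$-hop view of $\Gt$ is a faithful transcript of the phase, and the recursive simulation described above is correct and deterministic; in particular $v$ learns $p_{t'}(v)$ and its MIS status, as claimed.
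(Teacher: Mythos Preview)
Your induction on the iteration index, together with the case split into irrelevant, relevant-heavy (virtual copy), and relevant-light-good neighbors, is exactly the structure of the paper's proof; the paper just carries out the case analysis less explicitly and folds the $j$-hop locality claim into the induction hypothesis implicitly rather than stating it.

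Where you diverge is in flagging and handling the two failure events: a bad light neighbor of $v$ is absent from $\Gt$ even though it might be sampled or marked, and a heavy neighbor might (with tiny probability) have $\hat d_{t-1}$ below the stall threshold and hence not actually stall in Algorithm~\ref{alg:localmis}. The paper's proof does not address either point; it implicitly reads the lemma as a purely deterministic statement about the simulation that $\Gt$ defines --- treating each virtual heavy copy as stalling by construction --- and silently defers the question of whether this simulation coincides with the true run of Algorithm~\ref{alg:localmis} on $G$ to \Cref{lemma:MISsparsify} (and an analogous Chernoff estimate for heavy nodes), which are invoked separately in the MPC and LCA theorems. Your probabilistic reading is arguably the more honest one. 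One small correction: your union bound should range over the $G$-neighbors of the nodes in $v$'s $R$-hop $\Gt$-ball, not just over the $\Gt$-nodes themselves, since bad light nodes lie outside $\Gt$ by definition; this costs an extra factor of~$\Delta$, which is harmless against the $e^{-2^{2R}}$ bound.
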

\begin{proof}
	We argue by induction on the iteration index that the behavior of $v$ can be derived solely based on the nodes in $\Gt$.
	Consider first the base case, i.e., iteration $t$.
	From the labels of its neighbors, node $v$ can determine $p_{t - 1}(u)$ for each neighbor $u$.
	Combined with the random bits $\bar{r}_{t}(u) = (r^1_{t-1}, \ldots, r^k_{t-1}, r^m_{t})$, node $v$ can determine $\hat{d}_{t - 1}(v)$.
	If $\hat{d}_{t - 1}(v) > 2^{2R}$, node $v$ knows that it is stalling and hence, $p_{t'}(v) = 2^{-R} \cdot p_{t - 1}(v)$ and it will not join the MIS.
	
	Thus, we focus on non-stalling nodes for the rest of the proof.	
	By construction of $\Gt$, if $u$ is sampled or marked in iteration $t$, it belongs to $\Gt$.	
	Otherwise, the node $u$ has no impact on the behavior of $v$.
	From the $p_{t - 1}(u)$ values of its sampled neighbors, node $v$ can derive $\hat{d}_{t - 1}(v)$ and further $p_{t}(v)$.
	Once $v$ knows $p_{t}(v)$, it can derive whether it gets marked or not.
	If any neighbor of $v$ is marked, then $v$ cannot join the MIS.
	Conversely, if no neighbor is marked and $v$ is, then $v$ joins the MIS.	
	
	Assume then that the claim holds for iteration $t \leq i < t'$.
	Due to the construction of $\Gt$ $v$ can learn the random bits by looking at the labels of its neighbors.
	By the induction hypothesis, in iteration $i$, node $v$ knows the $p_{i - 1}(u)$ values of all of its neighbors and whether they joined the MIS or not.
	In case $v$ or a neighbor joined the MIS, we are done.
	With the knowledge of the random bits $\bar{r}_i(u)$ and the $p_{i - 1}(u)$ values of its neighbors, it can simulate all the repetitions of the sampling process in iteration $i$.
	Thereby, $v$ can derive whether $\hat{d}_{i - 1}(v) \geq 2$ and set $p_{i}(v)$ accordingly.
	Thus, $v$ can determine whether it is marked or not and simulate round $2$ of iteration $i$.
\end{proof}

\subsection{Simulation in the Low Memory MPC Model}
In this section, we explain how by simulating the above algorithm, we can prove \Cref{thm:mainMPC} for the MIS problem. The extensions to the other problems follow by simple adjustments and known methods and are discussed in \Cref{subsec:implications}. 

\begin{remark}
	In case the maximum degree $\Delta > n^{\alpha}$, one needs to pay attention to how the model takes care of distributing the input.
	One explicit way is to split the high degree nodes into many copies and distribute the copies among many machines.
	For the communication between the copies, one can imagine a (virtual) balanced tree of depth at most $1 / \alpha$ rooted at one of the copies.
	Through this tree, the copies can exchange information in $O(1/\alpha)$ communication rounds.
	Another subtlety is that without care, communicating through this tree might overload the local memories of the machines.
	In our algorithms, the messages are very simple and hence, do not pose a problem.
	For the sake of simplicity, the write-up in this section assumes that all edges of each node fit within one machine's memory and in particular, $\Delta < n^{\alpha}$. The algorithm can be extended easily to higher values of $\Delta$ by doing a $O(1/\alpha)$ rounds of communication atop the virtual tree mentioned above.
\end{remark}

Our low memory MPC algorithm performs $\log \log \Delta + 1$ steps, where in step $i = 1, 2, \ldots$ we execute Algorithm~\ref{alg:localmis} on the subgraph induced by nodes with degree at least $\Delta_i=\Delta^{2^{-i}}$. This ensures that after step $i$, all vertices of degree at least $\Delta^{2^{-i}}$ are removed, and therefore, the maximum degree in the remaining graph is at most $\Delta^{2^{-i}}$. Let $n_i$ be the number of nodes in the graph in step $i$.

\begin{lemma}
	Let $v$ be a node in the graph remaining in step $i$.
	Consider phase $s$ of Algorithm~\ref{alg:localmis}, which was run in step $i$, and let $\Gs$ be the corresponding sparsified graph.
	Each node $v$ in $\Gs$ can learn its $(\alpha \cdot \sqrt{2 \log \Delta_i}/10)$-hop neighborhood in $\Gs$ in the low memory MPC model in $O(\log \log \Delta_i)$ communication rounds.
	In particular, the $(\alpha \cdot \sqrt{2 \log \Delta_i}/10)$-hop neighborhood of a node in $\Gs$ in step $i$ fits into the memory of a single machine and the neighborhoods of all nodes in $\Gs$ fit into the total memory.
	\label{lemma:learnnbh}
\end{lemma}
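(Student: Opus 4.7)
The plan is to prove the lemma in three pieces: (i) construct $\Gs$ from purely local data, (ii) gather $R$-hop views via graph exponentiation, and (iii) verify the per-machine and total memory budgets. All three rest on results already established in the excerpt.

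\textbf{Constructing $\Gs$ in $O(1)$ rounds.} By Observation~\ref{obs:construct}, $\Gs$ is completely determined by the values $p_t(v)$ at the start of phase $s$ together with the random bits $\bar r_i(v)$ for $i\in[t,t']$; both live on the machine hosting $v$, the random bits having been pre-drawn at the very beginning of the execution. A single round of communication along the edges of the input subgraph lets every node read the $p$-values and random bits of its original neighbors, and thereby identify which ones are relevant, light/heavy, and good---equivalently, which ones appear as its neighbors in $\Gs$, with the appropriate labels attached.

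\textbf{Graph exponentiation.} Given the $1$-hop neighborhoods in $\Gs$, I apply the standard doubling primitive: in round $j$, every node $v$ broadcasts its currently-known $2^{j-1}$-hop neighborhood of $\Gs$ (vertices, their labels, and the edges among them) to every vertex in that neighborhood, so that after the exchange $v$ knows its $2^{j}$-hop neighborhood. Iterating for $\lceil\log R\rceil = O(\log\log \Delta_i)$ rounds, with $R = \alpha\sqrt{2\log\Delta_i}/10$, delivers the required $R$-hop view. Correctness is immediate: doubling any known neighborhood yields one of twice the radius.

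\textbf{Memory and the main obstacle.} The degree bound of Lemma~\ref{lemma:degsize} applies to $\Gs$ (with phase length $\alpha\sqrt{\log\Delta_i}/10$), and re-running its calculation with the slightly larger radius $R = \alpha\sqrt{2\log\Delta_i}/10$ gives an $R$-hop size bound of $\Delta_i^{O(\alpha^2)} \ll n^\alpha$, so one neighborhood (vertices plus their $O(\log n)$-bit labels) comfortably fits on a single machine. Summing over at most $n$ nodes of $\Gs$ gives total storage $\tilde O(n\cdot\Delta_i^{O(\alpha^2)})$, which is inside the total memory $\tilde O(n\Delta_i)$ available in step $i$. Since horizon sizes grow monotonically in $j$ and the final size is polynomially smaller than $n^\alpha$, every intermediate doubling round also fits in both budgets. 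The one bit of real bookkeeping I foresee is the per-round communication cap, since in a doubling round each node sends roughly the square of its current horizon size; however, the polynomial slack between $\Delta_i^{O(\alpha^2)}$ and $n^\alpha$, combined with a standard sorting-based MPC routing primitive to balance the message delivery across the $\tilde O(n\Delta_i/n^\alpha)$ machines, makes this routine.
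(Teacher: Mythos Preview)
Your approach matches the paper's: build $\Gs$ from $1$-hop data, then graph-exponentiate for $O(\log\log\Delta_i)$ rounds, bounding neighborhood sizes via Lemma~\ref{lemma:degsize}.

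The only real gap is in the total-memory check. You assert that the available total memory in step $i$ is $\tilde O(n\Delta_i)$ and that the demand is $\tilde O(n\cdot\Delta_i^{O(\alpha^2)})$, but the total memory in the MPC model is $\tilde O(m)$, which need not be $\tilde O(n\Delta_i)$, and the number of nodes that actually need to store a neighborhood is $n_i$ (the nodes participating in step $i$), not $n$. The missing observation---which the paper invokes explicitly---is that by the design of step $i$ every participating node has degree at least $\Delta_i$ in the current graph; hence the edge count satisfies $m\ge n_i\Delta_i/2$, so the total memory is at least $\Omega(n_i\Delta_i)$, which dominates the required $n_i\cdot\Delta_i^{O(\alpha^2)}\cdot\operatorname{polylog}\Delta$. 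With $n$ replaced by $n_i$ and this minimum-degree fact supplied, your argument goes through.
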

\begin{proof}
	Consider the following well-known and simple graph exponentiation procedure~\cite{Lenzen:2010, Ghaffari2017}.
	In every communication round, each node $u$ informs its neighbors of the nodes contained in $N(u)$.
	Then, every node can add the new nodes it learned about in its neighborhood by adding a virtual edge to each such node.
	This way, in round $j$ of the procedure, node $v$ will be informed about all nodes and edges in its $2^j$-hop neighborhood.
	Thus, every node learns its $\alpha \cdot (\sqrt{2 \log \Delta_i}/10)$-hop neighborhood after at most $O(\log \log \Delta_i)$ rounds.	

	Due to the design of the algorithm, we have that the minimum degree of a node $v$ considered in step $i$ is at least $\Delta_i$ and hence, the total memory we have is at least $O(n_i \cdot \Delta_i)$.
	Furthermore, the maximum degree is at most $\Delta^2_i$.	
	By \Cref{lemma:degsize}, this implies that the $\left( \alpha \cdot \sqrt{\log \Delta^2_i}/10 \right)$-hop neighborhood of any single node in $\Gs$ contains at most $\Delta_i^{2 \alpha^2 / 8} = \Delta_i^{\alpha^2/4}$.
	
	Hence, we need to store at most $\Delta_{i}^{\alpha^2/4}$ virtual edges per node per phase.
	Since $\Delta_{i}^{\alpha^2/4} \ll n^{\alpha}$, the neighborhood together with the virtual edges clearly fit into the memory of a single machine.
	Combined with the labels, the total memory required is then $n_i \cdot \Delta_{i}^{\alpha^2/4} \cdot O(\log^3 \Delta) = O\left( n_i \cdot \Delta_{i}^{\alpha^2/2} \right)$.
	For the next phase, we can re-use the same memory.
	We conclude that the total memory suffices to store a copy of the $(\alpha \cdot \sqrt{2 \log \Delta_i}/10)$-hop neighborhood of every node in any phase in step $i$.
\end{proof}

\begin{theorem}
	There is an algorithm that, with probability $1 - 1/n^{10}$, computes a Maximal Independent Set in the low memory MPC model that requires $O(\sqrt{\log \Delta} \log \log \Delta + \sqrt{\log \log n})$ communication rounds.
\end{theorem}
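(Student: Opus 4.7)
The plan is to execute the $\lceil \log\log\Delta\rceil + 1$-step cascade described immediately before the theorem. In step $i$, restrict attention to the subgraph induced by the vertices whose current degree is at least $\Delta_i := \Delta^{2^{-i}}$ and run \Cref{alg:localmis} on it. By \Cref{thm:beeping}, $O(\log \Delta_{i-1})$ local iterations suffice (with failure probability at most $1/n^{10}$) to ensure that every such high-degree vertex is placed in the MIS or removed, so that the residual graph passed to step $i+1$ has maximum degree at most $\Delta_i$.

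Within step $i$, I split those $O(\log \Delta_{i-1})$ iterations into phases of length $R_i = \Theta(\sqrt{\log \Delta_{i-1}})$, giving $O(\sqrt{\log \Delta_{i-1}})$ phases per step. Each phase is simulated in the MPC model by first using one round to assemble the sparsified graph $\Gs$ (\Cref{obs:construct}), then letting every vertex learn its $R_i$-hop neighborhood in $\Gs$ through graph exponentiation in $O(\log R_i) = O(\log \log \Delta_{i-1})$ MPC rounds (\Cref{lemma:learnnbh}), and finally having each machine locally replay the entire phase for its assigned vertices (\Cref{lemma:localcorrectness}). Thus step $i$ costs $O(\sqrt{\log \Delta_{i-1}} \cdot \log \log \Delta_{i-1})$ MPC rounds, and since $\log \Delta_{i-1}$ halves with each step, the sum over all $\lceil \log \log \Delta \rceil$ steps is a geometric series dominated by its first term, namely $O(\sqrt{\log \Delta}\cdot\log\log\Delta)$.

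The extra $O(\sqrt{\log \log n})$ term will cover cleanup of the residue. After all steps, \Cref{thm:beeping}(1,2) guarantees that with probability at least $1 - 1/n^{10}$ the surviving set $B$ satisfies $|B| \leq n/\Delta^{10}$ and every component of $G[B]$ has size $\poly(\log n)$; moreover by \Cref{thm:beeping}(3) the initial phases already make $B$ empty whenever $\Delta > n^{\alpha/4}$. I finish with one more graph-exponentiation pass on $G[B]$, so that each surviving vertex gathers its entire (polylogarithmic-sized) component into its machine, after which a local MIS computation yields the final output. Since components have polylogarithmic diameter, $O(\sqrt{\log \log n})$ doubling rounds comfortably suffice under the $n^{\alpha}$ memory budget.

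The main obstacle is the memory accounting at every step: one must verify that the $R_i$-hop neighborhoods of $\Gs$ collected simultaneously by all vertices assigned to a single machine fit in $n^{\alpha}$ bits, and that the total memory across all machines is also sufficient. This is precisely the role of the schedule $\Delta_i = \Delta^{2^{-i}}$, calibrated against the bound $O(2^{5R_i})^{R_i} \ll \Delta_{i-1}^{\alpha^2/4}$ from \Cref{lemma:degsize}, which keeps every step's exponentiation safely within budget. The only remaining subtlety, already folded into \Cref{lemma:localcorrectness} by treating the pre-drawn randomness $\bar r_t(v)$ as a vertex label in $\Gs$, is ensuring that machines simulating adjacent vertices agree on all stochastic outcomes. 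High-probability correctness then follows by a union bound over the $\poly(n)$ failure events spread across all steps, phases, and vertices.
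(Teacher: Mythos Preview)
Your cascade-and-exponentiate structure matches the paper almost exactly: the $\log\log\Delta$ steps, the phase decomposition, the use of \Cref{obs:construct}, \Cref{lemma:learnnbh}, \Cref{lemma:localcorrectness}, and the geometric summation are all as in the paper. The memory accounting sketch is also fine.

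The gap is in your cleanup. Two issues:
\begin{enumerate}
\item \Cref{thm:beeping}(1) gives component size $O(\Delta^4\log_\Delta n)$, not $\poly(\log n)$. When $\Delta$ sits in the range $\poly(\log n)\ll\Delta\le n^{\alpha/4}$, the components can be as large as $n^{\alpha}$, so ``polylogarithmic size/diameter'' is simply false. The case split via \Cref{thm:beeping}(3) only tells you the component fits in one machine, not that it is small.
\item Even granting $\poly(\log n)$-size components (say in the final step where the residual degree is $O(1)$), gathering an entire component by doubling costs $O(\log(\text{diameter}))=O(\log\log n)$ MPC rounds, not $O(\sqrt{\log\log n})$. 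With your approach you would only prove the bound $O(\sqrt{\log\Delta}\log\log\Delta+\log\log n)$, which is weaker than the stated theorem.
\end{enumerate}

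The paper does \emph{not} gather whole components. Instead it invokes the deterministic $\mathsf{LOCAL}$ MIS algorithm from \cite{Ghaffari2016} (based on Panconesi--Srinivasan network decomposition), which on an $N$-vertex graph runs in $2^{O(\sqrt{\log N})}$ rounds; on components of size $O(\Delta^4\log n)$ with $\Delta^4\le n^{\alpha}$ this is $2^{O(\sqrt{\log\log n})}$ rounds. Graph exponentiation then collects only the $2^{O(\sqrt{\log\log n})}$-hop neighborhood of each surviving vertex, costing $O(\sqrt{\log\log n})$ MPC rounds, and the total memory $n^*\cdot O((\Delta^4\log n)^2)=\tilde O(n)$ is within budget since $n^*\le n/\Delta^{10}$. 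This deterministic finishing step is the missing ingredient that yields the $\sqrt{\log\log n}$ rather than $\log\log n$ additive term.
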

\begin{proof}
By \Cref{lemma:learnnbh}, we can simulate one phase of Algorithm~\ref{alg:localmis} in $O(\log \log \Delta)$ communication rounds.
Hence, we can simulate all the $O(1/\alpha) \sqrt{\log \Delta}$ phases in $O((1/\alpha) \sqrt{\log \Delta} \log \log \Delta)$ rounds.
By \Cref{thm:beeping} and \Cref{lemma:MISsparsify}, we get that with probability at most $1/\Delta^{C - 2}$, a node survives, i.e., neither it nor at least one of its neighbors is part of the MIS after executing our simulation.
Hence, we can apply \Cref{thm:beeping} and obtain that the connected components induced by the surviving nodes are of size at most $O(\Delta^4 \cdot \log n)$ and the number of the surviving nodes is at most $n^* = n/\Delta^{10}$.
By \Cref{lemma:learnnbh}, we do not break our memory restrictions.

We apply the graph exponentiation procedure once more and simulate the deterministic algorithm for MIS designed for these small components by Ghaffari~\cite{Ghaffari2016}.
This simulation requires each node to know its $2^{O(\sqrt{\log \log n})}$-hop neighborhood.
Notice that since the components are of size $O(\Delta^4 \cdot \log n)$, we require $n^* \cdot O((\Delta^4 \cdot \log n)^2) = \tilde{\Theta}(n)$ total memory to store these neighborhoods\footnote{Notice that if $\Delta^4 > n^{\alpha}$, i.e., a component does not fit into the memory of a single machine, we have by \Cref{thm:beeping} that no node survives the $\Theta(\log \Delta)$ rounds of Algorithm~\ref{alg:localmis}.}.
Hence, we obtain a runtime of $O(\sqrt{\log \log n})$ for the deterministic part.

Putting the randomized and the deterministic part together and summing up over all steps results in a runtime of
\begin{align*}
		& \sum_{i = 1}^{\log \log \Delta + 1} O \left( \frac{1}{\alpha} \sqrt{\log \Delta_i} \cdot \log \log \Delta_i \right) + O\left( \log 2^{\sqrt{\log \log n}} \right)  \\
								& \phantom{++} = O\left( \sum_{i = 1}^{\log \log \Delta + 1} \sqrt{2^{-i} \log \Delta} \cdot \log \log 2^{-i} \Delta \right) + O\left(\sqrt{\log \log n} \right) \\
								& \phantom{++}	 = O\left( \sqrt{\log \Delta} \cdot \log \log \Delta + \sqrt{\log \log n} \right) \ . \qedhere
\end{align*}
\end{proof}

\subsection{Simulation in the LCA Model}
Similarly to our MPC algorithm, our LCA algorithm for MIS simulates phases of Algorithm~\ref{alg:localmis} by creating the sparsified graph $\Gi$ for every phase $i$.
For the purposes of our LCA, we can set the length of each phase to be $\sqrt{\log \Delta} / 10$, i.e., omit the $\alpha$ factor.
It is convenient to think about simulating a phase as creating an oracle that, for node $v$, answers the following query:
What is the state of node $v$ in the end of phase $i = [t, t']$?
In particular, did $v$ join the MIS and what is $p_{t'}(v)$.

\paragraph{Oracle $\orc_0(v)$.}
From its $1$-hop neighborhood, node $v$ can derive whether $u$ belongs to $H_0$ and whether it is stalling or not in phase $0$.
Similarly, $v$ can deduce its $1$-hop neighborhood in $H_0$ by querying every node in its neighborhood in the original graph. 
Then, iteratively, $v$ can learn its $(x + 1)$-hop neighborhood in $H_0$ by querying all the neighbors of the nodes within the $x$-hop neighborhood in $H_0$.
In particular, we do not query the neighbors of nodes that are not part of the $x$-hop neighborhood of $v$ in $H_0$.
Once the $(\sqrt{\log \Delta}/10)$-hop neighborhood is learned, we can simulate the behavior of $v$ in phase $0$ by \Cref{lemma:localcorrectness}.

\paragraph{Oracle $\orci(v)$.}
Consider phase $i > 0$ with iterations $[t, t']$.
First, we query $\orc_{i - 1}(v)$ and each neighbor $u$ of $v$, we query the state of $u$ from $\orc_{i - 1}(u)$.
In particular, we learn $p_{t - 1}(v)$, $p_{t - 1}(u)$ for all neighbors, the random\footnote{Notice that labeling the graph explicitly with the random bits $\bar{r}_t(u)$ for an LCA is not necessary due to the shared randomness, i.e., the random bits are availably by definition.} bits $\bar{r}_{t}(u)$, and whether $v$ or any neighbor $u$ joined the MIS.
From this information, we are able to derive $\hat{d}_{t-1}(v)$ and whether $v$ belongs to $\Gi$.
Then, we use the same procedure to figure out which neighbors of $v$ belong to $\Gi$.
Once we have learned the $1$-hop neighbors of $v$ in $\Gi$, we iteratively learn their neighbors in $\Gi$ until we have learned the $(\sqrt{\log \Delta}/10)$-hop neighborhood of $v$ in $\Gi$.
Once the $(\sqrt{\log \Delta}/10)$-hop neighborhood is learned, we simulate phase $i$ on the graph $\Gi$.

Let us denote the query complexity of simulating phase $i$ by $Q(i)$.
\begin{lemma}
	The oracle $\orci(v)$ for phase $i$ for node $v$ requires at most $Q(i - 1) \cdot \Delta^{9/8}$ queries.
	\label{lemma:onephasequeries}
\end{lemma}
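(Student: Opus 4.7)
The plan is to account for the queries made by $\orci(v)$ as the product of (i) the cost of each recursive call to the previous-phase oracle $\orc_{i-1}$, and (ii) the total number of such recursive calls, which I will bound by the size of the $R$-hop neighborhood of $v$ in $\Gi$ times $\Delta$. Here $R = \sqrt{\log \Delta}/10$ is the phase length used for the LCA simulation.

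First I would recall how $\orci(v)$ proceeds. To discover whether any particular node $u$ belongs to $\Gi$ (i.e., whether it is relevant and either light-and-good or a heavy copy attached to a light-and-good neighbor), the oracle must know $p_{t-1}(u)$ together with the random vector $\bar{r}_t(u)$ and $u$'s status at the end of phase $i-1$; the only way to obtain these is to invoke $\orc_{i-1}(u)$, at cost $Q(i-1)$. To then expand from the known $x$-hop neighborhood of $v$ in $\Gi$ to the $(x+1)$-hop one, for each discovered node $w$ at hop $x$ we must call $\orc_{i-1}$ on every one of $w$'s (at most $\Delta$) neighbors in the base graph $G$, because $w$'s $\Gi$-neighbors are a subset of its $G$-neighbors that can only be identified after reading their state.

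Second, I would bound the number of recursive calls. By \Cref{lemma:degsize}, applied with the LCA setting $\alpha = 1$, the maximum degree in $\Gi$ is $O(2^{5R})$ and so the $x$-hop neighborhood of $v$ in $\Gi$ has at most $O(2^{5R})^{x}$ nodes. Summing over $x = 0, 1, \dots, R-1$ and multiplying by the at-most-$\Delta$ base-graph neighbors inspected at each discovered node, the total number of invocations of $\orc_{i-1}$ is at most
\[
    1 + \Delta \cdot \sum_{x=0}^{R-1} O(2^{5R})^{x} \;=\; O\!\left( \Delta \cdot 2^{5R^{2}} \right) \;=\; O\!\left( \Delta^{1 + 1/20} \right),
\]
using $5R^2 = \log \Delta / 20$. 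Multiplying this by the per-call cost $Q(i-1)$ gives a total query complexity of $O(Q(i-1) \cdot \Delta^{21/20})$, which is at most $Q(i-1) \cdot \Delta^{9/8}$ for large enough $\Delta$, as desired.

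The only subtle point, which I would make explicit in the proof, is that at each layer of the breadth-first expansion one must query the full $G$-neighborhood of every newly discovered node in $\Gi$, not just its $\Gi$-neighborhood; this is what contributes the extra factor of $\Delta$ beyond the size of the $R$-hop ball in $\Gi$. Everything else is a geometric-series calculation controlled by the degree bound from \Cref{lemma:degsize}, and the slack between $21/20$ and $9/8$ absorbs all hidden constants.
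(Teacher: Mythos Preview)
Your proof is correct and takes essentially the same approach as the paper's: both count the calls to $\orc_{i-1}$ as (size of the $R$-hop neighborhood of $v$ in $\Gi$) times $\Delta$, invoke \Cref{lemma:degsize} (with $\alpha = 1$) to bound the first factor, and multiply by $Q(i-1)$. The paper simply quotes the bound $\Delta^{1/8}$ from \Cref{lemma:degsize} and writes $\Delta^{1/8}\cdot\Delta = \Delta^{9/8}$, whereas you recompute the neighborhood size from the degree bound $O(2^{5R})$ and obtain the tighter intermediate $\Delta^{21/20}$ before absorbing it into $\Delta^{9/8}$; this is the same argument with a bit more detail.
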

\begin{proof}
	According to the design of $\orci(v)$, we use the oracle $\orc_{i - 1}(u)$ to query each neighbor $u$ of every node in the $(\sqrt{\log \Delta}/10)$-hop neighborhood of $v$ in $H_{i - 1}$.
	By \Cref{lemma:degsize}, we get that the number of nodes whose states are queried from $\orc_{i - 1}$ is at most $\Delta^{1/8} \cdot \Delta = \Delta^{9/8}$, i.e., $Q(i) = Q(i - 1) \cdot \Delta^{9/8}$.
\end{proof}

\begin{theorem}
	There is an LCA that, with probability $1 - 1/n^{10}$, computes an MIS with query complexity $\Delta^{O(\sqrt{\log \Delta})} \cdot \log n$.
\end{theorem}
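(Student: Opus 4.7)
The plan is to bound the query complexity by unrolling the recursion from Lemma~\ref{lemma:onephasequeries} across the $T = O(\sqrt{\log \Delta})$ phases of Algorithm~\ref{alg:localmis}, and then to tack on the cost of a deterministic clean-up on the residual components guaranteed by Theorem~\ref{thm:beeping}. Operationally, when the LCA is queried at $v$, it first invokes $\orc_T(v)$ to determine whether $v$ joined the MIS, was eliminated by an MIS neighbor, or survived into the residual graph; if $v$ survives, the LCA proceeds to expose $v$'s entire residual component and runs a deterministic centralized MIS there.

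First I would establish the base case. To build $\orc_0(v)$ we read $v$'s $\Delta$ neighbors in $G$ and then iteratively expand in $H_0$ to radius $R=\sqrt{\log \Delta}/10$, paying $\Delta$ queries per newly touched vertex to recover its $H_0$-neighbors via the shared random tape. Lemma~\ref{lemma:degsize} (instantiated with $\alpha=1$, which is permissible for an LCA since no memory constraint is in force) bounds the touched set by $\Delta^{1/8}$, so $Q(0)=O(\Delta^{9/8})$. Unrolling the recurrence $Q(i) \leq Q(i-1)\cdot \Delta^{9/8}$ from Lemma~\ref{lemma:onephasequeries} over $T$ phases gives $Q(T) \leq \Delta^{(9/8)(T+1)} = \Delta^{O(\sqrt{\log \Delta})}$. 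The per-oracle correctness is exactly Lemma~\ref{lemma:localcorrectness}, and the residual structural guarantees come from Theorem~\ref{thm:beeping} together with the per-phase $1/\Delta^C$ badness bound of Lemma~\ref{lemma:MISsparsify}.

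Next I would handle the clean-up. By Theorem~\ref{thm:beeping}, with probability at least $1-1/n^{10}$ the surviving subgraph after $\Theta(\log \Delta)$ iterations is a union of components of size $O(\Delta^4 \log n)$. When $\orc_T(v)$ reports that $v$ survives, the LCA performs a BFS in the residual graph rooted at $v$, calling $\orc_T$ at every touched vertex to decide whether that vertex also survives; this costs $O(\Delta^4 \log n)\cdot \Delta = O(\Delta^5 \log n)$ further queries. On the fully exposed small component it then runs any deterministic centralized MIS routine locally. Summing the two contributions yields $\Delta^{O(\sqrt{\log \Delta})}+ O(\Delta^5 \log n) = \Delta^{O(\sqrt{\log \Delta})}\cdot \log n$. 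The overall success probability follows from a union bound over the at most $n$ vertices an adversary can query, the $O(\sqrt{\log \Delta})$ phases, and the $1/\Delta^C$ per-phase badness, combined with the $1-1/n^{10}$ structural guarantee.

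The main subtlety, and the only place where care is really needed, is cross-query consistency: distinct LCA invocations must agree on a single MIS. The plan is to resolve this exactly as in Section~\ref{sec:sparsegraph}, by fixing one global random tape $\{\bar r_t(u)\}_{u,t}$ up front; then every instantiation of every $\orc_i$ at every queried vertex reads identical bits, constructs the identical $H_i$, and makes identical per-vertex decisions, so the residual graph is itself a deterministic object and the clean-up agrees automatically across queries. With this in place, the complexity bookkeeping above closes the theorem.
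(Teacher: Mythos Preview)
Your proposal is essentially the paper's own proof: unroll the recurrence of Lemma~\ref{lemma:onephasequeries} across the $O(\sqrt{\log\Delta})$ phases to get $Q(T)=\Delta^{O(\sqrt{\log\Delta})}$, then explore the small residual component promised by Theorem~\ref{thm:beeping}. The consistency discussion via a fixed global random tape is exactly what Section~\ref{sec:sparsegraph} does.

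One accounting slip, though: in your clean-up step you say the BFS over the residual component costs $O(\Delta^4\log n)\cdot\Delta=O(\Delta^5\log n)$ \emph{queries}, and then you add this to $\Delta^{O(\sqrt{\log\Delta})}$. But deciding whether a touched vertex $w$ survives requires invoking $\orc_T(w)$, and each such invocation already costs $Q(T)=\Delta^{O(\sqrt{\log\Delta})}$ base-graph queries, not $1$. So the clean-up is $O(\Delta^5\log n)\cdot Q(T)$, and the two contributions must be \emph{multiplied}, not summed---exactly as the paper writes $Q(x-1)\cdot\Delta^{O(1)}\cdot\log n$. This does not change the final $\Delta^{O(\sqrt{\log\Delta})}\cdot\log n$ bound, since $\Delta^{O(1)}$ is absorbed, but as written your arithmetic is off.
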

\begin{proof}
	Let $x$ be the number of phases of Algorithm~\ref{alg:localmis} and hence $x = O(\sqrt{\log \Delta})$.
	Using \Cref{lemma:onephasequeries} can bound the total query complexity of our simulation from above by
	\begin{align*}
		Q(x - 1) & = \Delta^{9/8} \cdot Q(x - 2) = \Delta^{9/8} \cdot \Delta^{9/8} \cdot Q(x - 3) = \ldots \\
			 & = \prod_{i = 0}^{T/x - 1} \Delta^{9/8} = \Delta^{O(\sqrt{\log \Delta})} \ .
	\end{align*}
	
	After running these phases, by \Cref{thm:beeping}, the remaining components are of size at most $O(\Delta^4 \cdot \log n)$.
	Thus, we can complete our LCA for node $v$ by learning all nodes in the corresponding component resulting in a total query complexity of $Q(x - 1) \cdot \Delta^{O(1)} \cdot \log n = \Delta^{O(\sqrt{\log \Delta})} \cdot \log n$.
\end{proof}

\subsection{Implications on Other Problems}
\label{subsec:implications}
\paragraph{Maximal Matching and $2$-Approximation of Minimum Vertex Cover:}
It is well-known that a maximal matching algorithm immediately implies a $2$-approximation of minimum vertex cover, simply by outputting all endpoints of the maximal matching. 
Next, we discuss how to adjust the MIS algorithm so that it solves maximal matching. 
Our idea follows the standard approach of running an MIS algorithm on the line graph of the input graph.
In the LCA setting, we can do this without any extra effort.

However, for the MPC setting, we need some more care with the memory restriction: We make each node simulate the behavior of its edges, without creating the line graph explicitly. At first glance, it might seem that we have a problem with the local memory constraints per machine, since every edge needs to learn about up to $\Delta$ elements in its neighborhood, which amounts to $\Delta^2$ per node and exceeds our memory limitation. To overcome this issue, we make two observations: (I) We can simulate a round of sampling on the line graph without breaking the memory limit, since this only requires counting the number of sampled neighbors per edge. This can be done by exchanging one small message per edge, since an edge has endpoints in at most $2$ machines. Hence, we can derive for each edge in the beginning of a phase, whether it is part of a graph $\Gs$ or not. (II) Once we focus on the sparsified graph, by Lemma~\ref{lemma:degsize}, the maximum degree of $\Gs$ is at most $\Delta^{\alpha^2 / 8}$.
Hence, each node can simulate all of its edges in $\Gs$, including learning the information about their $\Theta(\sqrt{\log \Delta} )$-neighborhood in $\Gs$.


\paragraph{A $(1+\eps)$-approximation of Maximum Matching:} By a method of Mcgregor~\cite{mcgregor2005finding}, one can compute a $(1+\eps)$ approximation of maximum matching, for any constant $\eps>0$, by a constant number of calls to a maximal matching algorithm on suitably chosen subgraphs (though the dependency on $\eps$ is super exponential). These subgraphs are in fact easy to identify, and can be done in $O(1)$ rounds of the $\mathsf{LOCAL}$ distributed model. Therefore, we can use the same method to extend our maximal matching algorithm to a $(1+\eps)$-approximation of maximum matching, in both MPC and LCA, without any asymptotic increase in our complexities. We note that a similar idea was used by \cite{czumaj2017round, assadi2017coresets, ghaffari2018improved} to transform constant approximation of maximum matching to a $(1+\eps)$-approximation.  

\section{An Improved LCA for MIS}\label{sec:ImprovedLCA}
In this section, we modify Algorithm~\ref{alg:localmis} in a way that admits a much more efficient simulation in the LCA model.
Next, we explain the structure of our modified algorithm in a recursive manner.

\subsection{Recursive Splitting to Subphases}
On the highest level, we can think of $T = \Theta(\log \Delta)$ iterations in Algorithm~\ref{alg:localmis} as a (very long) phase $s_0$ of length $T$.
In our modified algorithm, any node $u$ in phase $s_0$ that has $\hat{d}_0(u) > 2^{2T}$ is stalling\footnote{The highest level phase is degenerate in the sense that $\hat{d}_0$ is potentially never larger than $2^{2T} = \poly \Delta$. However, for the sake of presentation, it is convenient to start from the largest possible phase.} and thus, will not be marked and halves its $p_t(u)$ value in every iteration $t$ of $s_0$.
For the non-stalling nodes, we split the phase of $T$ iterations into two subphases of $T/2$ iterations.
In the subphases of length $T/2$, we adjust the threshold for stalling to $\hat{d} > 2^{2T/2} = 2^{T}$.
After recursively splitting the (sub-)phases $i$ times, we reach subphases of length $R = T / 2^i$.
In subphases of length $R$, node $u$ is stalling if $\hat{d}(u) > 2^{ 2R }$ in the first iteration of the phase.
The recursive splitting to subphases is continued until we hit a subphase length of $2\log \log \Delta < R \leq 4\log \log \Delta$.

\subsection{Bound on the Number of Iterations}
The estimation of $d_{t - 1}(v)$ through $\hat{d}_{t - 1}(v)$ and updating of $p_t(v)$ in iteration $t$ is done exactly as in Algorithm~\ref{alg:localmis}.
We refer to the modified version of Algorithm~\ref{alg:localmis} to as the \emph{recursive MIS algorithm}.
To bound the number of iterations that the recursive MIS algorithm needs to perform, we can use an analysis that is almost exactly the same as for Algorithm~\ref{alg:localmis}. The formal statement is presented below and the proof appears in \Cref{app:thm:sparseiterations}.
The number of iterations $T$ executed by the recursive MIS algorithm is equal to the length of the highest level phase $s_0$.
\begin{theorem}
	Consider the recursive MIS algorithm described above.
	For each node $v$, during $T = c\log \Delta$ iterations for a sufficiently large constant $c$, with probability at least $1 - 1/\Delta^{98}$, either node $v$ or a neighbor of $v$ is added to the MIS. 
	This guarantee holds independent of the randomness outside the $2$-hop neighborhood of $v$.
	Furthermore, let $B$ be the set of nodes remaining after $T$ rounds.
	Each connected component of the graph induced by $B$ has $O(\log_\Delta n \cdot \Delta^4)$ nodes.
	\label{thm:sparseiterations}
\end{theorem}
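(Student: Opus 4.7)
The plan is to reduce the analysis of the recursive MIS algorithm to the analysis of Algorithm~\ref{alg:localmis} carried out for \Cref{thm:beeping}, which in turn is an adaptation of Ghaffari's original ``good iteration'' argument. Recall the structure of that argument: one calls an iteration \emph{good} for $v$ if either (type~I) $p_t(v) = \Omega(1)$ and $d_t(v) = O(1)$, or (type~II) $d_t(v) = \Omega(1)$ with a constant fraction contributed by neighbors $w$ with $d_t(w) = O(1)$. A potential-function argument deterministically shows that in any window of $\Omega(\log \Delta)$ iterations, each node experiences $\Omega(\log \Delta)$ good iterations, and in each such iteration, $v$ or one of its neighbors is removed with at least constant probability, independent of the randomness outside the $2$-hop neighborhood of $v$. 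Amplifying over the $\Omega(\log \Delta)$ good iterations via independence (which, as in \Cref{thm:beeping}, can be enforced through McDiarmid-type arguments on the disjoint randomness slices $\bar{r}_t$) yields a survival probability of $1/\Delta^{98}$ for a large enough constant $c$.

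The main modification I need to handle is the multi-level recursive stalling. Observe two facts. First, stalling is consistent with the $p$-update rule on the relevant regime: at the start of any subphase of length $R$ in which $v$ is stalled, we have $\hat{d}(v) > 2^{2R}$, so by the Chernoff concentration used in \Cref{lemma:MISsparsify}, also $d(v) = \Omega(2^{2R})$. Since $d$ can shrink by at most a factor of $2^R$ across $R$ iterations, we have $d_t(v) \geq 2$ throughout the stalled subphase, so Ghaffari's rule would in any case have halved $p_t(v)$ in every one of those iterations. Hence stalling only removes from $v$ the \emph{opportunity} to be marked, but preserves the potential-function accounting that drives the original proof. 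Second, such opportunities are not useful when $v$ is stalled: with $p_t(v)\cdot d_t(v)\gg 1$, the probability that $v$'s marked self is not blocked by some marked neighbor is tiny, so not attempting to mark costs us essentially nothing. Thus, I can redo the good-iteration counting, showing that the iterations in which $v$ is \emph{not} stalled at any recursion level still contain $\Omega(\log \Delta)$ good iterations over $T$ iterations, for any $v$ that has not been removed. Across the at most $O(\log T) = O(\log \log \Delta)$ levels of the recursion, the total ``stalled'' fraction is controlled because a stall at level-$R$ forces $d(v)$ to be in the dyadic band $[\Omega(2^{2R}), O(2^{2R}\cdot 2^R)]$, and the potential function can only cross each band a constant number of times before either $d(v)$ collapses (shifting $v$ into the type~II good regime) or $p_t(v)$ decays enough to push $v$ into the type~I good regime after the stall ends.

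Once the single-node bound is established, the second part---component sizes of $B$---follows from the standard shattering argument in Ghaffari's original MIS paper. Because the survival event of $v$ depends only on the randomness in the $2$-hop neighborhood of $v$, survival events of nodes at pairwise distance greater than $4$ are mutually independent. A union bound over rooted $4$-separated trees of potential ``bad clusters'' in the remaining graph, combined with the $1/\Delta^{98}$ per-node survival probability, shows that with probability at least $1 - 1/n^{10}$, every connected component of the graph induced by $B$ has size $O(\log_\Delta n \cdot \Delta^4)$. The hardest part of the proof, I expect, will be the bookkeeping in paragraph~2: verifying that the recursive, nested stalling does not deplete the stock of good iterations in the potential argument---specifically, showing that the intervals sacrificed to stalling at different levels do not collectively cover more than a $(1-\Omega(1))$ fraction of any length-$\Omega(\log \Delta)$ window. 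Everything else is a direct port of the proof of \Cref{thm:beeping} with the updated stalling threshold $2^{2R}$ per subphase.
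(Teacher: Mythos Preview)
Your overall plan---reduce to the golden-iteration counting of \Cref{thm:beeping} and then invoke the shattering argument for the component bound---is the right shape, and the second and third paragraphs on the shattering part are fine. But the core step you flag as ``the hardest part'' contains a genuine gap, and you are also making the argument harder than it needs to be.

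The gap is in your first observation. You write that ``$d$ can shrink by at most a factor of $2^R$ across $R$ iterations,'' and use this to conclude that $d_t(v)\geq 2$ throughout any stalled subphase. This is not true: $d_t(v)=\sum_{u\in N(v)} p_t(u)$ can drop by an arbitrary factor in a single iteration because neighbors of $v$ can be \emph{removed} (when a neighbor of theirs joins the MIS). So you cannot conclude that Ghaffari's rule would have halved $p_t(v)$ anyway, and the rest of your potential-accounting comparison collapses. Consequently, your proposed ``dyadic band'' bookkeeping across recursion levels is built on an unjustified premise, and it is also unnecessary.

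The paper's fix is much simpler and avoids any multi-level accounting. It reuses the proof of the golden-iteration lemma (\Cref{lemma:runtime}) verbatim, with a single change in the treatment of the count $h$ of iterations in which $v$ is stalling or $d_t(v)>0.4$. The only case that differs from Algorithm~\ref{alg:localmis} is when $v$ is stalling in some phase $[t,t']$ of length $R$ and $d_i(v)\leq 0.4$ at some iteration $i\in[t,t']$. Here one does \emph{not} claim $d\geq 2$ throughout; instead one observes that stalling began with $\hat d_{t-1}(v)>2^{2R}$, hence $d_{t-1}(v)>2^{2R-2}$ by \Cref{lemma:notoff}, while $d_{t'}(v)\leq 0.4\cdot 2^{R}<d_{t-1}(v)\cdot 2^{-R-1}$. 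Amortized over the $R$ iterations of the phase this is a drop by a factor at most $0.65$ per iteration---exactly the same constant as in the original proof---so the inequality $h\leq 3g_2+4\log\Delta$ and the rest of \Cref{lemma:runtime} go through unchanged, regardless of which recursion level $R$ the stall occurred at. No separate analysis per level and no band-crossing argument is needed.
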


\subsection{Sparsification}
Intuitively, splitting the execution of the algorithm into very short phases leads to simulating iterations on very sparse subgraphs.
In these sparse graphs, in terms of query complexity, it is cheap to simulate the iterations of the short phases.

\paragraph{Definitions.}
Here, we use terminology very similar to \Cref{sec:sparsegraph}.
For iteration $t$, let $\bar{r}_t(v) = (r^1_{t-1}, \ldots, r^k_{t-1}, r^m_{t})$ denote a vector of $k + 1 = \Theta(\log \Delta)$ uniformly chosen random numbers, with $c \log \Delta$-bit precision, from the interval $[0, 1]$.
Now, we can derive the outcome of the random marking for iteration $t$ by checking whether $r^{m}_t(v) < p_t(v)$ and similarly, a node is sampled in the $j$-th repetition if, in $\bar{r}_{t}(v) = (r^1_{t - 1}, \ldots, r^k_{t - 1}, r^m_t)$, we have $r^j_{t-1} < p_{t-1}(v)$.
We slightly adjust the definitions of node types to incorporate the varying lengths of (sub-)phases.

For a phase $[t, t']$ of length $R$, we have the following definitions:
\begin{enumerate}
	\item We say that node $u$ is \emph{relevant} if $r^{j}_{i - 1}(u) < p_{t - 1}(u) \cdot 2^{R}$ for some iteration $i \in [t, t']$ and any index $1 \leq j \leq k$ or if $r^{m}_i(u) < p_{t}(u) \cdot 2^{R}$.
	\item We say that node $u$ is \emph{light}, if $d_{t - 1}(u) < 2^{2R + 1}$.
Otherwise, $u$ is \emph{heavy}.
	\item We say that $u$ is \emph{good} if $\hat{d}_{i - 1}(u) \leq 2^{3R + 2}$ for all $i \in [t, t']$ and otherwise, it is bad.
\end{enumerate}
Notice that if $u$ is not relevant, it will not get marked nor sampled in phase $[t, t']$. Hence, we do not need to include $u$ in our sparse graph. 
For a light node $u$, we have that $d_{i - 1}(u) < 2^{2R + R + 1} =  2^{3R + 1}$ for all iterations $i \in [t, t']$.

\paragraph{Constructing the Sparse Graph $\Gt$.}
We first determine the vertices of $\Gt$. All relevant light nodes that are good are added to $\Gt$.
For a relevant heavy node $u$, we create $d$ virtual copies, where $d$ is the number of relevant light nodes that are good and connected to $u$ in the original graph.
All these copies are added to $\Gt$. We next determine the edges of $\Gt$. 
If two light nodes $u$ and $w$ are connected in the original graph, we add the edge $\{ u, w \}$ to $\Gt$. Each copy of a relevant heavy node $w$ gets an edge to exactly one of the light nodes that $w$ is connected to in the original graph. 
Hence, every heavy node in $\Gt$ has degree one and is connected to a light node. 
Finally, we note that some vertices carry extra information when added to $\Gt$, which is maintained as a label on the vertex. 
In particular, every node $u$ in $\Gt$ is labeled $p_{t - 1}(u)$ and the random bits $\bar{r}(v)$.
Notice that in case of an LCA, the shared randomness is available to all nodes even without an explicit labeling of $\Gt$.

The next two lemmas follow from setting $R \geq 2\log\log \Delta$ in the proof of \Cref{lemma:MISsparsify} and fixing the phase length to $R$ in the proof of \Cref{lemma:localcorrectness}.
\begin{lemma}
	\label{lemma:fineMISsparsify}
	A light node $v$ is bad in phase $[t, t']$ of length $R$ with probability at most $e^{-2^{2R}} \ll 1/\Delta^{100}$.
	Furthermore, the event that a node is bad is independent of the randomness of nodes outside of its $2$-hop neighborhood.
\end{lemma}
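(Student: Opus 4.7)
The plan is to re-run the Chernoff-plus-union-bound argument from \Cref{lemma:MISsparsify} with the new, $R$-dependent definitions in place. The only substantive change is arithmetic: the threshold for being light has moved from $2^{\alpha\sqrt{\log\Delta}/5+1}$ to $2^{2R+1}$, and the good-threshold has moved from $4\cdot 2^{\alpha(3/10)\sqrt{\log\Delta}}$ to $2^{3R+2}$. The shapes of the exponents are exactly what we need so that the same computation goes through verbatim, and only the final specialization to $R \geq 2\log\log\Delta$ is new.

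First I would establish the deterministic degree bound inside the phase. Since $p_{i}(u) \leq 2 p_{i-1}(u)$ for every node $u$ in every iteration (the algorithm at most doubles probabilities), we get $p_{i-1}(u)\leq 2^R p_{t-1}(u)$ for all $i\in[t,t']$, and hence
\[
d_{i-1}(v) \;=\; \sum_{u\in N(v)} p_{i-1}(u) \;\leq\; 2^R\, d_{t-1}(v) \;<\; 2^R\cdot 2^{2R+1} \;=\; 2^{3R+1},
\]
using that $v$ is light. This bound is a worst-case, structural inequality; it does not condition on any randomness outside the $2$-hop neighborhood.

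Next I would apply a Chernoff bound to a single repetition: $\expect[\hat d^{\,j}_{i-1}(v)] = d_{i-1}(v) \leq 2^{3R+1}$, so
\[
\Pr\!\bigl(\hat d^{\,j}_{i-1}(v) > 2^{3R+2}\bigr) \;\leq\; \exp\!\bigl(-2^{3R+1}/3\bigr) \;\leq\; \exp\!\bigl(-2^{3R-1}\bigr).
\]
A union bound over the at most $R$ iterations of the phase and the $k = O(\log\Delta)$ repetitions then bounds the probability of badness of $v$ by $R\cdot O(\log\Delta)\cdot \exp(-2^{3R-1}) \leq \exp(-2^{2R})$ for all sufficiently large $\Delta$. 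Specializing to $R\geq 2\log\log\Delta$ gives $\exp(-2^{2R}) \leq \exp(-(\log\Delta)^4) \ll 1/\Delta^{100}$, as claimed.

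Finally, for the independence claim I would observe that every probability bound above dominates the true probability uniformly over the random choices of nodes at distance more than two from $v$: the deterministic bound $d_{i-1}(v) \leq 2^{3R+1}$ uses only the doubling invariant, and the Chernoff tail is computed with respect to the sampling coins of the neighbors of $v$, which lie in $v$'s $1$-hop neighborhood (and whose $p_{i-1}$ values are determined by the randomness inside $v$'s $2$-hop). No serious obstacle is expected; the only mildly delicate point is making sure the same arithmetic that governed the single fixed phase length in \Cref{lemma:MISsparsify} still closes when $R$ is treated as a variable ranging over $[2\log\log\Delta,\, \Theta(\log\Delta)]$, but the inequality $R\cdot O(\log\Delta) \ll \exp(2^{3R-1} - 2^{2R})$ holds comfortably in this entire range.
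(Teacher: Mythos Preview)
Your proposal is correct and follows exactly the approach the paper intends: the paper itself states that \Cref{lemma:fineMISsparsify} ``follows from setting $R \geq 2\log\log \Delta$ in the proof of \Cref{lemma:MISsparsify},'' and you have faithfully re-run that Chernoff-plus-union-bound argument with the $R$-dependent thresholds, including the same deterministic bound $d_{i-1}(v)<2^{3R+1}$, the same tail estimate $e^{-2^{3R-1}}$, and the same independence observation. There is nothing to add.
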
 

\begin{lemma}
	Consider a phase $[t, t']$ of length $R$. 
	If node $v$ learns its $R$-hop neighborhood in $\Gt$, it can simulate its behavior in iterations in $[t, t']$.
	In particular, node $v$ learns $p_{t'}(v)$ and whether it joined the MIS or not.
	\label{lemma:finecorrectness}
\end{lemma}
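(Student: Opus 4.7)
The plan is to essentially mirror the proof of Lemma~\ref{lemma:localcorrectness}, adapting it to the slightly different parameter regime of the recursive algorithm (where phase length is $R$ and the stalling threshold is $2^{2R}$). I will prove the claim by induction on the iteration index $i \in [t, t']$, showing that at each iteration $v$ can compute $p_i(v)$, whether it is marked, whether any of its relevant neighbors is marked, and (if applicable) whether it joins the MIS, using only information present in the $R$-hop neighborhood of $v$ in $\Gt$.

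For the base case $i = t$, node $v$ reads the labels of its neighbors in $\Gt$ to recover $p_{t-1}(u)$ and the random vector $\bar{r}_t(u)$ for each neighbor $u$ (or, in the LCA case, uses the shared randomness). Combined with its own $p_{t-1}(v)$, this determines which neighbors are sampled in each of the $k$ repetitions, hence $\hat{d}_{t-1}(v)$. If $\hat{d}_{t-1}(v) > 2^{2R}$, node $v$ immediately concludes that it stalls for the entire phase, so $p_{t'}(v) = 2^{-R} p_{t-1}(v)$ and $v$ does not join the MIS. Otherwise, $v$ is non-stalling; by the definition of relevance and the construction of $\Gt$, every neighbor of $v$ that is sampled or marked in iteration $t$ in the original graph is present in $\Gt$ as either a light good node or as a copy of a heavy node adjacent to $v$. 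Thus $\hat{d}_{t-1}(v)$ computed from $\Gt$ equals the true value, $v$ derives $p_t(v)$ correctly, checks $r^m_t(v) < p_t(v)$ to see whether it is marked, and checks the marks of its neighbors in $\Gt$ to decide MIS membership.

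For the inductive step, assume the statement holds through iteration $i - 1$ for some $t < i \leq t'$. By the induction hypothesis, $v$ knows $p_{i-1}(v)$, $p_{i-1}(u)$ for every neighbor $u$ in $\Gt$, and whether each such neighbor has already joined the MIS (or been removed). If $v$ or one of its neighbors has already joined the MIS, we are done. Otherwise, $v$ uses the random bits $\bar{r}_i(u)$ obtained from the labels (or shared randomness) to simulate the $k$ sampling repetitions and compute $\hat{d}_{i-1}(v)$; by the same relevance argument as in the base case, the set of neighbors sampled or marked in iteration $i$ in $G$ is a subset of the neighbors present in $\Gt$, so no information is lost. Node $v$ then updates $p_i(v)$ according to Algorithm~\ref{alg:localmis}, determines via $r^m_i(v)$ whether it marks itself, and checks the corresponding marking bits of its neighbors in $\Gt$ to decide MIS membership.

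The only real subtlety, and thus the main thing to verify carefully, is that the construction of $\Gt$ truly contains every node whose randomness can influence $v$'s behavior in $[t, t']$. This is already guaranteed by two facts established earlier: first, the relevance threshold $p_{t-1}(u) \cdot 2^R$ is a safe overestimate because $p_i(u) \leq 2^R p_{t-1}(u)$ for all $i \in [t, t']$, so any neighbor that ever gets sampled or marked during the phase is relevant and therefore included; second, although heavy nodes appear only with degree one in $\Gt$, this is enough for $v$ to count them when computing $\hat{d}$ and to observe whether they mark themselves, which is all the algorithm needs since any interaction between heavy neighbors occurs outside $v$'s decision. After $R$ rounds of this induction, $v$ has determined $p_{t'}(v)$ and whether it joined the MIS, completing the proof.
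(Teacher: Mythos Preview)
Your proposal is correct and takes essentially the same approach as the paper: the paper's proof of this lemma is simply the one-line remark that it follows from the proof of \Cref{lemma:localcorrectness} after fixing the phase length to $R$, and what you have written is precisely that proof spelled out with the general parameter $R$ and the adjusted stalling threshold $2^{2R}$. Your added paragraph explaining why the relevance threshold $p_{t-1}(u)\cdot 2^{R}$ guarantees that every sampled or marked neighbor of $v$ during $[t,t']$ appears in $\Gt$ is a useful clarification that the paper leaves implicit.
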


\begin{lemma}
	\label{lemma:lowestdegsize}
	Consider a phase of length $R \geq 2 \log \log \Delta$.
	The maximum degree of $\Gt$ is at most $2^{5R}$.
	Furthermore, the number of nodes in the $R$-hop neighborhood of any node node $v \in \Gt$ is bounded from above by $2^{5R^2}$.
\end{lemma}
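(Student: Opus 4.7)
\medskip
\noindent\textbf{Proof plan for \Cref{lemma:lowestdegsize}.} The plan is to mirror the structure of the proof of \Cref{lemma:degsize}, but with the new phase-length parameter $R$ replacing the role of $\alpha\sqrt{\log\Delta}/10$, and taking advantage of the hypothesis $R \geq 2\log\log\Delta$ to absorb the $\log\Delta$ factors coming from the parallel sampling repetitions.

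First, I would dispose of heavy nodes: by construction of $\Gt$, every (copy of a) heavy node has degree exactly $1$, connected to a single light node, so the maximum degree of $\Gt$ is controlled by light good nodes. For a light good node $u$, the defining inequality $\hat{d}_{i-1}(u) \leq 2^{3R+2}$ holds for every iteration $i \in [t,t']$. The neighbors contributed to $\Gt$ by $u$ are exactly the relevant nodes in $N(u)$, i.e.\ those whose random tags make them sampled or marked in at least one of the $R$ iterations across the $k = O(\log\Delta)$ sampling repetitions. Summing the bound $\hat{d}_{i-1}(u) \leq 2^{3R+2}$ over all $k = O(\log\Delta)$ repetitions and all $R$ iterations of the phase yields
\[
\deg_{\Gt}(u) \;\leq\; R \cdot O(\log\Delta) \cdot 2^{3R+2}.
\]
Since $R \geq 2\log\log\Delta$, we have $\log\Delta \leq 2^{R/2}$ and also $R \leq 2^{R/2}$ for $R$ large enough, so $R \cdot O(\log\Delta) \leq 2^{2R}$ (absorbing constants for $\Delta$ large), which gives the desired bound $\deg_{\Gt}(u) \leq 2^{5R}$.

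The second claim is then immediate: iterating the degree bound, the number of nodes reachable within $R$ hops in $\Gt$ is at most $(2^{5R})^R = 2^{5R^2}$. The only step that requires any genuine care is the absorption of the $O(R \log \Delta)$ factor into $2^{2R}$; this is precisely where the hypothesis $R \geq 2\log\log\Delta$ is used (and it is tight up to constants, which is why the earlier, cruder bound in \Cref{lemma:degsize} could get away with a much looser choice of $R$). Everything else is book-keeping that parallels the earlier lemma verbatim.
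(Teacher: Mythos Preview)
Your proposal is correct and follows essentially the same route as the paper's proof: dispose of heavy nodes (degree $1$ by construction), use the ``good'' bound $\hat{d}_{i-1}(u) \le 2^{3R+2}$ for light nodes, sum over the $k = O(\log\Delta)$ repetitions and $R$ iterations, and then absorb the resulting $\poly\log\Delta$ factor into $2^{2R}$ via the hypothesis $R \ge 2\log\log\Delta$; the $R$-hop bound is then the trivial $(2^{5R})^R$. The paper's write-up is slightly terser (it bounds the prefactor by $\log^3\Delta$ and uses $2^{2R} \ge \log^4\Delta$), but the argument is the same.
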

\begin{proof}
	By definition, all (copies of) heavy nodes in $\Gt$ have degree exactly $1$.
	Since we only picked good light nodes $u$, we have that $\hat{d}_{i - 1}(u) \leq 4\cdot 2^{3R}$ for all $i \in [t, t']$.
	Thus, summing over all repetitions of the sampling and all iterations in $[t, t']$, for sufficiently large $\Delta$, the number of sampled and marked neighbors of $u$ is bounded from above by $C \log \Delta \cdot 4 \cdot 2^{3R} \ll 2^{3R} \cdot \log^3 \Delta \leq 2^{5R}$ for any iteration $i$.
	The number of neighbors in the $R$-hop neighborhood of any node $v$ is at most $\left( 2^{5R} \right)^{R} = 2^{5R^2}$.
\end{proof}


A key observation in our LCA algorithm is that the behavior of a node in the two subphases of a phase\footnote{Assume w.l.o.g. for simplicity that phase lengths are multiples of $2$.} $s = [t, t']$ only depends on the graph $\Gs$.
Hence, it is convenient to think that an oracle simulating phase $s$ of length $> 4 \log \log \Delta$ answers queries to the adjacency lists of $H_{t, t'/2}$ and $H_{t'/2 + 1, t'}$.
In a sense, the oracle for phase $s$ creates the graphs $H_{t, t'/2}$ and $H_{t'/2 + 1, t'}$.

\paragraph{Oracle $\orc_{t'}(v, \Gt)$ for a Phase of Length $R$.}
If $R \leq 4 \log \log \Delta$, then $\orc_{t'}(v, \Gt)$ learns the $R$-hop neighborhood of $v$ in $\Gt$ and simulates $R$ iterations of phase $[t, t']$.
Hence, by \Cref{lemma:finecorrectness}, we obtain $p_{t'}(v)$ and the knowledge of whether $v$ joined the MIS or not.
If $R > 4\log \log \Delta$, let $s_1$ and $s_2$ be the subphases of length $R/2$ of phase $[t, t']$.
Then, the oracle $\orc_{t'}(v, \Gt)$ answers adjacency queries to $H_{s_1}$ and $H_{s_2}$.
For a query to $H_{s_1}$, we examine the neighbors of $v$ and since $\Gt$ is labeled with the $p_{t - 1}(u)$ values, we can derive the estimate $\hat{d}_{t-1}(u)$ for all neighbors $u$ of $v$ from the $2$-hop neighborhood of $v$.
If $u$ is good and light in phase $s_1$, then $u \in H_{s_1}$.
For queries to $H_{s_2}$, we query every $2$-hop neighbor $u$ of $v$ in $\Gt$ with the oracle $\orc_{t'/2}(u, H_{s_1})$.
Once we obtained $p_{t'/2}(u)$ for each $2$-hop neighbor $u$, we can derive $\hat{d}_{t'/2}(w)$ for each $1$-hop neighbor $w$ and hence, can decide whether $w \in H_{s_2}$ or not.

\paragraph{Shorthand Notation.}
We denote the number of queries needed to simulate a phase $s = [t, t']$ of length $R$ by $Q(R)$.
In other words, $Q(R)$ denotes the number of queries required by the oracle $\orc_{t'}(v, \Gt)$.
\begin{lemma}
	If $R \leq 4 \log \log \Delta$, then $Q(R) = 2^{O(\log^2 \log \Delta)}$.
	Otherwise, the query complexity $Q(R) = O\left( Q(R/2)^2 \cdot 2^{10R} \right)$.
	\label{lemma:recurse}
\end{lemma}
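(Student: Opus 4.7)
The plan is to analyze the two cases of the recursion separately, relying on \Cref{lemma:lowestdegsize} to bound sizes in the sparsified graphs and on \Cref{lemma:finecorrectness} for the local simulation step. For the base case $R \leq 4\log\log\Delta$, the oracle $\orc_{t'}(v, \Gt)$ works exactly as described: it collects the full $R$-hop neighborhood of $v$ in $\Gt$ via adjacency queries and then simulates the phase locally. \Cref{lemma:lowestdegsize} bounds the size of this neighborhood by $2^{5R^2}$ vertices, and plugging in $R \leq 4\log\log\Delta$ yields $Q(R) \leq 2^{80\log^2\log\Delta} = 2^{O(\log^2\log\Delta)}$, which is the desired bound.

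For the recursive case $R > 4\log\log\Delta$, my plan is to account for the two subphase oracles $\orc_{t'/2}(v, H_{s_1})$ and $\orc_{t'}(v, H_{s_2})$ separately; by the recursive definition, each costs $Q(R/2)$ queries to its own input sparse graph. I will first observe that any adjacency query to $H_{s_1}$ can be answered with a constant number of lookups in $\Gt$, because membership of a neighbor $u$ of the queried vertex in $H_{s_1}$ is determined entirely by $p_{t-1}(u)$ and the shared random bits, both available through the labels carried on $\Gt$. Hence the $s_1$ subphase contributes only $O(Q(R/2))$ queries to $\Gt$ in total.

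The dominant cost comes from answering queries to $H_{s_2}$. Classifying a neighbor $u$ of the queried vertex as an $H_{s_2}$-vertex requires $\hat{d}_{t'/2}(u)$, and hence $p_{t'/2}(w)$ for every neighbor $w$ of $u$ in $\Gt$; following the oracle construction, this triggers a call to $\orc_{t'/2}(w, H_{s_1})$ for every $2$-hop vertex $w$ of the queried node in $\Gt$. By \Cref{lemma:lowestdegsize}, $\Gt$ has maximum degree $2^{5R}$, so the $2$-hop contains at most $(2^{5R})^2 = 2^{10R}$ vertices, each contributing $Q(R/2)$ queries (which, by the $H_{s_1}$-translation step above, are already being charged to $\Gt$ at essentially unit rate). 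A single $H_{s_2}$-query thus costs $O(2^{10R} \cdot Q(R/2))$, and multiplying by the $Q(R/2)$ queries made by the $s_2$ oracle yields $O(Q(R/2)^2 \cdot 2^{10R})$, which dominates the $s_1$ contribution and gives the stated bound. I expect the main obstacle to be the careful bookkeeping across the three graphs $\Gt$, $H_{s_1}$, and $H_{s_2}$, in particular making sure that the $H_{s_1}$-queries arising inside the auxiliary $\orc_{t'/2}$ calls used to resolve $H_{s_2}$-queries are correctly charged back to $\Gt$ at the cheap $O(1)$ rate. Once this accounting is in place, the recursion falls out immediately from the degree estimate of \Cref{lemma:lowestdegsize} and the explicit oracle description, with no further probabilistic argument required.
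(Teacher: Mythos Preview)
Your approach is essentially the paper's: the base case via \Cref{lemma:lowestdegsize}, and the recursive case by separately charging the $s_1$ and $s_2$ subphases and using the $2^{5R}$ degree bound on $\Gt$. The final recursion $Q(R)=O(Q(R/2)^2\cdot 2^{10R})$ is obtained the same way.

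There is one inaccuracy worth flagging. You claim that membership of a neighbor $u$ in $H_{s_1}$ is determined entirely by $p_{t-1}(u)$ and the shared random bits, and hence that an $H_{s_1}$ adjacency query translates to $\Gt$ at $O(1)$ cost. That is not correct: deciding whether $u$ is light (and good) for $s_1$ requires $\hat d_{t-1}(u)$, which in turn needs the $p_{t-1}$ values of all neighbors of $u$ in $\Gt$. The paper makes this explicit, writing that ``to obtain the adjacency list of $v$ in $H_{t,t'}$, we need to query all neighbors of $v$ in $H_{t,2t'}$ once,'' i.e., one $H_{s_1}$ query costs $2^{5R}$ queries to $\Gt$, and then charges only $2^{5R}$ (one hop of) $\orc_{t'/2}$ calls per $H_{s_2}$ query. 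The two $2^{5R}$ factors combine to the stated $2^{10R}$. Your accounting reaches the same exponent, but via an overcount on the $H_{s_2}$ side ($2^{10R}$ two-hop oracle calls) compensated by an undercount on the $H_{s_1}$ translation ($O(1)$ instead of $2^{5R}$). Once you correct the $H_{s_1}$ translation cost, your argument coincides with the paper's.
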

\begin{proof}
	Consider the phase $[t, 2t']$ of length $R$.
	Let us first examine the case where $R \leq 4 \log \log \Delta$.
	By \Cref{lemma:lowestdegsize}, any node $v$ has at most $2^{O( \log^2 \log \Delta )}$ nodes in its $(4 \log \log \Delta)$-hop neighborhood in $\Gt$.
	Hence, $Q(R) = 2^{O(\log^2 \log \Delta)}$.
	
	Consider then the case where  $R > 4 \log \log \Delta$.
	When simulating phase $[t, 2t']$, we can read the $p_{t - 1}(v)$ values from the node labels in $H_{t, 2t'}$.
	To obtain the adjacency list of $v$ in $H_{t, t'}$, we need to query all neighbors of $v$ in $H_{t, 2t'}$ once.
	Thus, by \Cref{lemma:lowestdegsize}, we need at most $Q(R/2) \cdot 2^{5R}$ queries to simulate phase $[t, t']$ and to obtain $p_{t'}(v)$.
	
	For every adjacency query to a node $u$ in phase $[t' + 1, 2t']$, we need to query $\orc_{t'}(w, H_{t, t'})$ for each neighbor $w$ of $u$.
	For each of the $Q(R/2)$ queries of phase $[t' + 1, 2t']$, we need an adjacency query to $\orc_{t'}(w, H_{t, t'})$.
	Hence, the query complexity of simulating phase $[t, 2t']$ for node $v$ is bounded by
	\[
		Q(R) = Q(R/2) \cdot \left( 2^{5R} \cdot \left( 2^{5R} \cdot Q(R/2) \right) \right) = Q(R/2)^2 \cdot 2^{10R} \ . \qedhere
	\]
\end{proof}

\LCAmain*
\begin{proof}
	Let $T = O(\log \Delta)$.
	By \Cref{lemma:recurse}, we can write the query complexity of our simulation in the LCA model as
	\begin{align*}
		Q(T) & = O\left( 2^{10T} \cdot Q(T/2) \cdot Q(T/2) \right) = O\left( 2^{10T} \cdot Q(T/2)^2 \right) \\
				& O\left( \leq 2^{10 T + 10 \cdot 2 \cdot T/2} \cdot Q(T / 4)^{4} \right) = O\left( 2^{10 \cdot 2T} \cdot Q(T / 4)^{4} \right) = O\left( 2^{10 \cdot 3T} \cdot Q(T / 8)^{8} \right) = \ldots \\
				& \leq 2^{10 \left( \sum_{i = 1}^{\log T} T \right) } \cdot Q(4 \log \log \Delta)^{O(T / \log T)} \leq 2^{O(\log \Delta \log \log \Delta) } \cdot \left( 2^{O( \log^2 \log \Delta )} \right)^{O(\log \Delta / \log \log \Delta)} \\
				& = \Delta^{O(\log \log \Delta)} \cdot \Delta^{O(\log \log \Delta)} = \Delta^{O(\log \log \Delta)} \ .
	\end{align*}
	
	By \Cref{lemma:fineMISsparsify} and by \Cref{thm:sparseiterations} have that, with probability at most $1/\Delta^{98}$, a node survives, i.e., is not part of the MIS nor has a neighbor in the MIS after executing $O(\log \Delta)$ iterations.
	From \Cref{thm:sparseiterations} we have that the surviving nodes form connected components of size at most $\Delta^{O(1)} \cdot \log n$.
	Hence, we can complete our LCA for node $v$ by querying all the nodes in the corresponding component, resulting in a total query complexity of $\Delta^{O(\log \log \Delta)} \cdot \Delta^{O(1)} \cdot \log n = \Delta^{O(\log \log \Delta)} \cdot \log n$.
\end{proof}





\paragraph{Acknowledgment:} Over the past year, we have discussed the notion of \emph{locality volume} with several researchers, including Sebastian Brandt, Juho Hirvonen, Fabian Kuhn, Yannic Maus, and Jukka Suomela, and we thank all of them. These discussions were typically in terms of characterizing general trade-offs between locality radius and locality volume for arbitrary locally checkable problems (though, not in the context of the problems discussed here), which we believe is an interesting topic and it deserves to be studied on its own. We hope that connections presented here to MPC and LCA settings add to the motivation.
\bibliographystyle{alpha}
\bibliography{references}

\newcommand{\etalchar}[1]{$^{#1}$}
\begin{thebibliography}{GGK{\etalchar{+}}18}

\bibitem[ABB{\etalchar{+}}17]{assadi2017coresets}
Sepehr Assadi, MohammadHossein Bateni, Aaron Bernstein, Vahab Mirrokni, and
  Cliff Stein.
\newblock Coresets meet edcs: algorithms for matching and vertex cover on
  massive graphs.
\newblock {\em arXiv preprint arXiv:1711.03076}, 2017.

\bibitem[ABI86]{alon1986fast}
Noga Alon, L{\'a}szl{\'o} Babai, and Alon Itai.
\newblock A fast and simple randomized parallel algorithm for the maximal
  independent set problem.
\newblock {\em Journal of algorithms}, 7(4):567--583, 1986.

\bibitem[AG15]{AhnGuha15}
Kook~Jin Ahn and Sudipto Guha.
\newblock Access to data and number of iterations: Dual primal algorithms for
  maximum matching under resource constraints.
\newblock In {\em the Proceedings of the Symposium on Parallelism in Algorithms
  and Architectures (SPAA)}, pages 202--211, 2015.

\bibitem[ANOY14]{Andoni:2014}
Alexandr Andoni, Aleksandar Nikolov, Krzysztof Onak, and Grigory Yaroslavtsev.
\newblock Parallel algorithms for geometric graph problems.
\newblock In {\em Proceedings of the Symposium on Theory of Computing (STOC)},
  pages 574--583, 2014.

\bibitem[ARVX12]{alon2012LCA}
Noga Alon, Ronitt Rubinfeld, Shai Vardi, and Ning Xie.
\newblock Space-efficient local computation algorithms.
\newblock In {\em the Proceedings of ACM-SIAM Symposium on Discrete Algorithms
  (SODA)}, pages 1132--1139, 2012.

\bibitem[Ass17]{assadi2017simple}
Sepehr Assadi.
\newblock Simple round compression for parallel vertex cover.
\newblock {\em arXiv preprint arXiv:1709.04599}, 2017.

\bibitem[ASS{\etalchar{+}}18]{andoni2018parallel}
Alexandr Andoni, Clifford Stein, Zhao Song, Zhengyu Wang, and Peilin Zhong.
\newblock Parallel graph connectivity in log diameter rounds.
\newblock {\em arXiv preprint arXiv:1805.03055}, 2018.

\bibitem[ASW18]{assadi2018massively}
Sepehr Assadi, Xiaorui Sun, and Omri Weinstein.
\newblock Massively parallel algorithms for finding well-connected components
  in sparse graphs.
\newblock {\em arXiv preprint arXiv:1805.02974}, 2018.

\bibitem[BEG{\etalchar{+}}18]{boroujeni2018approximating}
Mahdi Boroujeni, Soheil Ehsani, Mohammad Ghodsi, MohammadTaghi HajiAghayi, and
  Saeed Seddighin.
\newblock Approximating edit distance in truly subquadratic time: quantum and
  mapreduce.
\newblock In {\em the Proceedings of ACM-SIAM Symposium on Discrete Algorithms
  (SODA)}, pages 1170--1189, 2018.

\bibitem[BFU18a]{brandt2018breaking}
Sebastian Brandt, Manuela Fischer, and Jara Uitto.
\newblock Breaking the linear-memory barrier in mpc: Fast mis on trees with
  $n^{\epsilon}$ memory per machine.
\newblock {\em arXiv preprint arXiv:1802.06748}, 2018.

\bibitem[BFU18b]{brandt2018Arb}
Sebastian Brandt, Manuela Fischer, and Jara Uitto.
\newblock Matching and mis for uniformly sparse graphs in {M}{P}{C} with low
  memory.
\newblock {\em manuscript}, 2018.

\bibitem[BKS13]{Beame13}
Paul Beame, Paraschos Koutris, and Dan Suciu.
\newblock Communication steps for parallel query processing.
\newblock In {\em Proceedings of the 32Nd ACM SIGMOD-SIGACT-SIGAI Symposium on
  Principles of Database Systems (PODS)}, pages 273--284, 2013.

\bibitem[BKS14]{Beame14}
Paul Beame, Paraschos Koutris, and Dan Suciu.
\newblock Skew in parallel query processing.
\newblock In {\em Proceedings of the 33rd ACM SIGMOD-SIGACT-SIGART Symposium on
  Principles of Database Systems (PODS)}, pages 212--223, 2014.

\bibitem[CLM{\etalchar{+}}18]{czumaj2017round}
Artur Czumaj, Jakub Lacki, Aleksander Madry, Slobodan Mitrovic, Krzysztof Onak,
  and Piotr Sankowski.
\newblock Round compression for parallel matching algorithms.
\newblock In {\em Proceedings of the Symposium on Theory of Computing (STOC)},
  pages 471--484, 2018.

\bibitem[DG04]{dg04}
Jeffrey Dean and Sanjay Ghemawat.
\newblock {MapReduce}: Simplified data processing on large clusters.
\newblock In {\em Proceedings of the 6th Conference on Symposium on Operating
  Systems Design \& Implementation (OSDI)}, pages 10--10, Berkeley, CA, USA,
  2004. USENIX Association.

\bibitem[EMR14]{even2014deterministic}
Guy Even, Moti Medina, and Dana Ron.
\newblock Deterministic stateless centralized local algorithms for bounded
  degree graphs.
\newblock In {\em the Proceedings of the Annual European Symposium on
  Algorithms (ESA)}, pages 394--405, 2014.

\bibitem[GGK{\etalchar{+}}18]{ghaffari2018improved}
Mohsen Ghaffari, Themis Gouleakis, Christian Konrad, Slobodan Mitrovi{\'c}, and
  Ronitt Rubinfeld.
\newblock Improved massively parallel computation algorithms for mis, matching,
  and vertex cover.
\newblock In {\em the Proceedings of the Symposium on Principles of Distributed
  Computing (PODC)}. arXiv:1802.08237, 2018.

\bibitem[Gha16]{Ghaffari2016}
Mohsen Ghaffari.
\newblock An improved distributed algorithm for maximal independent set.
\newblock In {\em the Proceedings of ACM-SIAM Symposium on Discrete Algorithms
  (SODA)}, pages 270--277, 2016.

\bibitem[Gha17]{Ghaffari2017}
Mohsen Ghaffari.
\newblock Distributed {M}{I}{S} via all-to-all communication.
\newblock In {\em the Proceedings of the Symposium on Principles of Distributed
  Computing (PODC)}, pages 141--149, 2017.

\bibitem[GSZ11]{goodrich2011sorting}
Michael~T. Goodrich, Nodari Sitchinava, and Qin Zhang.
\newblock Sorting, searching, and simulation in the {MapReduce} framework.
\newblock In {\em the Proceedings of the International Symposium on Algorithms
  and Computation (ISAAC)}, pages 374--383. Springer, 2011.

\bibitem[HLL18]{harvey2018greedy}
Nicholas~JA Harvey, Christopher Liaw, and Paul Liu.
\newblock Greedy and local ratio algorithms in the mapreduce model.
\newblock {\em arXiv preprint arXiv:1806.06421}, 2018.

\bibitem[HP15]{hegeman2015lessons}
James~W Hegeman and Sriram~V Pemmaraju.
\newblock Lessons from the congested clique applied to mapreduce.
\newblock {\em Theoretical Computer Science}, 608:268--281, 2015.

\bibitem[IBY{\etalchar{+}}07]{Isard:2007}
Michael Isard, Mihai Budiu, Yuan Yu, Andrew Birrell, and Dennis Fetterly.
\newblock Dryad: Distributed data-parallel programs from sequential building
  blocks.
\newblock {\em SIGOPS Operating Systems Review}, 41(3):59--72, 2007.

\bibitem[II86]{II86}
Amos Israeli and Alon Itai.
\newblock A fast and simple randomized parallel algorithm for maximal matching.
\newblock {\em Information Processing Letters}, 22(2):77--80, 1986.

\bibitem[IMS17]{Im17}
Sungjin Im, Benjamin Moseley, and Xiaorui Sun.
\newblock Efficient massively parallel methods for dynamic programming.
\newblock In {\em Proceedings of the Symposium on Theory of Computing (STOC)},
  pages 798--811, 2017.

\bibitem[KMW16]{KuhnMW16}
Fabian Kuhn, Thomas Moscibroda, and Roger Wattenhofer.
\newblock Local computation: Lower and upper bounds.
\newblock {\em J. {ACM}}, 63(2):17:1--17:44, 2016.

\bibitem[KSV10]{KarloffSV10}
Howard~J. Karloff, Siddharth Suri, and Sergei Vassilvitskii.
\newblock A model of computation for {MapReduce}.
\newblock In {\em the Proceedings of ACM-SIAM Symposium on Discrete Algorithms
  (SODA)}, pages 938--948, 2010.

\bibitem[Lin87]{linial1987LOCAL}
Nathan Linial.
\newblock Distributive graph algorithms-global solutions from local data.
\newblock In {\em the Proceedings of the Symposium on Foundations of Computer
  Science (FOCS)}, pages 331--335, 1987.

\bibitem[LM{\etalchar{+}}17]{levi2017centralized}
Reut Levi, Moti Medina, et~al.
\newblock A (centralized) local guide.
\newblock {\em Bulletin of EATCS}, 2(122), 2017.

\bibitem[LMSV11]{LattanziMSV11}
Silvio Lattanzi, Benjamin Moseley, Siddharth Suri, and Sergei Vassilvitskii.
\newblock Filtering: a method for solving graph problems in {MapReduce}.
\newblock In {\em the Proceedings of the Symposium on Parallelism in Algorithms
  and Architectures (SPAA)}, pages 85--94, 2011.

\bibitem[LPSP15]{lotker2015improved}
Zvi Lotker, Boaz Patt-Shamir, and Seth Pettie.
\newblock Improved distributed approximate matching.
\newblock {\em J. ACM}, 62(5):38, 2015.

\bibitem[LRY17]{levi2017LCA}
Reut Levi, Ronitt Rubinfeld, and Anak Yodpinyanee.
\newblock Local computation algorithms for graphs of non-constant degrees.
\newblock {\em Algorithmica}, 77(4):971--994, 2017.

\bibitem[Lub86]{luby1986simple}
Michael Luby.
\newblock A simple parallel algorithm for the maximal independent set problem.
\newblock {\em SIAM journal on computing}, 15(4):1036--1053, 1986.

\bibitem[LW10]{Lenzen:2010}
Christoph Lenzen and Roger Wattenhofer.
\newblock Brief announcement: Exponential speed-up of local algorithms using
  non-local communication.
\newblock In {\em the Proceedings of the Symposium on Principles of Distributed
  Computing (PODC)}, pages 295--296, 2010.

\bibitem[McG05]{mcgregor2005finding}
Andrew McGregor.
\newblock Finding graph matchings in data streams.
\newblock In {\em Approximation, Randomization and Combinatorial Optimization.
  Algorithms and Techniques}, pages 170--181. Springer, 2005.

\bibitem[OR10]{onak2010maintaining}
Krzysztof Onak and Ronitt Rubinfeld.
\newblock Maintaining a large matching and a small vertex cover.
\newblock In {\em Proceedings of the Symposium on Theory of Computing (STOC)},
  pages 457--464. ACM, 2010.

\bibitem[Pem01]{Pemmaraju2001}
Sriram~V. Pemmaraju.
\newblock Equitable coloring extends chernoff-hoeffding bounds.
\newblock In {\em Approximation, Randomization, and Combinatorial Optimization:
  Algorithms and Techniques}, pages 285--296, 2001.

\bibitem[PR07]{parnas2007approximating}
Michal Parnas and Dana Ron.
\newblock Approximating the minimum vertex cover in sublinear time and a
  connection to distributed algorithms.
\newblock {\em Theoretical Computer Science}, 381(1):183--196, 2007.

\bibitem[RTVX11]{rubinfeld2011fast}
Ronitt Rubinfeld, Gil Tamir, Shai Vardi, and Ning Xie.
\newblock Fast local computation algorithms.
\newblock In {\em Innovations in Computer Science (ICS)}, pages 223--238, 2011.

\bibitem[RV16]{reingold2016new}
Omer Reingold and Shai Vardi.
\newblock New techniques and tighter bounds for local computation algorithms.
\newblock {\em Journal of Computer and System Sciences}, 82(7):1180--1200,
  2016.

\bibitem[RVW16]{Roughgarden16}
Tim Roughgarden, Sergei Vassilvitskii, and Joshua~R. Wang.
\newblock Shuffles and circuits: (on lower bounds for modern parallel
  computation).
\newblock In {\em the Proceedings of the Symposium on Parallelism in Algorithms
  and Architectures (SPAA)}, pages 1--12, 2016.

\bibitem[Whi12]{White:2012}
Tom White.
\newblock {\em Hadoop: The Definitive Guide}.
\newblock O'Reilly Media, Inc., 2012.

\bibitem[ZCF{\etalchar{+}}10]{ZahariaCFSS10}
Matei Zaharia, Mosharaf Chowdhury, Michael~J. Franklin, Scott Shenker, and Ion
  Stoica.
\newblock Spark: Cluster computing with working sets.
\newblock In {\em 2nd {USENIX} Workshop on Hot Topics in Cloud Computing
  (HotCloud)}, 2010.

\end{thebibliography}


 \appendix
\section{Missing Proofs}
\subsection{Proof of \Cref{thm:beeping}}
\label{app:thm:beeping}
We denote the set of nodes that are stalling in iteration $t$ by $\sh$.

Following that, we set
\[
	d_t'(v) = \sum_{u \in N(v), \ d_{t}(u) \leq 20, \ u \not \in \sh} p_t(u) \ .
\]

We define \emph{golden rounds} for node $v$.
\begin{enumerate}
	\item \textbf{Golden round type $1$:} $p_{t}(v) = 1/2, v \not \in \sh$, and $d_{t}(v) \leq 20$
	\item \textbf{Golden round type $2$:} $d_{t}(v) \geq 0.2$ and $d_{t}'(v) \geq 0.1 \cdot d_{t}(v)$.
\end{enumerate}
\begin{observation}
	In each golden round, node $v$ gets removed with a constant probability.
	\label{obs:golden}
\end{observation}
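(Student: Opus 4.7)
The plan is to treat the two types of golden round separately, exploiting in each case the fact that marking decisions are independent across nodes (fresh random bits $r^m_t(u)$) and that stalling nodes do not mark themselves.

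For a type-1 golden round, $v$ itself is not stalling and has $p_t(v)=1/2$. Since $v$ marks itself with probability $1/2$ independently of its neighbors, it suffices to lower bound $\prod_{u\in N(v), u\not\in \sh}(1-p_t(u))$ — the probability that no neighbor marks itself. Using $1-x \geq e^{-2x}$ for $x\in[0,1/2]$ and summing over all neighbors (including stalling ones, which only makes the bound worse), this product is at least $e^{-2 d_t(v)} \geq e^{-40}$ because $d_t(v)\leq 20$. Hence $v$ joins the MIS (and so is removed) with probability at least $\tfrac{1}{2}e^{-40}$, a constant.

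For a type-2 golden round, I will lower bound the probability that some neighbor of $v$ joins the MIS, using the set
\[
  L=\{u\in N(v):\, d_t(u)\leq 20,\ u\not\in \sh\},\qquad \sum_{u\in L}p_t(u)=d_t'(v)\geq 0.1\, d_t(v)\geq 0.02.
\]
For any $u\in L$, independence of markings and the bound $d_t(u)\leq 20$ give $P(u\in\mathrm{MIS}) = p_t(u)\prod_{w\in N(u),\, w\not\in\sh}(1-p_t(w)) \geq p_t(u)e^{-40}$. Setting $X=\#\{u\in L:u\in\mathrm{MIS}\}$, this yields $\mathbb{E}[X]\geq e^{-40}\, d_t'(v)$. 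The second moment can be handled via: $P(u,u'\in\mathrm{MIS})=0$ if $u,u'$ are adjacent (one would block the other) and $P(u,u'\in\mathrm{MIS})\leq p_t(u)p_t(u')$ otherwise, since marking events are independent; summing gives $\mathbb{E}[X^2]\leq \mathbb{E}[X]+\bigl(\sum_{u\in L}p_t(u)\bigr)^2\leq d_t'(v)\bigl(1+d_t'(v)\bigr)$.

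The Paley–Zygmund inequality then gives
\[
  P(X\geq 1)\geq \frac{(\mathbb{E}[X])^2}{\mathbb{E}[X^2]}\geq e^{-80}\cdot\frac{d_t'(v)}{1+d_t'(v)}.
\]
Since $d_t'(v)\geq 0.02$ and $x/(1+x)$ is increasing, this is at least $e^{-80}\cdot(0.02/1.02)$, a positive constant independent of $v$. Any $u\in L$ joining the MIS causes $v$ (its neighbor) to be removed. Combining the two cases gives the claim.

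The only mildly subtle step is the second-moment bound: one must notice that stalling nodes contribute $0$ to $P(u\in\mathrm{MIS})$, which is why restricting to $L$ (good, non-stalling, light neighbors) is important, and that the pairwise bound must split on adjacency so that $\mathbb{E}[X^2]=O(d_t'(v))$ in the regime $d_t'(v)\leq 1$. Everything else is routine: the constants $20$ and $0.1$ in the definitions of the golden-round types are chosen precisely so that $d_t(u)\leq 20$ controls the "blocking" product uniformly, and $d_t'(v)\geq 0.1\, d_t(v)$ guarantees enough mass in $L$ for the second-moment method to produce a constant.
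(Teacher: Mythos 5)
Your proof is correct. A point of reference first: the paper states this observation without proof --- it is imported from the analysis of Ghaffari's MIS algorithm \cite{Ghaffari2016}, where the analogous claim about golden rounds (with slightly different constants) is proved --- so there is no in-paper argument to compare against, and a self-contained verification like yours is exactly what is needed. Both of your cases are sound: for type 1, the marks are independent across nodes, stalling neighbors do not mark, $p_t(u)\le 1/2$ always, and $\sum_{u\in N(v)}p_t(u)=d_t(v)\le 20$, so $v$ is the unique marked node in its neighborhood with probability at least $\tfrac12 e^{-40}$; for type 2, restricting to the non-stalling neighbors $u$ with $d_t(u)\le 20$ (whose $p_t$-mass is $d_t'(v)\ge 0.02$) and applying the second-moment (Paley--Zygmund) bound shows that with constant probability some such neighbor joins the MIS, which removes $v$. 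The more common route in this line of work reaches the same conclusion for type 2 without a second moment, by a direct conditioning argument (with constant probability some low-degree non-stalling neighbor is marked, and conditioned on that, with constant probability none of its at most $d_t(u)\le 20$ worth of neighboring $p_t$-mass is marked); your Paley--Zygmund version is an equally valid, clean alternative and produces the same kind of absolute constant. Two cosmetic remarks: the adjacency split in your second-moment estimate is not actually needed, since $\pr(u,u'\in\mathrm{MIS})\le p_t(u)p_t(u')$ holds for \emph{every} pair by independence of the marks, which already gives $\expect[X^2]\le d_t'(v)\bigl(1+d_t'(v)\bigr)$; and your set $L$ is simply the set of non-stalling neighbors with $d_t(u)\le 20$ --- the word ``good'' has a different technical meaning in this paper (the bound on $\hat{d}$ used in building the sparse graph $H$) and plays no role in this observation.
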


\begin{lemma}
	\label{lemma:notoff}
	Let $C$ be the constant in Algorithm~\ref{alg:localmis}, where $k = 12C \log \Delta$ is the number of repetitions of sampling.
	With probability at least $1 - 1/\Delta^{C}$, in iteration $t$
	\begin{enumerate}
		\itemsep0pt
		\item if $d_t(v) > 20$, we have $\hat{d}_t \geq 2$
		\item $\hat{d}_t(v) \leq 4d_{t}(v)$
		\item if $d_t(v) < 0.4$, we have $\hat{d}_t < 2$
	\end{enumerate}
\end{lemma}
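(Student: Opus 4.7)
The plan is to establish each of the three properties for an individual sample $\hat{d}^{\,j}(v)$ with probability bounded away from $1/2$, then amplify to $1-1/\Delta^C$ via the standard median-of-means trick enabled by $k = 12C \log \Delta$ independent repetitions.

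First I would record the basic mean computation: each $\hat{d}^{\,j}(v)$ is a sum of independent Bernoulli random variables (one per neighbor, with success probability $p_t(u)$), and independence across $j$ follows because the samples use the fresh bits $r^1_{t-1}, \ldots, r^k_{t-1}$. Hence $\mathbb{E}[\hat{d}^{\,j}(v)] = d_t(v)$, and for fixed $t$ the $k$ variables $\hat d^{1}(v), \dots, \hat d^{k}(v)$ are i.i.d.

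Next I would prove the per-sample constant-probability bounds. For item (1), when $\mu := d_t(v) > 20$, a multiplicative Chernoff lower-tail bound with deviation $9/10$ gives $\Pr[\hat d^{\,j}(v) < 2] \leq \Pr[\hat d^{\,j}(v) < \mu/10] \leq \exp(-\Omega(\mu)) \leq \exp(-\Omega(20))$, which is a constant $q_1 < 1/2$. For item (2), multiplicative Chernoff with $\delta = 3$ gives $\Pr[\hat d^{\,j}(v) > 4\mu] \leq \exp(-9\mu/5)$, which is at most a constant $< 1/2$ whenever $\mu \gtrsim 0.4$. For smaller $\mu$ the claim becomes an integer statement: if $\mu < 1/4$ then $\{\hat d^{\,j} > 4\mu\} \subseteq \{\hat d^{\,j} \geq 1\}$ and a union bound gives probability $\leq \mu < 1/4$; for $\mu \in [1/4, 0.4)$ the event is $\{\hat d^{\,j} \geq 2\}$ which is controlled by the second moment $\mathbb{E}[\hat d^{\,j}(\hat d^{\,j}-1)]/2 \leq \mu^2/2 < 0.08$. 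For item (3), when $\mu < 0.4$ the same second-moment bound yields $\Pr[\hat d^{\,j}(v) \geq 2] \leq \mu^2/2 < 0.08$, so an individual sample satisfies $\hat d^{\,j}(v) < 2$ with probability at least $0.92$.

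In each of the three cases the conclusion is the same structural statement: with probability at least $1 - q$ for some absolute constant $q < 1/2$, a single sample has the desired property. The median of $k$ independent samples fails only if at least $k/2$ samples individually fail, so Hoeffding's inequality (or a Chernoff bound on the Bernoulli$(q)$ count) yields failure probability at most $\exp(-2k(1/2 - q)^2)$. Plugging $k = 12 C \log \Delta$ and an explicit $q$ bounded below $1/2$ gives a bound of $1/\Delta^C$ for a suitable large constant choice absorbed into $C$; a final union bound over the three items costs only a constant factor.

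The main obstacle is the small-$\mu$ regime in items (2) and (3), where multiplicative Chernoff is too weak and the natural upper-tail estimate must come from an integer/second-moment argument rather than from a direct Chernoff bound; I would handle this by splitting into $\mu < 1/4$, $\mu \in [1/4, 0.4)$, and $\mu \geq 0.4$ as above. Everything else is a mechanical invocation of median amplification, so the lemma follows without further technicalities.
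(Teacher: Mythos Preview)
Your proof is correct and follows the same overall scheme as the paper: a constant per-sample failure probability combined with median amplification over $k=12C\log\Delta$ repetitions. The one place where you work harder than necessary is item~(2): rather than Chernoff plus a three-way case split on the size of $\mu$, the paper simply applies Markov's inequality, $\Pr[\hat d^{\,j}(v) > 4\mu] \le \mu/(4\mu) = 1/4$, which is uniform in $\mu$ and immediately gives the needed per-sample bound. With item~(2) in hand, item~(3) is then free in the paper's argument, since $d_t(v)<0.4$ and $\hat d_t(v)\le 4 d_t(v)$ together force $\hat d_t(v)<1.6<2$; your separate second-moment computation for item~(3) is correct but unnecessary.
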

\begin{proof}
	Suppose that $d_t(v) > 20$.
	Then, for each $j\in\{1, 2,\dots, k\}$, we have that $\expect[\hat{d}^j] = d_t(v) > 20$.
	By a Chernoff bound, we get that 
	\[
		\pr\left[ \hat{d}^j < 2 \right] = \pr\left[ \hat{d}^j < \left( 1 - \frac{9}{10} \right) \cdot \expect\left[ \hat{d}^j \right] \right] \leq e^{-0.81 \cdot 10} = e^{-8.1} \ll \frac{1}{8} \ .
	\]
	In other words, the expected number of entries in $\hat{d}$ that are larger than $2$ is less than $k/8$.
	Notice that if the median of $\hat{d}$ is less than $2$, then more than half of its entries are smaller than $2$.
	Thus, by applying a Chernoff bound, the probability that $\hat{d}_{t} \leq 2$ is at most $e^{(9 \cdot k/8)/3} < e^{k/2} < 1/\Delta^{C}$. 
	This proves the first claim.
	
	For the second claim, Markov's inequality gives that $\pr\left[ \hat{d}^{\,j} > 4 d_t(v) \right] < 1/4$.
	In other words, the expected number of entries in $\hat{d}$ that are greater than $d_t(v)$ is at most $k/4$.
	By applying a Chernoff bound, we get that the probability that $\hat{d}_{t} \geq 4d_{t}(v)$ is at most $e^{(1/3) \cdot (k/4)} < 1/\Delta^{C}$.
	
	The third claim follows from the second claim because if $d_t(v) < 0.4$, we have $\hat{d}_t(v) \leq 4d_t(v) < 1.6 < 2$ with probability at least $1 - 1/\Delta^{C}$.
	
\end{proof}

\begin{lemma}
	For each node $v$, during $T = c \cdot \log \Delta$ rounds, where $c$ is a sufficiently large constant, with probability at least $1 - 1/ \Delta^{C - 2}$, there are at least $0.05 \cdot T$  golden iterations.
	\label{lemma:runtime}
\end{lemma}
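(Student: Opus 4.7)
The plan is to reduce the claim to a deterministic accounting argument, then adapt Ghaffari's original potential-counting argument~\cite{Ghaffari2016} to account for both the estimation error introduced by $\hat{d}_t$ and the stalling mechanism introduced by the sparsification.

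First, I would apply Lemma 3.7 together with a union bound over all $T = c\log\Delta$ iterations. Since each iteration's guarantee holds with probability at least $1 - 1/\Delta^C$, with probability at least $1 - T/\Delta^C \geq 1 - 1/\Delta^{C-2}$ the following holds simultaneously in every iteration $t$: whenever $d_t(v) > 20$ we have $\hat d_t(v) \geq 2$; whenever $d_t(v) < 0.4$ we have $\hat d_t(v) < 2$; and always $\hat d_t(v) \leq 4 d_t(v)$. Call this event $\mathcal E$; it suffices to show that on $\mathcal E$, node $v$ deterministically accumulates at least $0.05\,T$ golden iterations.

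Next, I would mirror the two-case analysis of Ghaffari. Observe that $p_t(v)$ is always a power of two in $(0, 1/2]$ whose evolution over the $T$ iterations is a bounded one-dimensional walk: on $\mathcal E$, each update either halves $p_t(v)$ (whenever $d_{t-1}(v) > 20$, which by Lemma 3.7 forces $\hat d_{t-1}(v) \geq 2$) or at least doubles it, capped at $1/2$ (whenever $d_{t-1}(v) < 0.4$). Split the $T$ iterations into $A = \{t : p_t(v) = 1/2,\ d_t(v) \leq 20,\ v\notin\sh\}$ and $B = [T]\setminus A$. If $|A| \geq 0.05\,T$, we are done by type-1 golden rounds. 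Otherwise, the bounded-walk property forces the number of decrement rounds and increment rounds within $B$ to nearly balance, so a constant fraction of iterations in $B$ must actually satisfy $d_t(v) > 0.2$; it remains to show that a constant fraction of these are type-2 golden.

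To locate type-2 golden iterations inside $B$, I would invoke Ghaffari's per-neighbor charging: applying the same potential argument to each neighbor $u$ shows that $u$ spends at least a constant fraction of its active iterations with $d_t(u) \leq 20$. Summed over contributing neighbors, this yields $d_t'(v) \geq 0.1\,d_t(v)$ on a constant fraction of the high-$d_t(v)$ iterations, exactly the type-2 condition. The main obstacle is handling stalling cleanly: if $v$ stalls for a phase of length $R$, then $p_t(v)$ halves $R$ times in a row, which behaves like $R$ forced decrements not triggered by a current $\hat d$ test. Following the idea of~\cite{Ghaffari2017}, I would charge each stall against the entry condition $\hat d(v) \geq 2^{2R}$ at the phase boundary, which by Lemma 3.7 implies $d(v) = \Omega(2^{2R})$; this initial high degree is itself enough potential to absorb the $R$ halvings within the same amortized budget as the unstalled analysis. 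A symmetric issue arises in Case B: stalled neighbors do not contribute to $d_t'(v)$, so I must bound the total contribution of stalling neighbors by $O(d_t(v))$ outside an $o(T)$ fraction of iterations, again by amortizing each stall to the high degree that triggered it at a phase boundary.

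Combining Case A (direct) with Case B (per-neighbor charging plus the two stalling amortizations) yields at least $0.05\,T$ golden iterations on $\mathcal E$, which with the union-bound failure probability gives the claimed $1 - 1/\Delta^{C-2}$ bound. The delicate step is the stalling amortization in Case B, since one must ensure that stalls of neighbors and stalls of $v$ itself cannot conspire to obliterate the constant fraction of golden rounds that the unstalled argument would produce.
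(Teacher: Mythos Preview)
Your reduction to a conditional deterministic argument via Lemma~3.7 is the right start, but the event $\mathcal E$ must also include the guarantees of Lemma~3.7 for every \emph{neighbor} $u$ of $v$, not just for $v$ itself: the analysis needs that any neighbor with $d_t(u)>20$ halves its $p_t(u)$, which requires $\hat d_t(u)\ge 2$. This is precisely why the union bound loses a factor of $\Delta\cdot T$ and lands at $1-1/\Delta^{C-2}$; your union bound over only $T$ events for $v$ is not enough for what follows.

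The substantive gap is your Case~B. The claim ``applying the same potential argument to each neighbor $u$ shows that $u$ spends at least a constant fraction of its active iterations with $d_t(u)\le 20$'' is not something the golden-round argument establishes: it only produces golden rounds for $u$, and type-2 golden rounds for $u$ can have $d_t(u)$ arbitrarily large. Even if each neighbor did satisfy $d_t(u)\le 20$ on a constant fraction of iterations, summing such per-neighbor fractions does not give a synchronous lower bound on $d_t'(v)/d_t(v)$ in any particular iteration. The paper (and Ghaffari's original analysis) does \emph{not} do a per-neighbor argument. It tracks the aggregate $d_t(v)$ directly: in any iteration that is not type-2 golden, by definition at least $0.9\,d_t(v)$ is contributed by neighbors that are stalling or have $d_t(u)>20$; on the (neighbor-inclusive) event $\mathcal E$ all such neighbors halve their $p$, so $d_{t+1}(v)\le (0.45+0.2)\,d_t(v)=0.65\,d_t(v)$. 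For stalling iterations of $v$ with $d_t(v)\le 0.4$, your ``charge the stall to the entry degree'' intuition is made precise by observing that $d$ fell from at least $2^{2R-2}$ at phase entry to at most $0.4\cdot 2^{R+1}$ by phase end, which amortizes to the same $0.65$-factor drop per iteration. Together this yields $h\le 3g_2+4\log\Delta$, where $h$ counts iterations with $d_t(v)>0.4$ or $v\in\sh$. The logical flow is then the reverse of yours: assume $g_2<0.05T$, deduce $h<0.2T$, and only then run the bounded-walk argument on $p_t(v)$ to get $g_1\ge 0.4T$. Your direction---assuming $g_1$ small and trying to force many type-2 rounds---lacks a workable mechanism, because no per-neighbor statement of the kind you invoke is available.
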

\begin{proof}
Let us denote the count of golden iterations of type $1$ and $2$ by $g_1$ and $g_2$, respectively.
Let $h$ denote the number of iterations in which $d_t(v) > 0.4$ or $v \in \sh$. Next, we argue that either $g_1$ or $g_2$ must be at least $0.05T$. 

\paragraph{Small $g_2$ implies small $h$.}
Assume $g_2 < 0.05T$.
We first analyze iterations in which $v \in \sh$ and $d_t(v) \leq 0.4$.
Consider the iteration $i \leq t$ in which $v$ started stalling such that $t < i + \alpha \cdot \sqrt{\log \Delta} / 10 = t'$.
By definition of stalling we have $\hat{d}_{i - 1}(v) \geq 2^{\alpha \cdot \sqrt{\log \Delta}/5}$.
By Lemma~\ref{lemma:notoff}, $d_{i - 1}(v) \geq 2^{\alpha \cdot \sqrt{\log \Delta} / 5 - 2}$ with probability at least $1 - 1/\Delta^{C}$.
Thus, in iteration $t'$, we have 
\[
	d_{t'}(v) \leq 0.4 \cdot 2^{\alpha \cdot \sqrt{\log \Delta} / 10 + 1} \leq 0.4 \cdot \left( 2^{\alpha \cdot \sqrt{\log \Delta} / 5 - 2} \cdot 2^{\alpha \cdot (- \sqrt{\log \Delta} / 10) + 3} \right)  < d_{i - 1}(v) \cdot 2^{\alpha \cdot (-\sqrt{\log \Delta} / 10) + 3}   \ .
\]
Hence, amortizing over the $t' - i = \alpha \cdot \sqrt{\log \Delta} / 10$ iterations, we have $d_{j + 1}(v) \leq 0.65 \cdot d_{j}(v)$ for all $i \leq j < t'$.
For the sake of the analysis, we may thus assume that for all iterations $t$ in which $v \in \sh$ and $d_{t}(v) \leq 0.4$, we have $d_{t + 1}(v) \leq d_{t}(v) \cdot 0.65$.

Consider then an iteration $t$ in which $d'_t(v) < 0.1 d_t(v)$.
In this case, $0.9 d_t(v)$ is contributed by neighbors $u$ of $v$ that are either stalling or have $d_t(u) > 20$. We argue that in this case, with probability $1 - 1/\Delta^{98}$, we will have that $d_{t + 1}(v) \leq (0.45 + 0.2) \cdot d_t(v) = 0.65 d_t(v)$.
The reason is as follows:
We can use Lemma~\ref{lemma:notoff} and the union bound over all neighbors of $v$ and all iterations to obtain that, with probability $1 - 1/\Delta^{C - 2}$,  all such ``heavy'' neighbors of $v$ in all $O(\log \Delta)$ iterations correctly detect that they are heavy (or stalling) and thus, their $p_t(u)$ value drops by a factor of $1/2$ in each such iteration.
For the other neighbors that contribute the remaining $0.1 d_t(v)$, the worst case is that they all double their $p_t$ value.
Hence, we get that $d_{t + 1}(v) \leq (0.45 + 0.2) \cdot d_t(v) = 0.65 d_t(v)$.

The above implies that the $d_t(v)$ drops by a factor $0.65$ in every iteration in $h$ that is not a $g_2$ iteration. 
Now in every $g_2$ iteration, $d_t(v)$ can increase by at most a $2$ factor.
This implies that $h \leq 3g_2  + 4\log \Delta$. 
Suppose towards contradiction that $h > 3g_2  + 4\log \Delta$. 
Then we would have 
\[
	d_t(v) < \left(0.65\right)^{h-g_2} \cdot 2^{g_2} \cdot \frac{\Delta}{2} < \left( \frac{1}{2} \right)^{2 \log \Delta} \cdot \frac{\Delta}{2} < 0.4 \ .
\]
That is, we cannot have $d_{t}(v)$ remain above $0.4$ for more than $3g_2  + 4\log \Delta$ iterations. Since we have assumed $g_2 < 0.05 T$, we conclude that $h< 0.2T$.


\paragraph{Small $h$ implies large $g_1$.}
Suppose that $h < 0.2T$. Then, with probability $1 - 1/\Delta^{C - 1}$, in at most $0.2T$ iterations, we have $p_{t}(v)$ decrease by a $1/2$ factor. 
Besides these, in every other iteration, we have $p_t(v) = \min \{2p_t(v), 1/2 \}$. 
Since we always have $p_{t}(v)\leq 1/2$, among these, at most $0.2T$ iterations can be iterations where $p_{t}(v)$ increases by a $2$ factor. 
Hence, there are at least $(1 - 2 \cdot 0.2)T = 0.6T$ iterations in which $p_t(v) = 1/2$. By assumption, the number of rounds in which $p_t(v) = 1/2$ and $d_t(v) > 0.4$ or $v \in \sh$ can be at most $h$. Therefore, we have at least $(0.6 - 0.2)T = 0.4T$ iterations in which $p_t(v) = 1/2$, $d_t(v) \leq 0.4$, and $v \not \in \sh$.
By definition, any such iteration is a golden iteration of type $1$.
Hence, we conclude that $g_1 \geq 0.4T > 0.05T$.
\end{proof}

\begin{proof}[Proof of Theorem~\ref{thm:beeping}]
	Suppose that $\Delta^4 > n^{\alpha}$ and set $C = 60/\alpha$.
	We have by \Cref{lemma:runtime} that the probability of a node remaining after executing the Algorithm~\ref{alg:localmis} for $\Theta((1 / \alpha) \cdot \log \Delta)$ iterations is at most $1/\Delta^{60/\alpha - 2} < 1/\Delta^{50/\alpha} < 1/n^{12}$.
	Now, by using a union bound over all nodes, we get that the set $B$ is empty with probability at least $1/n^{10}$.
		
	Suppose then that $\Delta^4 \leq n^{\alpha}$.
	We have that the expected number of surviving nodes is at most $n / \Delta^{60/\alpha} < n / \Delta^{60}$.
	Since the probability of remaining in the graph is independent of the randomness outside its $2$-hop neighborhood, the event of node remaining can depend on at most $\Delta^5$ other nodes.
	By using a standard variant of a Chernoff bound for bounded dependencies~\cite{Pemmaraju2001}, we get that with probability at least $1 - 1/n^{10}$, the number of nodes in $B$ is at most $n/\Delta^{10}$.
	The bound for the size of the components follows directly from previous work~\cite[Lemma 4.2]{Ghaffari2016}.	
\end{proof}

\subsection{Proof of \Cref{thm:sparseiterations}}
\label{app:thm:sparseiterations}
\begin{proof}[Proof Sketch of \Cref{thm:sparseiterations}]
	There is a small difference in the proof of this theorem as compared to the proof of Theorem~\ref{thm:beeping}.
	Due to the different stalling behavior, we need to slightly adjust the details of the proof of Lemma~\ref{lemma:runtime}.
	Consider the count $h$ that counts the number of iterations $t$ in which node $v$ is either stalling or $d_t(v) > 0.4$.
	For the case where $v$ is stalling in phase $[t, t']$ of length $R$ and $d_i(v) \leq 0.4$ for some iteration $i \in [t, t']$, we need to adjust our argument as follows.
	In the beginning of the phase, we have by Lemma~\ref{lemma:notoff} that $d_{t - 1}(v) \geq \hat{d}_{t - 1}(v) / 4 > 2^{2R - 2}$.
	Since $d_{t}(v)$ at most doubles in every iteration, we have $d_{t'}(v) \leq 0.4 \cdot 2^{R} < 2^{2R - 2} \cdot 2^{-R - 1} < d_{t - 1}(v) \cdot 2^{-R - 1}$.
	Hence, amortizing over the $R$ iterations of the phase, we have that $d_{i}(v) < d_{i - 1}(v) \cdot 0.65$ for each iteration $i$ of the phase.
	The rest of the proof is analogous to the one of Theorem~\ref{thm:beeping}.
\end{proof}

\end{document}